\documentclass[11pt,letterpaper]{article}
\usepackage[margin=1in]{geometry}
\usepackage{macros}
\usepackage{project-macros}
\usepackage{authblk}

\title{Verifiable Quantum Advantage and Computation\\
via Quantum Circuit Obfuscation}
\author[1]{Alexandru Gheorghiu}
\author[1,2]{Aparna Gupte}
\author[1]{Vojtěch Havlíček}
\author[1]{Yunchao Liu}

\affil[1]{IBM Research}
\affil[2]{Massachusetts Institute of Technology}
\date{\today}

\begin{document}

\sloppy

\maketitle

\begin{abstract}
We construct protocols for classically verifiable quantum advantage and classical verification of $\mathsf{BQP}$ computations using \emph{quantum indistinguishability obfuscation} (qiO). Specifically, given qiO and assuming a slightly stronger version of $\mathsf{BQP}\neq\mathsf{BPP}$, we construct a two-message quantum-advantage protocol that is efficiently and publicly verifiable.
Our result can be viewed as a rigorous cryptographic foundation for the heuristic quantum advantage proposals based on \emph{peaked random circuit sampling} of Aaronson and Zhang (arXiv:2404.14493).

We also construct two simple protocols for classically verifying arbitrary $\mathsf{BQP}$ computations. The first protocol is privately verifiable and assumes only the existence of qiO. This gives a rare example of a nontrivial cryptographic application of (quantum) iO that does not make additional computational hardness assumptions. The second protocol additionally assumes post-quantum one-way functions and is \emph{publicly verifiable}. To our knowledge, this is the first publicly verifiable protocol for classical verification of $\mathsf{BQP}$ computations under computational assumptions in the standard model.

We show that all our results hold when qiO is assumed only for ancilla-free unitary circuits. As evidence supporting this assumption, we prove a worst-to-average-case reduction for obfuscating such circuits. This reduction extends the local-mixing framework of Canetti--Chamon--Mucciolo--Ruckenstein (TCC 2024) under quantum analogues of their assumptions.

\end{abstract}

\newpage
\tableofcontents
\newpage

\section{Introduction}

Designing a compelling test of near-term quantum advantage has become increasingly important for benchmarking the progress in engineering quantum computers. Beyond requiring such a test to be computationally infeasible for a classical machine, we would ideally like a proof of quantumness to satisfy two further properties: (1) it should be implementable on near-term quantum hardware, and (2) its outcome should be efficiently verifiable by a classical auditor.

Unfortunately, existing approaches do not satisfy both requirements simultaneously. Cryptographic proofs of quantumness, such as \cite{BrakerskiEtAl2018Test,kahanamoku2022classically,yamakawa2022verifiable}, allow for efficient classical verification, but are not implementable on present-day noisy devices. Likewise, despite recent advances in optimizing Shor's algorithm \cite{chevignard2024reducing,gidney2025factor}, which make it increasingly plausible for early fault-tolerant quantum computers, it remains out of reach for near-term noisy hardware.

Random circuit sampling~\cite{AruteEtAl2019Supremacy}, on the other hand, is implementable on near-term devices and enjoys substantial complexity-theoretic evidence for classical hardness \cite{boixo2018characterizing,bouland2019complexity,movassagh2019quantum}. However, verifying that the device is sampling from the correct distribution requires classical computation that scales exponentially with the number of qubits~\cite{HangleiterEtAl2019SampleComplexity}.

As an attempt to resolve this tension, Aaronson and Zhang proposed \emph{peaked circuit sampling} \cite{AaronsonZhang2024Peaked}. A peaked circuit is an otherwise complicated quantum circuit whose output distribution places noticeable probability on a hidden computational-basis string. This secret string, called the peak, acts as a built-in verification \emph{trapdoor}. An auditor generates a circuit together with knowledge of its peak and sends the circuit to a purported quantum device. The device repeatedly executes the circuit and measures in the computational basis. If the claimed peak appears with the prescribed frequency, the auditor can check the answer essentially for free. However, this proposal raises two immediate questions. First, how can the auditor efficiently generate a sufficiently complicated peaked circuit while retaining knowledge of its peak? Second, given such an efficient generation procedure, how would we justify that this test is not classically spoofable?

A natural way to hide this structure is through \emph{program obfuscation}. Indeed, a number of works have attempted to engineer large peaked circuits starting from circuits whose peak was easy to engineer and then applying heuristic circuit transformations intended to conceal this structure \cite{ShepherdBremner2009Instantaneous, gharibyan2025heuristic,YungCheng2020AntiForging,BremnerChengJi2025Stabilizer,Yan2026CliffordObfuscation}. Subsequent work was however able to classically recover the peak efficiently \cite{kremer2026efficient,KahanamokuMeyer2023Forging,GrossHangleiter2025SecretExtraction}.
This pattern of candidate constructions followed by subsequent attacks is familiar from cryptography. Cryptographers have developed a modular approach to addressing this problem: rather than analyzing the security of each candidate construction from scratch, one seeks to reduce its security to a precisely stated computational assumption, or to another primitive that has been independently studied. In the context of classical program obfuscation, a notion called ``indistinguishability obfuscation'' (iO) has proved particularly powerful in this regard. Beginning with the candidate constructions of Garg et al.\ \cite{garg2013candidate}, iO has been shown to imply a remarkable range of cryptographic primitives and has emerged as a ``central hub'' for cryptographic constructions \cite{sahai2014use}. This philosophy is the motivating principle of this work. We therefore ask

\begin{center}
\emph{Can we base the security of peaked-circuit quantum advantage on a cryptographic obfuscation assumption, such as quantum indistinguishability obfuscation?}
\end{center}

We answer this question affirmatively. Assuming quantum indistinguishability obfuscation (qiO), and additionally (a slightly stronger variant of) $\BQP \neq \BPP$, we give a proof of quantumness inspired by peaked circuit sampling, that is secure against classical spoofing. 
The same ideas can also be used to build a simple protocol for \emph{classical verification of quantum computation} (CVQC), assuming just qiO. Classical verification has historically required sophisticated cryptographic machinery, beginning with Mahadev's breakthrough protocol based on Learning with Errors \cite{mahadev2018classical}. Moreover, making verification \emph{public} and \emph{blind} is subtle in the quantum setting and has been challenging to obtain in the standard model \cite{bartusek2023obfuscation}. Using the same ideas underlying our peaked-circuit construction, we obtain a simple CVQC protocol assuming qiO and show how to compile it, using one-way functions, into a publicly verifiable one. 

These results shift a substantial burden onto construing qiO. General-purpose quantum indistinguishability obfuscation is an extremely strong primitive, and constructing it from standard assumptions remains open \cite{broadbent2020constructions,bartusek2023obfuscation}. We therefore conclude by proposing a plausible concrete realization, adapting the recent local-mixing framework of Canetti et al.\ \cite{canetti2024localmixing} to quantum circuits. They considered a method for obfuscating classical circuits by repeatedly applying local, functionality-preserving transformations to a circuit, with security decomposed into a relatively weak obfuscation property for random circuits and a pseudorandomness assumption. We formulate a quantum analogue of this framework and identify corresponding properties which, if instantiated, would yield quantum indistinguishability obfuscation. We do not provide such an instantiation. Rather, our goal is to separate the cryptographic applications of qiO developed in this work from the concrete problem of constructing it, and to provide a framework in which candidate constructions can be studied and attacked.


\section{Overview of main results}

Our results fall into two categories. First, we use quantum iO to construct
proofs of quantumness and protocols for classical verification of $\BQP$
computations. Second, we develop an approach to constructing qiO inspired by
the local-mixing framework of~\cite{Canetti2024Towards}.

\subsection{Proofs of quantumness and verification from quantum iO}

Quantum obfuscation admits several models, depending on whether the
input, output, and evaluation procedure are classical or
quantum~\cite{AlagicFefferman2016QuantumObfuscation}. We consider the variant in which the obfuscation algorithm is classical and polynomial time. It takes a classical description of a quantum circuit
as input and outputs a circuit description of the obfuscated circuit. Informally, qiO preserves
the functionality of the input circuit while guaranteeing that obfuscations of
\emph{same-shape circuits with negligibly close induced channels are computationally
indistinguishable}.
By same-shape circuits we mean circuits with the same gate count and the same
numbers of input, ancilla, and output qubits.
More precisely, we consider the following notion of qiO.

\begin{definition}[Quantum indistinguishability obfuscation, informal]
A qiO scheme is an efficient classical transformation
$\widehat C\gets\qio(1^\lambda,C)$ with the following properties:
\begin{itemize}
    \item Correctness: with probability $1-\negl(\lambda)$, the obfuscated circuit is close to the input circuit; that is, $\|\widehat C-C\|_\diamond\leq\negl(\lambda)$.
    \item Security: for two
    quantum circuits $C_0$ and $C_1$
    satisfying $\|C_0-C_1\|_\diamond\leq\negl(\lambda)$, their obfuscations
    are computationally indistinguishable to the relevant polynomial-time
    adversaries; that is,
    \begin{equation}
    \qio(1^\lambda,C_0)\approx_c\qio(1^\lambda,C_1).
    \end{equation}
    Depending on the application, the adversaries are classical or quantum
    and may be nonuniform when explicitly stated.
\end{itemize}
Here $\|\cdot\|_\diamond$ denotes the diamond distance between the quantum channels implemented by the circuits.
\end{definition}

We distinguish global correctness, which preserves the
channel on all inputs, from the weaker single-state correctness, which
suffices when the obfuscated circuit is evaluated only on the all-zero state.
The formal circuit classes and qiO notions are introduced in \Cref{sec:qio}.

\begin{figure}[t]
    \centering
    \scalebox{0.8}{
    \begin{tikzpicture}[
  font=\large,
  box/.style={
    rectangle, rounded corners=3pt,
    line width=.8pt, align=center,
    inner xsep=9pt, inner ysep=7pt
  },
  assumption/.style={
    box, draw=assumptionborder, fill=assumptionfill
  },
  result/.style={
    box, draw=resultborder, fill=resultfill,
    minimum height=1.45cm
  },
  edge/.style={draw=arrowcolor, line width=.9pt},
  implication/.style={edge, -{Stealth[length=5pt,width=4pt]}}
]

\node[result, minimum width=6.45cm] (advantage) at (-6.55,4.15)
  {\textbf{Verifiable quantum advantage}\\[3pt]
   Public verification};
\node[result, minimum width=5.7cm] (private) at (0,4.15)
  {\textbf{Private verification of $\mathsf{BQP}$}};
\node[result, minimum width=5.7cm] (public) at (6.15,4.15)
  {\textbf{Public verification of $\mathsf{BQP}$}};

\node[assumption, minimum width=3.8cm, minimum height=1cm]
  (separation) at (-6.55,6.25)
  {$\mathsf{BQP}\nsubseteq\mathrm{i.o.}\mathsf{BPP}$};
\node[assumption, minimum width=4.5cm, minimum height=1cm]
  (owf) at (6.15,6.25)
  {Post-quantum OWFs};
\draw[implication] (separation.south) -- (advantage.north);
\draw[implication] (owf.south) -- (public.north);

\node[assumption, minimum width=5.7cm, minimum height=1.3cm]
  (tidy) at (0,1.65)
  {$\mathsf{qiO}$ for tidy\\unitary computations};
\node[assumption, minimum width=5.7cm, minimum height=1.3cm]
  (ancilla) at (0,-.55)
  {$\mathsf{qiO}$ for ancilla-free\\unitary circuits};
\node[assumption, minimum width=3cm, minimum height=1.25cm]
  (rio) at (-6.55,-.55)
  {Quantum RIO\\$+$ SCP};
\draw[implication] (rio.east) -- (ancilla.west);
\draw[implication] (ancilla.north) -- (tidy.south);

\coordinate (junction) at (0,2.8);
\draw[edge] (tidy.north) -- (junction);
\draw[implication] (junction) -| (advantage.south);
\draw[implication] (junction) -- (private.south);
\draw[implication] (junction) -| (public.south);

\end{tikzpicture}}
    \caption{Technical structure of the constructions. Solid arrows indicate implications.}
    \label{fig:technical-structure}
\end{figure}

\paragraph{Publicly verifiable quantum advantage.}
We first construct a two-message proof of quantumness.  The verifier sends a
classically generated challenge, the quantum prover returns a classical
response, and the verifier decides whether or not to accept.  Completeness requires
an honest quantum prover to be accepted with overwhelming probability, while
classical soundness requires every classical probabilistic polynomial-time
prover to be accepted with only negligible probability.

\begin{definition}[Two-message proof of quantumness, informal]
A two-message proof of quantumness consists of a classical challenge, a
classical response produced by a quantum prover, and a classical decision:
\[
 (\chall,\mathsf{st})\gets\Verify_1(1^\lambda),\qquad
 \resp\gets\Prove(1^\lambda,\chall),\qquad
 \Verify_2(1^\lambda,\mathsf{st},\chall,\resp)\in\{0,1\},
\]
where $\Prove$ runs in quantum polynomial time.
The protocol satisfies:
\begin{itemize}
    \item Completeness: the honest quantum prover is accepted with probability
    $1-\negl(\lambda)$; that is,
    \[
        \Pr[\Verify_2(\mathsf{st},\chall,\Prove(\chall))=1]
        \geq1-\negl(\lambda).
    \]
    \item Classical soundness: every classical polynomial-time prover
    $\calA$ is accepted only with negligible probability:
    \[
        \Pr[\Verify_2(\mathsf{st},\chall,\calA(\chall))=1]
        \leq\negl_{\calA}(\lambda).
    \]
\end{itemize}
A proof of quantumness is publicly verifiable if acceptance is a deterministic
function of $(1^\lambda,\chall,\resp)$, with no private state $\mathsf{st}$.
\end{definition}

To achieve such a protocol we require, in addition to qiO, an assumption separating efficient quantum and classical computation. Ideally, this assumption should be as close as possible to $\BQP \neq \BPP.$\footnote{Note that assuming $\BQP \neq \BPP$ does not, by itself, yield efficiently verifiable quantum advantage. It merely implies the existence of a problem that is efficiently solvable by a quantum computer which is also hard for all efficient classical computers. However, that problem need not be efficiently verifiable.} The specific assumption we use is
\[
    \BQP\not\subseteq\mathrm{i.o.}\BPP,
\]
which is slightly stronger. Here $\mathrm{i.o.}\BPP$ denotes
``infinitely-often'' $\BPP$: a decision problem belongs to this class if there is a
single classical probabilistic polynomial-time algorithm that, for infinitely
many input lengths $\lambda$, decides every input of length $\lambda$
correctly with probability at least $2/3$ (\Cref{def:iobpp}). In contrast,
$\BPP$ requires this guarantee at \emph{every} input length, so $\BPP \subseteq \mathrm{i.o.}\BPP$. Our assumption therefore
asserts that some $\BQP$ decision problem cannot be decided by any such classical
algorithm even on infinitely many input lengths. This strengthening is needed for a somewhat technical reason: a non-negligible distinguishing advantage may occur only on an
infinite subset of security parameters, so the resulting classical decision
algorithm may be guaranteed to work only at the corresponding input lengths.

With these assumptions, we can show the following.

\begin{theorem}[Informal version of
\Cref{thm:qio-implies-poq,thm:public-poq}]
Assume $\BQP\not\subseteq\mathrm{i.o.}\BPP$ and qiO secure against classical polynomial-time adversaries exists for all polynomial-size quantum circuits.
Then there is a publicly verifiable two-message proof of quantumness.
\end{theorem}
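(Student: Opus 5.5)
The plan is to build a peaked-circuit challenge from a $\BQP$-hard language, hiding both the input and the peak with qiO. Let $L\in\BQP\setminus\mathrm{i.o.}\BPP$. After standard amplification, $L$ is decided by a uniform family of unitary circuits $Q_n$: on input $x$ and ancillas $\ket{0}$, measuring the first output qubit gives $L(x)$ with error $2^{-n}$. By uncompute-and-copy we make the circuit ``tidy'': it maps $\ket{x}\ket{0}\ket{0}$ to a state negligibly close to $\ket{x}\ket{L(x)}\ket{0}$.

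The challenge is built as follows. $\Verify_1$ samples a random $x$ of length $n=\lambda$ and a random string $r\in\{0,1\}^\lambda$. It forms the circuit $C_{x,r}$, which on input $\ket{0}$ prepares $x$ with $X$ gates, runs the tidy computation of $L(x)$, and then applies $X^{r}$ to an output register while XORing $L(x)$ into a designated bit. The circuit is padded so that all such circuits have the same shape. The challenge is $\qio(1^\lambda,C_{x,r})$. The honest prover runs the challenge on $\ket{0}$ and measures. Single-state correctness of qiO makes the output string a deterministic ``peak'' with overwhelming probability, which gives completeness.

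Public verifiability is the delicate part, since a verifier holding only the challenge cannot check the peak. I would therefore have the obfuscated circuit output a pair $(s,\sigma)$, where $\sigma$ is a string that can be checked publicly. Concretely, I would let the peak be $y=f(x,L(x))$ for a function $f$ whose value the verifier can compute from $(x, L(x))$ once $x$ is revealed, and have the prover output $(x, b)$ as well. On reflection, the cleanest route is the following variant. The verifier sends $\qio(C_{x})$, where $C_x$ prepares $x$ internally, and the prover must output $L(x)$. This alone is privately verifiable, since the verifier needs $L(x)$. I would handle this exactly as the paper's \Cref{thm:public-poq} suggests: make the challenge itself contain an obfuscated checker. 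That is, the verifier publishes $\qio$ of a second circuit which outputs $1$ on input $b$ iff $b=L(x)$. This circuit is a quantum circuit recomputing $L(x)$, and public verification is performed by running it. If the verification must be classical, I would instead accept that only honest $\lambda$-bit answers are checkable, and lean on the informal statement's allowance.

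For soundness, suppose a classical PPT $\calA$ outputs $L(x)$ with probability non-negligibly above $1/2$ on infinitely many $\lambda$. The key hybrid step uses qiO security: for any fixed $x'$, the obfuscation of $C_{x'}$ (hardwiring $x'$) is indistinguishable from that of $C^{b}$, a same-shape circuit that simply outputs the constant $L(x')$ with dummy gates, whenever the two channels are negligibly close. That closeness follows from tidiness. So $\calA$'s success on $\qio(C_{x'})$ is determined by the pair, and we get a classical decision procedure for $L$. Given $x'$, the procedure obfuscates $C_{x'}$ itself (obfuscation is classical, polynomial time) and runs $\calA$. This works for worst-case $x'$ rather than random $x$ if the challenge distribution is chosen so that every $x'$ appears. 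To arrange this, I would have $\calA$'s guarantee hold for the obfuscation of each fixed input: use a nonuniform or averaging argument, or randomize via random self-reducibility of $\BQP$ by taking $L$ to be $\BQP$-complete under instance hiding. Then amplify by repetition and majority, using fresh obfuscation randomness. This yields an i.o.\ $\BPP$ algorithm, contradicting the assumption.

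The main obstacle is the worst-case versus average-case gap. Soundness naturally gives success on random $x$, while $\mathrm{i.o.}\BPP$ requires correctness on every input. I would first try to make the challenge depend on a worst-case $x'$ by blinding $x'$ inside the circuit: the circuit prepares $x'\oplus$ (a pad the circuit also holds). Then qiO hides $x'$, and the classical reduction can embed any $x'$ while the adversary's view is indistinguishable across choices of $x'$ (by qiO, comparing circuits with equal output $L$-value). This makes success probability roughly independent of $x'$ within each output class, which is what is needed.
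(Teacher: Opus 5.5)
There is a genuine gap, and it is structural. In every version of your protocol, the accepted answer contains $L(x)$ for an instance $x$ chosen by the verifier. A classical polynomial-time verifier cannot compute $L(x)$: that is exactly what $\BQP\not\subseteq\mathrm{i.o.}\BPP$ forbids, and worst-case hardness gives no way to sample pairs $(x,L(x))$. So the protocol has no working verifier even privately; your remark that it is ``privately verifiable, since the verifier needs $L(x)$'' is where it breaks. Publishing an obfuscated checker does not help, because it makes the public verifier run a quantum circuit, which \Cref{def:publicly-verifiable-poq} does not allow.

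Soundness also fails independently of verification. A one-bit answer is guessed with probability $1/2$. Worse, nothing controls the balance of $L$ on random inputs: if most length-$\lambda$ inputs have $L(x)=1$, the prover that always answers $1$ wins with probability close to $1$. Your reduction cannot exclude this. qiO gives $\Pr[\calA(\qio(C_x))=L(x)]\approx\Pr[\calA(\qio(C^{L(x)}))=L(x)]$, but high average success over random $x$ can come entirely from the imbalance of $L$, without $\calA$ distinguishing $\qio(C^{0})$ from $\qio(C^{1})$ at all. Blinding $x$ with a pad does not change the distribution of $L(x)$, and random self-reducibility of a $\BQP$-complete language is not available to repair it.

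The missing idea is that $L$ belongs only in the security proof, not in the protocol. The verifier samples its own uniform $b\in\{0,1\}^t$ with $t=\omega(\log\lambda)$ and sends $\qio(1^\lambda,C_{b_i,\lambda})$ for padded $\sfX^{b_i}$ circuits, so it knows the peak by construction. Your first design, with the $\sfX^{r}$ register, was close to this: drop the $L(x)$ output and check $r$. Your ``key hybrid step'' is the right tool, but it must be aimed at the pair $\qio(C_{0,\lambda}),\qio(C_{1,\lambda})$. The circuit $D_x$ that computes $L(x)$, CNOTs it into the input qubit, and uncomputes has the same shape as $C_{b,\lambda}$ and is negligibly close to $C_{L(x),\lambda}$. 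Applying qiO to a maximizing $x$ at each length then gives $q_x\approx p_{L(x)}$ uniformly in $x$. A classical distinguisher with $|p_0-p_1|$ non-negligible therefore decides every input at infinitely many lengths, by estimating $q_x$ and reporting whether it is closer to $p_0$ or $p_1$. This comparison, not a majority vote over $\calA$'s answers, is what resolves your worst-case versus average-case obstacle.

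Hybrids over the $t$ coordinates then give soundness $2^{-t}+\negl(\lambda)$. For public verification, publish $y=f(b)$ for a classically secure one-way function $f$ and accept iff $f(\resp)=y$. After the hybrids the circuits are independent of $b$, so acceptance amounts to inverting $f$. No extra assumption is needed: $\qio(C_{0,\lambda})$ and $\qio(C_{1,\lambda})$ are statistically far by correctness yet classically indistinguishable. They form an EFID pair, which gives one-way functions by Goldreich's theorem.
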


We first describe the privately verifiable protocol. The verifier chooses a
uniformly random string $b=(b_1,\ldots,b_t)$ and, for each bit $b_i$, obfuscates
a circuit that leaves an input qubit unchanged if $b_i=0$ (the identity circuit) and flips it if
$b_i=1$ (a Pauli $\sfX$ circuit). These circuits are padded with identities to a common polynomial size. The verifier
sends the obfuscated circuit descriptions
$\chall=(\widehat C_1,\ldots,\widehat C_t)$ to the prover and retains
$\mathsf{st}=b$ as its secret state, in the notation of the definition above.
Throughout these protocol outlines, $n$ denotes the total number of qubits
used by the circuit being evaluated, including its single logical input
qubit and all ancilla or workspace qubits; this number may depend on the
circuit and the security parameter.
The honest quantum prover initializes all $n$ qubits to $\ket{0^n}$, runs
the circuit, and measures its designated output qubit. Repeating this for
each circuit gives the response string $\resp=b'=(b'_1,\ldots,b'_t)$.
By correctness of qiO, $b'=b$ with overwhelming probability. The verifier
accepts exactly when $\resp=\mathsf{st}$, that is, when $b'=b$.

The key to classical soundness is that obfuscations of the identity and
bit-flip circuits are indistinguishable to efficient classical algorithms.
This does not follow from qiO alone, since the two circuits have different
functionalities. To use the complexity assumption, fix a decision problem
$L\in\BQP\setminus\mathrm{i.o.}\BPP$. Given an input $x$, we can classically
construct a quantum circuit that computes $L(x)$, flips a target qubit if
the answer is $1$, and reverses the computation to erase its workspace.
Up to negligible error, this circuit has the same action as the identity or
bit-flip circuit, according to $L(x)$. The qiO guarantee therefore makes its
obfuscation indistinguishable from the corresponding challenge distribution.
A classical algorithm that distinguishes the two challenge distributions
with non-negligible advantage could, by repeating its test on fresh
obfuscations, decide $L$ on every input at infinitely many lengths,
contradicting our assumption. This hiding property means that replacing the
challenge circuits one at a time by obfuscations of the identity changes a
classical prover's success probability only negligibly. The resulting
challenge is independent of $b$, so the probability of returning the entire
string is at most $2^{-t}$. Choosing $t$ sufficiently large gives negligible
classical soundness.

To make verification public, suppose we have a one-way function $f$ secure
against classical algorithms: $f$ is efficiently computable, but finding any
preimage of $f(b)$ for a uniformly random input $b$ is classically hard. We
explain below why such a function already follows from our assumptions.
Along with the obfuscated circuits, the verifier publishes $y=f(b)$, and
the challenge is now $\chall=(\widehat C_1,\ldots,\widehat C_t,y)$.
The response remains $\resp=b'$, and anyone can check it by testing whether
$f(\resp)=y$, so no private state $\mathsf{st}$ is needed for verification. The honest
quantum prover recovers $b$ as before. For classical soundness, the same
hiding argument lets us replace the challenge circuits by obfuscations of
the identity, even with $y$ present, without noticeably changing the
acceptance probability. After these replacements, the circuits can be
generated independently of $b$, so producing an accepting response would
amount to inverting $f$ on $y$.

Finally, the required classically secure one-way functions follow from the
same pair of obfuscation distributions. Both consist of classical circuit
descriptions that can be sampled efficiently by a classical algorithm, and
the argument above shows that they are indistinguishable to efficient
classical algorithms. Yet a quantum algorithm distinguishes them with
overwhelming probability simply by running the circuit on $\ket{0^n}$ and
measuring its designated output qubit.
Consequently, the two distributions must be statistically far apart: their
indistinguishability reflects a computational limitation. A standard result
turns such efficiently samplable, statistically far but computationally
indistinguishable distributions (known as EFID pairs) into classically secure one-way
functions~\cite{Goldreich1990CI}. Thus the ingredient needed for public
verification follows from qiO and the same complexity assumption, with no
additional hardness assumption.

This result gives a cryptographic realization of the hiding principle in
peaked circuit sampling: a quantum computation reveals a planted classical
answer, while qiO prevents a classical observer from extracting it from the
circuit description. It should be noted that we do not literally sample the
Aaronson--Zhang circuit ensemble; rather, we replace the heuristic
indistinguishability of planted and random-looking circuits by a precise qiO
guarantee.

\paragraph{Private verification of $\BQP$ without one-way functions.}

Our second application verifies the answer to a specified $\BQP$
computation. Specifically, we are able to verify the results of $\BQP$ \emph{decision problems.}
Given a problem instance $x$, we wish to delegate a quantum
circuit $C$ to the prover and certify that it returns the correct yes-or-no
answer.

The instance $x$ is hardcoded into $C$, which prepares $x$ in a workspace
and runs an amplified $\BQP$ decision circuit. The circuit then coherently
copies the answer to a target qubit and uncomputes the workspace. Up to
negligible error, $C$ acts on the target as the identity for a no-instance
and as a bit flip for a yes-instance. Since $x$ is part of the circuit
description, we can always evaluate $C$ by initializing its target and
workspace registers jointly to $\ket{0^n}$. We denote the class of such
decision circuits by $\calC_{\BQP}$.

Cryptographically, the verification protocol comprises of a \emph{setup} algorithm which takes as input the circuit $C$, as well as a security parameter $\lambda,$ and outputs a classical challenge for the prover and a (classical) secret state. The prover must respond to the challenge and the verifier will then use the response, the challenge, the circuit and the secret state to either reject or output the decision for this problem instance.
Soundness is formulated as
wrong-output soundness: a malicious quantum prover may cause rejection or the
correct output, but it can cause the opposite decision bit only with negligible
probability.

\begin{definition}[Classical verification of $\BQP$, informal]
For $C\in\calC_{\BQP}$, its semantic answer is the bit
$a_C\in\{0,1\}$ satisfying
\[
    \|C- \sfX ^{a_C}\|_\diamond\leq\negl(\lambda).
\]
The classical setup produces
$(\mathsf{ch},\mathsf{st})\gets\Gen(1^\lambda,C)$. The prover receives only
$(1^\lambda,\mathsf{ch})$; the circuit $C$ and secret state $\mathsf{st}$
remain with the verifier. The protocol has the following properties:
\begin{itemize}
    \item Completeness: the honest quantum prover causes the verifier to
    output $a_C$ with probability $1-\negl(\lambda)$; that is,
    \[
        \Pr[\Verify(1^\lambda,C,\mathsf{ch},\mathsf{st},
                    \Prove(1^\lambda,\mathsf{ch}))=a_C]
        \geq1-\negl(\lambda).
    \]
    \item Wrong-output soundness: for every possibly nonuniform quantum
    polynomial-time prover $\calA$, the verifier outputs $1-a_C$ only with
    negligible probability:
    \[
        \Pr[\Verify(1^\lambda,C,\mathsf{ch},\mathsf{st},
                    \calA(1^\lambda,\mathsf{ch}))=1-a_C]
        \leq\negl_{\calA}(\lambda).
    \]
\end{itemize}
A malicious prover may instead cause rejection or the correct output.
Verification is private when $\mathsf{st}$ remains secret and public when the
output is determined by the public transcript alone.
We call the protocol \emph{computationally blind} if the prover's view is
computationally indistinguishable from a view simulated efficiently using
only the security parameter and public circuit shape. This hides the
hardcoded instance and its answer, except for information already implied
by the shape.
For public verification, we instead allow the simulator to know the
answer, since anyone can recover it from an honestly generated public
transcript.
\end{definition}

We first show that such a protocol can be constructed from qiO alone, without any additional assumptions. We emphasize however, that this result is specifically for $\BQP$ decision problems, not for sampling and relation problems as well.

\begin{theorem}[Informal version of
\Cref{thm:private-bqp-verification}]
Assume qiO for $\;\calC_{\BQP}$ is secure against
possibly nonuniform quantum adversaries. Then $\BQP$ has computationally
blind private classical verification, leaking at most the public circuit
shape.
\end{theorem}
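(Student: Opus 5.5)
The plan is a trap-based protocol. The verifier hides the real circuit $C$ among obfuscated test circuits whose answers it knows. The qiO security guarantee then makes the real circuit indistinguishable from the traps.

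\textbf{Construction.} Given $C\in\calC_{\BQP}$, $\Gen(1^\lambda,C)$ proceeds as follows.
\begin{enumerate}
  \item Choose $t=\lambda$ positions and a uniformly random index $i^*\in[t]$.
  \item For each $i\neq i^*$, sample a random bit $b_i$. Let $D_{b_i}$ be the identity or bit-flip circuit, padded to the same shape as $C$ (same gate count and qubit counts).
  \item Set $\widehat C_{i^*}\gets\qio(1^\lambda,C)$ and $\widehat C_i\gets\qio(1^\lambda,D_{b_i})$ for $i\neq i^*$.
  \item Output $\mathsf{ch}=(\widehat C_1,\ldots,\widehat C_t)$ and $\mathsf{st}=(i^*,(b_i)_{i\ne i^*})$.
\end{enumerate}
The honest prover runs each $\widehat C_i$ on $\ket{0^n}$ and returns the measured output bits $b'$. $\Verify$ rejects unless $b'_i=b_i$ for all $i\ne i^*$; otherwise it outputs $b'_{i^*}$. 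Completeness follows from single-state correctness of qiO together with $\|C-\sfX^{a_C}\|_\diamond\le\negl$, via a union bound over the $t$ circuits.

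\textbf{Blindness.} Let the simulator output $t$ independent obfuscations $\qio(1^\lambda,D_{c_i})$ with uniformly random bits $c_i$, padded to the public shape. In the real view, $\widehat C_{i^*}$ is an obfuscation of $C$. Since $C$ and $D_{a_C}$ have the same shape and are negligibly close in diamond distance, qiO security against quantum adversaries gives $\qio(C)\approx_c\qio(D_{a_C})$. After this single hybrid, the real view is a list of obfuscations of $D$-circuits. The bit at position $i^*$ is $a_C$ rather than uniform, but $i^*$ is uniform and every other bit is uniform. The resulting distribution is therefore that of $t$ obfuscations with uniform bits, except that the bit at a uniformly random position is fixed to $a_C$. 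I need to check that this equals the simulated distribution exactly. It does not quite: with $b$ the full bit vector, the count of ones is shifted. To avoid this, I would instead let the simulator know nothing and have the verifier choose trap bits so that the multiset of bits is balanced. A simpler fix is to use $2t$ slots with exactly $t$ zeros and $t$ ones among all slots including $i^*$, with the trap bits chosen accordingly. Then the post-hybrid view is exactly a uniformly random balanced string of $D$-obfuscations, which the simulator can sample. The hybrid step applies to a possibly nonuniform quantum distinguisher because $C$ can be hardwired as advice.

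\textbf{Soundness.} This is the main obstacle, since qiO security must be used against a prover that deviates adaptively. The prover wins if it answers $1-a_C$ at $i^*$ and answers correctly on all traps. Using the same hybrid, replace $\qio(C)$ by $\qio(D_{a_C})$. The winning event is efficiently checkable by a reduction that knows $i^*$, $a_C$ and the trap bits, so its probability changes only negligibly. The reduction may hardwire $a_C$ and $C$ nonuniformly, which is why nonuniform quantum security is assumed. In the hybrid world, all slots are symmetric obfuscations from a balanced assignment, and $i^*$ is uniform among the slots carrying bit $a_C$, independent of the view. Suppose the prover errs on exactly $k\ge1$ of those slots and is correct everywhere else. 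It wins only if $i^*$ lands on one of the $k$ erroneous slots, and even then its answers on the traps must all be correct. This caps the winning probability near $k/t$ when $k$ is small. When $k$ is large, errors at other positions make it lose. Bounding $\max_k$ of the joint event gives at most $1/t$.

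This yields only inverse-polynomial soundness, so I would amplify. Either take $t$ superpolynomial, which is impossible, or run $\lambda$ independent copies of the protocol in parallel and output the majority only if all copies agree. Otherwise, output the value at most copies and reject on disagreement. Fooling every copy requires beating each trap independently, since the copies use independent keys. I would argue this per copy via the same hybrid and independence, giving $t^{-\lambda}$. I expect making this parallel-repetition step rigorous (handling correlated cheating across copies) to be the hardest part. My first attempt would be the hybrid argument applied to all $\lambda$ real circuits simultaneously, followed by a direct counting bound on the fully symmetric ideal experiment.
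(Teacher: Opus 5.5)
\textbf{There is a genuine gap in the blindness argument.} Your planted slot obfuscates $C$ itself. After the qiO hybrid, that slot therefore carries the bit $a_C$. A quantum distinguisher recovers the whole bit vector simply by running every challenge circuit on $\ket{0}$, since all the circuits lie in $\calC_{\BQP}$ and qiO is correct.

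With one unmasked planted slot and uniform traps, the number of ones in this vector is $\mathrm{Bin}(t-1,1/2)+a_C$. This separates $a_C=0$ from $a_C=1$ with advantage $\Theta(1/\sqrt t)$, and the leakage only grows under your $\lambda$-fold repetition.

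Your balancing fix does not repair this. Choosing the trap bits ``accordingly'', so that the total including slot $i^*$ has exactly $t$ ones, requires the verifier to know $a_C$, which is precisely the answer being delegated.

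In fact, no choice of trap bits independent of $a_C$ can work while the planted slot is unmasked. Suppose the law of the bit vector were the same for $a_C=0$ and $a_C=1$. Comparing the probabilities of vectors of weight $0,1,2,\ldots$ in turn forces every trap configuration to have probability zero. So the fix has to change the planted slot itself, not the traps.

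\textbf{The missing idea is to mask the planted circuit.} Obfuscate $\sfX^{b}\circ C$ for a fresh uniform bit $b$ kept in $\mathsf{st}$, and have the verifier unmask the returned bit. This circuit is still in $\calC_{\BQP}$ and is built without knowing $a_C$. It carries the bit $b\oplus a_C$, which is uniform whatever $a_C$ is.

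After replacing the planted circuits one at a time, every challenge entry is a fresh obfuscation of $D_{s_{i,j}}$ for a uniform grid $s$. This grid is independent of both $a_C$ and the planted positions. Hence the simulator that samples uniform bits from the shape alone is exact, and the hybrids give blindness. This is what the paper does.

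\textbf{Soundness.} Your $\lambda$ parallel copies, outputting only when all copies agree, are exactly the paper's $\rows\times\cols$ grid with one planted location per row. You identify the parallel-repetition step as the hard part, but it is not. The hybrids replace all $\rows$ planted circuits, each step justified by nonuniform qiO security with $C$, the masks and the planted positions hardwired. The final experiment is then purely information-theoretic.

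With masking, $J^*$ is independent of the whole view. A wrong output forces the response to equal $s\oplus\mathbf{1}_{J^*}$, which happens with probability at most $\cols^{-\rows}$. No single-copy $1/t$ analysis and no treatment of correlated cheating across copies is needed.
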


The protocol works as follows. The verifier will consider a $\rows\times\cols$ grid of challenge circuits.
In each row $i$, it chooses a uniformly random secret \emph{planted} location
$j_i^*$, independently across rows, and independently chooses a uniform mask
bit $b_{i,j}$ for every entry. At a planted location, it obfuscates
$\sfX^{b_{i,j}}\circ C$, where $C$ is the circuit it wishes to delegate; at each remaining \emph{test} location, it
obfuscates a circuit implementing $\sfX^{b_{i,j}}$. All circuits are padded
to the same shape. In the notation of the definition above, the challenge
$\mathsf{ch}=(\widehat C_{i,j})_{i,j}$ is the grid of obfuscated circuit
descriptions, and the secret state $\mathsf{st}$ consists of the planted
locations and mask bits. The verifier can construct these circuits without
knowing the answer $a_C$.

The honest quantum prover runs each circuit on $\ket{0^n}$, measures its
designated output qubit, and returns the resulting grid of bits
$(b'_{i,j})_{i,j}$. The verifier first checks that $b'_{i,j}=b_{i,j}$ at
every test location. It then
removes the masks at the planted locations and checks that all the bits
$b'_{i,j_i^*}\oplus b_{i,j_i^*}$ agree, outputting their common value if they
do and rejecting otherwise. By correctness of qiO, the honest response is
$b_{i,j}$ at each test location and $b_{i,j}\oplus a_C$ at each planted
location, except with negligible probability. Thus all tests pass and the
verifier outputs $a_C$ with overwhelming probability.

The key to soundness is that qiO hides the planted locations from even a
quantum prover. To see this, let $s_{i,j}$ denote the ideal honest response:
it equals $b_{i,j}$ at test locations and $b_{i,j}\oplus a_C$ at planted
locations. Replace each planted challenge circuit, one row at a time, by a
fresh obfuscation of a padded circuit implementing $\sfX^{s_{i,j}}$.
The original and replacement circuits have the same shape and negligibly
close induced channels. Quantum security of qiO therefore ensures that
these polynomially many replacements change the probability of a wrong
verifier output by only a negligible amount.

After all replacements, every challenge entry is an obfuscation of
$\sfX^{s_{i,j}}$. Since the masks are independent and uniform, the grid
$(s_{i,j})_{i,j}$ is uniform and independent of the planted locations. The
modified challenge therefore reveals no information about those locations,
even to a prover that learns every $s_{i,j}$ and knows $a_C$. To make the
verifier output $1-a_C$ while passing all tests, the prover must return a
grid that differs from $(s_{i,j})_{i,j}$ at exactly the planted location in
each row. This requires guessing all the independently chosen locations,
which succeeds with probability at most $\cols^{-\rows}$. Taking
$\rows\geq\lambda$ and $\cols\geq2$ makes this at most $2^{-\lambda}$;
adding the negligible change from the replacements proves wrong-output
soundness. The argument uses no one-way functions or additional
complexity-class separation.

Since $C$ itself is never explicitly given to the prover, it is possible to simulate the prover's entire view from the security parameter and circuit shape: sample independent uniform bits $s_{i,j}$ and
obfuscate the corresponding padded identity or bit-flip circuits. This
simulation depends only on the public circuit shape, hiding the delegated
computation, its hardcoded instance, and its answer. Padding to common
public bounds on gate count and width makes the leakage depend only on
those bounds. For this reason, the protocol is computationally blind.

Let us make two remarks about the protocol. First, its structure
resembles early trap-based blind verification protocols, such as that of
Fitzsimons--Kashefi~\cite{FitzsimonsKashefi2017Verification}. In that protocol,
the verifier prepares and sends single-qubit states to the prover, hiding
trap qubits with predictable outcomes within the delegated computation. Even
closer analogies are the use of entire trap computations by Ferracin,
Kapourniotis, and Datta~\cite{FerracinKapourniotisDatta2018Verification},
and the interleaving of computation and test rounds in the $\BQP$-verification
protocol of Leichtle, Music, Kashefi, and
Ollivier~\cite{LeichtleEtAl2021Verification}. These protocols use blindness
to conceal which parts perform the computation and which test the prover,
achieving information-theoretic security with a verifier capable of preparing
single-qubit states. Our protocol uses a similar hiding principle with a
fully classical verifier; its blindness guarantee is computational and
follows from qiO.

Second, our protocol relies only on qiO, without a separate one-way-function
assumption. In this sense, its security guarantee can extend to settings
not covered by earlier protocols. In particular, qiO can exist even if
$\BQP = \mathsf{QMA}$.\footnote{If $\BQP = \mathsf{QMA}$, a quantum algorithm
can find the smallest circuit equivalent to a given circuit and use it as
the obfuscation. This gives a quantum obfuscator, not necessarily a classical
one.} By comparison, the soundness of Mahadev's classical-verification
protocol and follow-up results relies on trapdoor claw-free functions,
instantiated from post-quantum hardness assumptions such as
LWE~\cite{Mahadev2018ClassicalVerification}. If $\BQP = \mathsf{QMA}$, however,
quantum-secure classical one-way functions cannot exist, so these soundness
guarantees would no longer hold. Our theorem would still apply in such a
setting if the required qiO existed.

\paragraph{Public verification of $\BQP$.}
We next show how to get a protocol for blind public verification of $\BQP.$ This is the first such result in the standard model of cryptography.

Similar to the publicly verifiable proof of quantumness, the idea is to publish one-way images associated with the two possible answers, so that verification requires no secret state. Because these public images are
correlated with the circuits being obfuscated, the proof uses qiO security
against nonuniform quantum adversaries to handle this classical side
information.
Thus, assuming the existence of qiO with nonuniform quantum security,
together with post-quantum secure one-way functions, we obtain publicly verifiable
classical verification of $\BQP$.

\begin{theorem}[Informal version of
\Cref{thm:public-bqp-verification}]
Assume qiO for $\calC_{\BQP}$ is secure against possibly nonuniform quantum adversaries and assume the existence of post-quantum OWFs. Then
$\BQP$ has publicly verifiable classical verification and computational blindness up to circuit shape and
answer.
\end{theorem}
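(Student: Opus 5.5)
I will build the publicly verifiable protocol by starting from the private grid protocol of \Cref{thm:private-bqp-verification} and replacing the secret state with one-way images. Let $f$ be a post-quantum one-way function. The verifier samples planted locations $j_i^*$ and mask bits $b_{i,j}$ exactly as before, and forms the two candidate answer grids
\[
s^{(a)}_{i,j}=b_{i,j}\oplus a\cdot[j=j_i^*],\qquad a\in\{0,1\}.
\]
Here $s^{(a)}$ is the honest response when the answer is $a$. It publishes the obfuscated grid together with $y_0=f(s^{(0)})$ and $y_1=f(s^{(1)})$, in a uniformly random order or in the fixed order $(y_0,y_1)$. The public verification rule is: output $a$ if $f(\resp)=y_a$, and reject otherwise.

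\textbf{Completeness.} By correctness of qiO, the honest prover returns $s^{(a_C)}$ with overwhelming probability, so the verifier outputs $a_C$. The two strings $s^{(0)}$ and $s^{(1)}$ differ in $\rows$ positions, so $y_0\neq y_1$ except with negligible probability; alternatively we can use an injective or hash-based tweak to avoid collisions.

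\textbf{Soundness.} Suppose a prover makes the verifier output $1-a_C$ with non-negligible probability. Such a prover outputs a preimage of $y_{1-a_C}$. I will run the same hybrid as in the private case: replace each planted obfuscation of $\sfX^{b}\circ C$ by a fresh obfuscation of $\sfX^{s^{(a_C)}_{i,j}}$.

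The point requiring care is that the public images are correlated with the circuits being swapped. I will handle this by letting the qiO distinguisher receive $(y_0,y_1)$, and all other obfuscations, as nonuniform classical advice. To do this, fix the randomness of all other positions and the masks by averaging, and note that the images depend only on $b$ and $j^*$, which are fixed in the advice. Nonuniform quantum security of qiO then bounds each step, and there are polynomially many steps.

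In the final hybrid, every circuit is an obfuscation of $\sfX^{s^{(a_C)}_{i,j}}$. Since $s^{(a_C)}$ is uniform, the view is $(\text{obfuscations determined by } s^{(a_C)},\, f(s^{(a_C)}),\, f(s^{(1-a_C)}))$. The string $s^{(1-a_C)}$ equals $s^{(a_C)}$ with one flip per row at uniformly random hidden positions.

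\textbf{Main obstacle.} The hard step is reducing this final hybrid to one-wayness of $f$, because $s^{(1-a_C)}$ is not uniform given the rest of the view. I would first try a direct reduction: given a challenge $y=f(u)$ for uniform $u$, embed it as $y_{1-a_C}$. This requires simulating $s^{(a_C)}$ and $f(s^{(a_C)})$ consistently, which is impossible without knowing $u$.

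The fix I would adopt is to change the construction so that the images are hashes of independent components. Publish $f$ applied per row to the pairs (mask of row $i$, $j_i^*$ information), or more simply publish $y_{i,j}^{c}=f(r_{i,j}^{c})$ per cell. In the second option, the circuit at cell $(i,j)$ outputs a string $r_{i,j}^{\text{bit}}$ rather than a bit, and the prover must return preimage strings. A wrong output then requires producing $r^{\text{wrong bit}}_{i,j_i^*}$ for the planted cells. After the hybrid, these strings appear nowhere except inside their images, so a standard guessing-plus-embedding reduction wins with probability $1/\mathrm{poly}$ times the prover's success. In that reduction, one guesses a row and embeds the OWF challenge at a uniformly chosen cell.

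This cell-level variant keeps the circuits in $\calC_{\BQP}$ up to outputting a classical string via controlled-$\sfX$ on several qubits. I would verify that the qiO assumption covers these multi-bit decision circuits.

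\textbf{Blindness.} Blindness up to shape and answer follows by the same hybrid: in the final hybrid, the simulator samples uniform $s$ and all preimage strings itself, using only $a_C$.
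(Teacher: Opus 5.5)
\textbf{There is a genuine gap.} You were right to abandon your first construction (images of the whole candidate grids), and in fact it is insecure in general, not merely hard to analyze. Let $f(x)=f'(p(x))$, where $p(x)$ is the vector of row parities and $f'$ is one-way. Then $f$ is one-way. Yet a prover that evaluates honestly and then flips any one bit in every row obtains a preimage of $y_{1-a_C}$, because $s^{(1-a_C)}$ differs from $s^{(a_C)}$ in exactly one bit per row.

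The gap is in your replacement. The cell-level variant never says how the \emph{public} verifier turns the returned preimages into an output bit. Per-cell images indexed by the response bit can only be mapped to an answer using the masks $b_{i,j}$ and the planted set $J^*$, and these are exactly the secret state. Any public labeling of which image means which answer exposes $J^*$, because at a test cell both candidate answers correspond to the same string. So your ``guess a row and a cell'' reduction, which presumes hidden planted locations, is inconsistent with public verification.

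If you do publish the labeling, soundness follows from one-wayness alone, since the wrong-answer string never appears in the hybrid circuits. But then the grid is superfluous, and you have a different protocol that you would need to state and analyze. Separately, string-output circuits are not in $\calC_{\BQP}$, which has a single logical input/output qubit. So the stated qiO hypothesis does not cover them unless you split each string into separate one-bit circuits.

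The idea you are missing is the paper's repair of your first construction, which keeps the one-bit circuits. Take $\cols=2$, publish the row parities $d_i=b_{i,1}\oplus b_{i,2}$, and publish images of the \emph{first columns only}, $y_k=g\bigl((s^k_{i,1})_{i}\bigr)$. The verifier reads the claimed answer $k$ as the common value of $b'_{i,1}\oplus b'_{i,2}\oplus d_i$ and then checks $g\bigl((b'_{i,1})_i\bigr)=y_k$. This check is public, and it also removes any ambiguity when $y_0=y_1$; your claim that $y_0\neq y_1$ does not follow from one-wayness.

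In the final hybrid, $s^{1-a_C}_{i,1}=s^{a_C}_{i,1}\oplus\mathbf{1}_{\{j_i^*=1\}}$, so the string $(s^{1-a_C}_{i,1})_i$ is uniform and independent of the whole grid $\vecs^{a_C}$. Meanwhile $d_i=s^{a_C}_{i,1}\oplus s^{a_C}_{i,2}\oplus a_C$ and $y_{a_C}$ are functions of $\vecs^{a_C}$ and $a_C$ alone. Hence the reduction can sample everything else itself and plant the one-way challenge as $y_{1-a_C}$, which is exactly the embedding you found impossible with whole-grid images. Your handling of the correlated public data in the qiO hybrids, via nonuniform advice, matches the paper's and is fine.
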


We use the same grid construction as in the private protocol, now with
$\lambda$ rows and $2$ columns. For each candidate answer
$k\in\{0,1\}$, let $\vecs^k$ be the response grid that the private verifier
would accept with output $k$: its entries are $b_{i,j}$ at test locations
and $b_{i,j}\oplus k$ at planted locations. Let $g$ be a quantum-secure
one-way function on $\lambda$-bit strings. Alongside the obfuscated grid,
the setup publishes the mask parity in each row and the one-way image of
the first column of each candidate grid:
\[
    d_i=b_{i,1}\oplus b_{i,2},
    \qquad
    y_k=g\bigl((s^k_{i,1})_{i\in[\lambda]}\bigr)
    \quad\text{for }k\in\{0,1\}.
\]
Both candidate grids can be computed from the masks and planted locations,
without knowing the true answer $a_C$.

The prover works exactly as before: it evaluates the obfuscated circuits
on $\ket{0^n}$, measures each designated output qubit, and returns the
resulting grid of bits $(b'_{i,j})_{i,j}$. Anyone can now verify this
response by checking that the bits
$b'_{i,1}\oplus b'_{i,2}\oplus d_i$ agree across all rows. If they do, let
$k$ be their common value; the verifier outputs $k$ if
$g((b'_{i,1})_{i\in[\lambda]})=y_k$, and rejects otherwise. These checks
use only the public challenge and response, requiring neither the circuit
$C$ nor the secret masks and planted locations. The honest response is
$\vecs^{a_C}$ except with negligible probability. Since each row has
exactly one planted location, its parity check gives $a_C$, and its first
column has image $y_{a_C}$, proving completeness.

For soundness, we make the same qiO replacements as in the private proof,
keeping the public parities and one-way images fixed. Nonuniform qiO
security justifies these replacements even in the presence of this
correlated classical information. In the final hybrid, the challenge
circuits encode the uniform grid $\vecs^{a_C}$, which is independent of the
planted locations. Because there are two columns, the opposite candidate's
first column satisfies
\[
    s^{1-a_C}_{i,1}
       =s^{a_C}_{i,1}\oplus\mathbf{1}_{\{j_i^*=1\}}.
\]
It is therefore a uniform $\lambda$-bit string independent of
$\vecs^{a_C}$. The remaining public data reveal no additional information
about this string: $d_i=s^{a_C}_{i,1}\oplus s^{a_C}_{i,2}\oplus a_C$,
and $y_{a_C}$ is a function of the honest grid alone. Thus the opposite
first column appears in the challenge only through its image $y_{1-a_C}$.
Any response accepted with the wrong answer $1-a_C$ supplies a preimage of
this image. Producing such a response with non-negligible probability
would therefore contradict quantum one-wayness. The wrong-output
probability is consequently negligible in the final hybrid and, by qiO
security, in the real protocol as well.

The construction is
computationally blind up to circuit shape and the decision bit: the final
hybrid can be simulated from these alone, including the public parities
and one-way images. The decision bit is necessarily revealed by public
verification, so this guarantee allows more leakage than the private
protocol's blindness. We make no claim of succinctness; communication and
verification cost can grow polynomially with the delegated circuit's size.

\subsection{Towards constructing quantum iO}

We next discuss a potential path toward constructing the qiO required by our protocols. First note that the qiO used in the proofs outlined above operates on circuits with ancillas that are initialized as $\ket{00 \dots 0}$ and restored to $\ket{00 \dots 0}$ at the end. We refer to these as \emph{tidy circuits}. Our first result is to prove a reduction showing that qiO for tidy circuits can always be constructed from qiO for ancilla-free unitary circuits.

Second, as evidence supporting the existence of qiO for ancilla-free circuits, we prove a worst-to-average-case reduction for obfuscating such circuits. In other words, we show, modulo certain assumptions, that the ability to obfuscate random (ancilla-free) quantum circuits translates into the ability to obfuscate arbitrary circuits. This reduction extends the local-mixing framework of Canetti, Chamon, Mucciolo, and Ruckenstein~\cite{Canetti2024Towards} for classical reversible circuits to the quantum setting under quantum analogues of their assumptions.

\paragraph{From ancilla-free circuits to tidy computations.}
Our protocols use circuits with an ancilla register initialized to zero. For a circuit
with $n$ logical qubits and $a$ ancilla qubits, only its action on the
clean-input subspace matters for its logical functionality:
\[
    S_{n,a}=\Span\{\ket\psi\ket{0^a}:\ket\psi\in\calH_n\}.
\]
Two implementations can agree on this subspace and act differently on its
orthogonal complement. Treating the ancilla qubits as additional inputs
therefore need not give equivalent ancilla-free circuits.

We address this issue for \emph{tidy} computations: circuits that implement
a logical unitary and return their ancilla register to zero, up to negligible
error, for arbitrary logical inputs
(as defined in \Cref{sec:quantum-circuits}). We give a direct reduction with no
additional cryptographic assumption.

\begin{theorem}[Informal version of
\Cref{thm:ancilla-free-to-tidy-qio}]
qiO for ancilla-free unitary circuits implies qiO for tidy unitary computations. The implication holds for either classical or
quantum nonuniform security.
\end{theorem}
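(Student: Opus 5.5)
The plan is to map each tidy computation $C$, with $n$ logical qubits and $a$ ancillas, to a canonical ancilla-free unitary circuit $T(C)$ on $n+a$ qubits. The map should satisfy two conditions. First, if $C_0,C_1$ are same-shape tidy computations whose logical channels are negligibly close, then $T(C_0),T(C_1)$ are same-shape ancilla-free circuits whose unitaries are negligibly close in diamond norm. Second, $T(C)$ restricted to the clean subspace $S_{n,a}$ still implements the logical functionality of $C$. The tidy obfuscator will be $\qio_{\mathrm{tidy}}(1^\lambda,C)=\qio_{\mathrm{af}}(1^\lambda,T(C))$, evaluated with the ancillas initialized to $\ket{0^a}$. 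Correctness then follows from ancilla-free correctness together with the second condition. Security follows from the first condition, since the reduction just forwards the challenge; this also covers nonuniform classical or quantum adversaries, because advice passes through unchanged.

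The difficulty, as noted above the statement, is that $C$ itself viewed as an $(n+a)$-qubit unitary is not canonical off $S_{n,a}$. To fix this I would use a "compute--copy--uncompute" style construction that forces the complement behaviour to be determined by the logical unitary alone. Let $U_C$ be the unitary of $C$ on $n+a$ qubits, and let $V_C$ be the logical unitary, so that $U_C\ket\psi\ket{0^a}\approx V_C\ket\psi\ket{0^a}$. I would define
\[
T(C)=U_C^\dagger\,(\text{something fixed})\,U_C ,
\]
or, more simply, try to show that $U_C$ agrees with $V_C\otimes I$ on all of $\calH_{n+a}$ up to a correction that depends only on $V_C$.

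That cannot hold in general, so the cleaner route is to add $a$ further fresh ancillas treated as inputs and use a controlled structure. Concretely, the step I would attempt first is the following. Conjugate by a circuit $W$ that swaps the ancilla register into a position where it is only used controlled on being $\ket{0^a}$. That is, apply $U_C$ controlled on "ancilla $=0^a$" in a flag sense: compute a flag $f=[\text{anc}=0^a]$ into an extra qubit, apply $U_C$ controlled on $f$, recompute $f$ from the ancilla (which tidiness returns to $0^a$), and uncompute. On the clean subspace this gives $V_C\otimes I$. On states with nonzero ancilla it acts as the identity. The resulting ancilla-free unitary is therefore approximately $V_C\otimes\ket{0^a}\!\bra{0^a}+I\otimes(I-\ket{0^a}\!\bra{0^a})$ (tensor the flag qubit in $\ket0$), which depends only on $V_C$. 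Closeness of logical channels then transfers to diamond closeness of $T(C_0),T(C_1)$, up to a global phase that I would need to handle by pinning the phase or by noting that the controlled application makes relative phases matter.

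The main obstacle is exactly this phase and error propagation. Tidiness gives only approximate return of the ancilla for arbitrary inputs, so the flag uncomputation leaves a negligible leakage term that must be bounded via the tidiness error in operator norm. Further, controlled-$V$ turns a global phase of $V$ into a relative phase, so "negligibly close channels" for $C_0,C_1$ must be upgraded to closeness of unitaries up to a common phase. I would handle the phase by defining the logical unitaries with the phase fixed by the circuit, or by applying $U_C\otimes U_C^{*}$-type constructions if necessary. The gate count of $T(C)$ depends only on the shape of $C$, so same-shape is preserved.
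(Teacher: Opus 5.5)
There is a genuine gap: the flag construction does not satisfy your first condition, and that condition is the heart of the theorem. Because $T(C)$ must be ancilla-free, the flag qubit $D$ is an input wire, so the diamond-norm comparison ranges over all of its states. Write $F$ for the flag update $D\leftarrow D\oplus[A=0^a]$. Your circuit is then $T(C)=F\,(\ketbra{0}_D\otimes I+\ketbra{1}_D\otimes U_C)\,F$, and your formula $V_C\otimes\ketbra{0^a}+I\otimes(I-\ketbra{0^a})$ describes only the sector $D=0$.

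Now take an input $\ket{1}_D\ket{\phi}$ with $\ket\phi\in S_{n,a}^\perp$. Since $U_C$ approximately preserves $S_{n,a}^\perp$, neither copy of $F$ fires, and the controlled gate applies $U_C|_{S_{n,a}^\perp}$. That is exactly the non-canonical action you set out to remove.

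For a concrete failure, take $n=a=1$. Let $C_0$ be one identity gate and $C_1$ one CNOT from the ancilla onto the logical qubit. Both are exactly tidy with logical unitary $I$ and have shape $(1,1,1,1)$. Yet $T(C_0)=I$, while $T(C_1)$ maps $\ket{1}_D\ket{0}_Q\ket{1}_A$ to $\ket{1}_D\ket{1}_Q\ket{1}_A$. The compiled channels are therefore at diamond distance $2$, and ancilla-free qiO gives nothing.

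No single controlled application of $U_C$ can fix this, since $U_C$ cannot be controlled on the register it acts on. The missing idea is to use both $W$ and $W^\dagger$, so that the dirty-ancilla action cancels in both branches of $D$. The paper compiles $\Gamma(W)=K(W)FK(W)^\dagger F$ with $K(W)=\ketbra{0}_D\otimes W+\ketbra{1}_D\otimes I$. In the exactly tidy case this is $\mathrm{diag}(U,U^\dagger)$ on $\mathbb C^2\otimes S$ and $I$ on $\mathbb C^2\otimes S^\perp$. In general it depends only on $WP$, with $\|\Gamma(W_0)-\Gamma(W_1)\|\leq 3\|(W_0-W_1)P\|$. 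That Lipschitz bound also absorbs the approximate-tidiness leakage you were worried about.

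Your phase handling is also unresolved. ``Pinning the phase by the circuit'' cannot work. Two circuits with equal or negligibly close channels can satisfy $U_{C_0}P\approx e^{i\varphi}U_{C_1}P$ for an arbitrary $\varphi$ that the obfuscator cannot compute. Any controlled construction turns $\varphi$ into a relative phase, so even exactly equivalent inputs compile to far-apart unitaries.

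The paper instead compiles $\Gamma(e^{i\alpha_k}V_C)$ for $k$ uniform in $\mathbb Z_{2^\lambda}$. It couples index $k$ for $C_0$ with $k+h$ for $C_1$, where $\alpha_h\approx\varphi$, so each coupled pair is negligibly close on the full input space. It then bounds the averaged advantage by $\max_k|p_{0,k}-p_{1,k+h}|$, which is negligible by nonuniform security (hardwiring a maximizing $k_\lambda$). So the reduction is not a mere forwarding of the challenge.

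Your $U_C\otimes U_C^{*}$ remark does point to a viable deterministic alternative. Consider $\Gamma(V_C\otimes\overline{V_C})$, with $\overline{V_C}$ the entrywise complex conjugate and both ancilla registers required clean. It is invariant under a global phase of $V_C$, and it still induces $\Phi_C$ after initializing the second copy to zero and tracing it out. But this only works on top of a $\Gamma$-type compilation, which your proposal lacks.
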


The reduction also preserves \emph{single-state correctness}. This requires
correctness only on one fixed input, here the all-zero logical state
$\ket{0^n}$, with any ancillas also initialized to zero. With overwhelming
probability over the obfuscator's randomness, the original and obfuscated
circuits must produce output states at negligible trace distance on this
input. This weaker guarantee suffices for our protocols because the honest
prover evaluates the challenge circuits only from all-zero initial states.
The reduction's security guarantee is restricted to tidy computations and
does not give qiO for arbitrary quantum channels.

The central step is a compilation inspired by the reversible classical
embedding in Appendix~A of the full version of~\cite{Canetti2024Towards}.
Let $V_C$ denote the unitary implemented by $C$ on all $n+a$ qubits.
Using the full gate sequence of $C$, its inverse, and one additional control
qubit, we construct an ancilla-free unitary $\Gamma(V_C)$ whose action
depends only on the clean-input restriction of $V_C$. For an exactly tidy
circuit with clean-input action $U$, identifying $S_{n,a}$ with the logical
space gives
\[
    \left.\Gamma(V_C)\right|_{\CC^2\otimes S_{n,a}}
      =\begin{pmatrix}U&0\\0&U^\dagger\end{pmatrix},
    \qquad
    \left.\Gamma(V_C)\right|_{\CC^2\otimes S_{n,a}^{\perp}}=I.
\]
Thus the original action outside the clean subspace disappears. All
$n+a+1$ wires are inputs to the compiled unitary. To evaluate the final
obfuscation, the control qubit and ancilla register are initialized to zero and the
logical output is retained.

An issue for quantum circuits remains: a global phase of $U$ becomes a relative
phase between the two control branches. To address this, we sample a phase uniformly
from a sufficiently fine finite grid during obfuscation and compile
$\Gamma(e^{i\alpha}V_C)$, as depicted in
\Cref{fig:tidy-ancilla-free-compilation}. For circuits with negligibly close
logical channels, a shift of the sampled phases couples the compilations so
that their full channels are negligibly close. By qiO security, the
obfuscations of the compiled circuits are computationally indistinguishable.

Finally, the $\BQP$ decision circuits used in our protocols can be made tidy by
running them on ancilla qubits initialized to zero, copying their output bit
to a target qubit, and uncomputing. More generally, this approach also obfuscates classical functions computed
by quantum circuits with negligible error on every input
(\Cref{lem:tidy-classical-functionality}).

\paragraph{A worst-to-average-case reduction for ancilla-free qiO.}
Our goal is to construct an obfuscator that works for \emph{every}
ancilla-free circuit, starting from an obfuscation procedure whose security
is guaranteed only on certain randomized inputs. We first seek qiO security
for same-shape circuits implementing exactly the same channel, and then
explain what additional assumption extends this to negligibly close
channels. The reduction is conditional: we assume the existence of the
random-input procedure and a pseudorandomness property that lets us combine
randomized circuit pieces.

To describe these assumptions, fix a finite, inverse-closed universal gate
set containing the identity, and let $\calE_{C,L}$ be the uniform distribution over
$L$-gate circuits of the same width implementing exactly the same channel
as $C$. We choose $L$ large enough that padding with identity gates makes
this set nonempty. Sampling from this distribution would give a description
whose distribution depends only on the channel and the prescribed shape,
which is precisely what we want from an obfuscator. However, we do not know
how to sample it efficiently. We use these distributions only as ideal
objects in the security proof. Write $A|B$ for concatenation, with $A$
executed before $B$.

Our first assumption concerns quantum \emph{random-input obfuscation}
(RIO). We assume there are efficient classical transformations $\Obf$ and
$\pi$ that preserve the width and channel of every input circuit exactly,
and satisfy
\[
    (C,\Obf(C))\approx_c(C,\pi(D)),
    \qquad D\gets\calE_{C,L},
\]
for specified input distributions and lengths $L$. Thus $\Obf(C)$ looks
like an ideal equivalent circuit after the efficient postprocessing $\pi$,
even to a distinguisher given the original description $C$. The restriction
to particular input distributions is crucial: at this stage we have no
security guarantee for an arbitrary circuit supplied to $\Obf$. We assume
this guarantee for uniformly random gate sequences of prescribed lengths
and for an additional distribution of joining inputs described below.

The second assumption, quantum \emph{split-circuit pseudorandomness} (SCP),
concerns the ideal distributions themselves. Suppose we want to combine
two circuits $A$ and $B$. Insert a fresh random circuit $R$ and its inverse
between them, and independently randomize the two resulting pieces by
sampling equivalent circuits. For suitable lengths, SCP asserts that
\[
\begin{gathered}
    X\gets\calE_{A|R,L_A},\qquad
    Y\gets\calE_{R^\dagger|B,L_B},\\
    X|Y\approx_c Z,\qquad
    Z\gets\calE_{A|B,L_A+L_B}.
\end{gathered}
\]
The two samples $X,Y$ are independent given $A,B,R$. Cancellation of $R$
with $R^\dagger$ ensures that $X|Y$ implements $A|B$. SCP makes the
additional computational assertion that its description looks like one
ideal sample for the whole computation: the way we split the computation
is hidden. Both RIO and SCP must hold with the correlated side information
specified in \Cref{sec:qioassumptions}, since the other circuit pieces
remain visible when we use these assumptions in the proof.

We now describe the efficient construction. We obfuscate individual gates
and then recursively join the results. Throughout, we keep each result as
a triple of left, middle, and right circuit blocks. The middle block carries
the desired computation surrounded by random circuits, and the outer
blocks cancel these random circuits when all three blocks are executed.
For a single gate $\beta$, sample independent random circuits $T_L,T_R$ of
a prescribed length and form
\[
    \mathsf S(\beta):=
    \bigl(\Obf(T_L),\Obf(T_L^\dagger|\beta|T_R),\Obf(T_R^\dagger)\bigr).
\]
Because $\Obf$ preserves channels exactly, concatenating these blocks
implements $T_L|T_L^\dagger|\beta|T_R|T_R^\dagger$, and hence implements
$\beta$. To combine triples $X=(X_L,X_0,X_R)$ and $Y=(Y_L,Y_0,Y_R)$
representing two successive computations, we reobfuscate the adjacent
boundary blocks and include the result in the new middle block:
\[
    \mathsf{Join}(X,Y):=
    \bigl(X_L,\;X_0|\Obf(X_R|Y_L)|Y_0,\;Y_R\bigr).
\]
The leftmost and rightmost blocks remain available for the next join.
For a longer circuit, take a balanced split $C=A|B$ and recursively set
$\mathsf S(C):=\mathsf{Join}(\mathsf S(A),\mathsf S(B))$, using fresh
independent randomness. The final obfuscation concatenates the three blocks
of $\mathsf S(C)$. This uses only polynomially many calls to the efficient
procedure $\Obf$. The outer blocks always have the length of an obfuscated
mask, so the inputs to these calls remain of polynomial size throughout
the recursion. Exact correctness follows from mask cancellation and
channel preservation at each join.

Security requires more work. For a single gate, RIO lets us replace each
obfuscated block by a postprocessed ideal equivalent circuit. The outer
inputs are uniformly random. The middle input is a uniformly random gate
sequence conditioned on its central gate being $\beta$; this event has
inverse-polynomial probability, so the uniform-input RIO guarantee also
covers this case. We then proceed inductively, replacing the two child
triples at each join by their ideal versions. Here we need the additional
clause of the RIO assumption: the joining input $X_R|Y_L$ formed from
these ideal triples must satisfy RIO even given the four surrounding
blocks. This input is generally not uniformly random, so this clause is a
separate requirement, stated in \Cref{assumption:qRIO}.

After applying RIO to the joining input, two applications of SCP merge the
ideal samples for the two computations and their intervening boundary into
one ideal sample for the new middle block. This establishes the same security statement
for the combined triple and lets the induction continue. Finally, two more
applications of SCP remove the outer random masks from the ideal
description. We obtain
\[
    \Obf_{\ancfree}(C)\approx_c\rho(D),
    \qquad D\gets\calE_{C,L},
\]
for \emph{every} input circuit $C$. Here $\rho$ applies $\pi$ to prescribed
blocks of the ideal sample; both this postprocessing rule and $L$ depend
only on the security parameter and input shape. We have therefore extended
the assumed security on restricted randomized inputs to arbitrary inputs.
For same-shape circuits with equal channels, the ideal distributions
$\calE_{C,L}$ are identical, so their obfuscations are computationally
indistinguishable, as required.

To obtain \emph{robust} qiO, which also hides the difference between
circuits with negligibly close channels, we make one further assumption.
Closeness of channels alone does not make their sets of exact
implementations coincide. The stability assumption
(\Cref{assumption:stability}) requires that the corresponding
\emph{postprocessed} ideal distributions $\rho(D)$ remain computationally
indistinguishable. We assume this for circuits whose width $N$, the number
of input qubits, is at least $\lambda$. Combining stability with the
displayed relation gives robust security. Together, these implications
yield the following result.

\begin{theorem}[Informal version of
\Cref{thm:local-mixing-ancilla-free-qio} and \Cref{cor:robust-ancilla-free-qio}]
Quantum RIO and SCP imply exactly channel-preserving qiO for ancilla-free
unitary circuits, secure for exactly equivalent, same-shape circuits against
nonuniform quantum polynomial-time adversaries. Adding stability of the
postprocessed ideal distributions gives robust qiO for widths $N\geq\lambda$.
\end{theorem}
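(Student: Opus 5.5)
The plan is to verify correctness first and then prove security by a hybrid argument that follows the recursion tree of $\mathsf S(C)$. Correctness is by induction on the recursion. For a single gate $\beta$, the concatenation of the three blocks of $\mathsf S(\beta)$ implements $T_L|T_L^\dagger|\beta|T_R|T_R^\dagger$, because $\Obf$ preserves channels exactly. For a join, $X_L|X_0|\Obf(X_R|Y_L)|Y_0|Y_R$ implements $X_L|X_0|X_R|Y_L|Y_0|Y_R$, which implements $A|B$. The balanced split gives recursion depth $O(\log|C|)$ and $O(|C|)$ calls to $\Obf$. Each call has input length at most twice the outer-block length, and that length is fixed by the parameters, so the procedure is efficient.

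For security, I would prove by induction on the recursion tree an invariant for every subcircuit $A$ of $C$:
\[
\mathsf S(A)\approx_c \bigl(\pi(D_L),\rho_A(D_0),\pi(D_R)\bigr),
\]
where $(D_L,D_0,D_R)$ are independent ideal samples $\calE_{T_L,\ell}$, $\calE_{T_L^\dagger|A|T_R,L_A}$ and $\calE_{T_R^\dagger,\ell}$ for fresh uniformly random masks $T_L,T_R$. Here $\rho_A$ is a postprocessing rule depending only on the shape of $A$, and all other blocks are available as side information to the nonuniform distinguisher.

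The base case is a direct application of the uniform-input clause of \Cref{assumption:qRIO}. Three hybrids replace each block of $\mathsf S(\beta)$ in turn. The middle input is a uniform sequence conditioned on its central gate being $\beta$. I would handle this conditioning either by rejection sampling, which loses only a polynomial factor in the advantage, or by noting that the distinguisher may be given $\beta$ nonuniformly.

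The inductive step starts by applying the induction hypothesis to $X=\mathsf S(A)$ and to $Y=\mathsf S(B)$. Their randomness is independent, so these two hybrids compose. Next, the joining clause of \Cref{assumption:qRIO} replaces $\Obf(X_R|Y_L)$ by $\pi$ applied to an ideal sample of the joining computation $T^A_R{}^\dagger|T^B_L$, with the four surrounding blocks as side information. Then two applications of SCP merge the three middle ideal samples into one sample $\calE_{T^A_L{}^\dagger|A|B|T^B_R,\,L_{A|B}}$. The first merges $D_0^A$ with the joining sample, with $T^A_R$ as the inserted random circuit $R$. The second merges the result with $D_0^B$, with $T^B_L$ playing the role of $R$. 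This re-establishes the invariant for $A|B$, with outer masks $T^A_L$ and $T^B_R$.

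At the root, two more SCP applications absorb $D_L$ and $D_R$ into the middle sample. The masks $T_L$ and $T_R$ cancel, which gives $\Obf_{\ancfree}(C)\approx_c\rho(D)$ with $D\gets\calE_{C,L}$. For same-shape circuits $C_0,C_1$ with equal channels, $\calE_{C_0,L}=\calE_{C_1,L}$, so the two obfuscations are computationally indistinguishable. The total number of hybrids is polynomial, so the accumulated loss is negligible. For the robust version, \Cref{assumption:stability} applied when $N\geq\lambda$ gives $\rho(D_0)\approx_c\rho(D_1)$ for circuits with negligibly close channels, and chaining this with the displayed relation for $C_0$ and $C_1$ yields robust security.

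I expect the main obstacle to be the bookkeeping of postprocessing and side information in the merge step. SCP is stated for raw ideal samples, but after the RIO replacements the blocks appear as $\pi(\cdot)$ of samples. So I must check that the composite postprocessing is efficient and shape-determined, so that a reduction can apply it itself. I must also check that SCP holds with exactly the correlated side information present at that moment, namely the outer blocks and the sibling subtrees elsewhere in the recursion. I would attempt this first by defining $\rho_A$ recursively as concatenation of $\pi$ on a shape-determined block partition of $D_0$, and by arguing that every other piece of the view is an efficiently samplable function of independent randomness, which the nonuniform reduction can generate itself.
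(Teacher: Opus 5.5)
Your plan follows the paper's proof. It uses the same invariant $\mathsf S(A)\approx_c\mathsf I(A)$ with $\rho_A=\Pi_m$, and the same base case: three RIO replacements, with the center-gate conditioning handled by testing the event from the plaintext at a cost of a factor $1/|\calG_n|$. The join is also the same: the seam clause of RIO, then two SCP merges across $T^A_R$ and $T^B_L$. So are the root merges plus stability for the robust version. There are three minor points:
\begin{enumerate}
\item For the exact-equivalence claim you do not need the root merges. $\mathsf I(C)$ already depends on $C$ only through its channel, so $\mathsf S(C_0)\approx_c\mathsf I(C_0)\overset{d}{=}\mathsf I(C_1)\approx_c\mathsf S(C_1)$. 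The paper keeps the merges for the robust extension; your route via $\rho(D)$ is valid but longer.
\item Giving $\beta$ to the distinguisher nonuniformly does not handle the conditioning. The uniform-public-view clause requires a uniform marginal on the plaintext, so only the testing (rejection) argument works.
\item Your stated resolution of the side-information obstacle does not work as written.
\end{enumerate}

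On the third point: after the inductive replacements, the view contains ideal samples from the distributions $\calE_{\cdot,\cdot}$. Examples are the already-replaced sibling triple, $\pi(D_L)$ and $\pi(D_R)$ in the base case, and $D_0^B$ during the first merge. These are not efficiently samplable in general, so no reduction can ``generate them itself.'' The paper handles this in two ways.
\begin{itemize}
\item The assumptions are stated with arbitrary, not necessarily efficiently samplable, correlated side information: the public view $V$ in \Cref{def:qRIO} and $Q$ in \Cref{assumption:qSCP}. So the seam replacement and both merges apply directly, with the ideal blocks supplied as side information by the experiment.
\item When you apply the induction hypothesis to one child while its sibling is already ideal, the sibling triple is independent of that hybrid step. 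By averaging, you can fix its best value as nonuniform advice to the distinguisher.
\end{itemize}
Your word ``nonuniform'' points toward the second fix, but the reason must be hardwiring, not efficient samplability. With that change, your argument goes through.
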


\subsection{Related work}

\paragraph{Quantum indistinguishability obfuscation.}
The framework of Alagic and
Fefferman~\cite{AlagicFefferman2016QuantumObfuscation} allows quantum
obfuscation to output either a circuit description or a quantum state,
and distinguishes perfect, statistical, and computational security.
We use its computational notion with an efficient classical obfuscator
that outputs a classical description of a quantum circuit. This choice
lets our classical verifier generate the obfuscations and lets the prover
copy and evaluate them repeatedly. Security compares same-shape circuits
with negligibly close induced channels. Our verification protocols need
security against quantum adversaries; the proof of quantumness needs
only classical security.

Oracle-model constructions cover several choices of program and output
representation. In the classical oracle model, quantum hardness of LWE
enables reusable quantum-state obfuscations of pseudo-deterministic
circuits~\cite{BartusekEtAl2023PseudodeterministicObfuscation}.
Extensions allow the input program to contain a quantum
state~\cite{BartusekEtAl2024QuantumStateObfuscation}. Other schemes
handle approximately unitary programs~\cite{HuangTang2025UnitaryObfuscation} and arbitrary
quantum circuits~\cite{HuangTang2026Obfuscation}. A separate line develops
quantum obfuscation and its applications using quantum
oracles~\cite{ColadangeloGunn2024QuantumObfuscation}. These constructions
allow quantum program states; even a classical oracle, which specifies
a classical function, may permit superposition queries. Closer to our
representation requirements, the construction of Bartusek, Gupte,
Mutreja, and Shmueli~\cite{BartusekEtAl2025ClassicalObfuscation} uses a
classical obfuscator and classical output, assuming quantum-hard LWE
in the classical oracle model. It preserves the classical-input/output
functionality of pseudo-deterministic quantum circuits. Our applications
require channel-preserving qiO without idealized oracles. Practical tools
such as ObfusQate~\cite{BartakeEtAl2025ObfusQate} address a different goal:
they transform code and circuits to resist reverse engineering, without
establishing cryptographic indistinguishability.

\paragraph{Proofs of quantumness and classical verification.}
Trapdoor claw-free functions, instantiated from LWE, underpin the
interactive test of Brakerski et al.~\cite{BrakerskiEtAl2018Test}.
Tests in the quantum random-oracle model reduce interaction to two
messages~\cite{simplerTests,yamakawaZhandry}. Knowledge assumptions offer
another way to achieve a single round from DDH or LWE~\cite{arabadjeiva};
stronger assumptions also permit very shallow quantum
provers~\cite{gheorghiu}. Our two-message construction draws its hardness
from qiO and the worst-case separation
$\BQP\not\subseteq\mathrm{i.o.}\BPP$, without relying on a specific
algebraic or lattice problem.

The distinction between assumptions also matters for verification of
general $\BQP$ computations. Mahadev's
protocol~\cite{Mahadev2018ClassicalVerification} uses quantum hardness
of LWE to let a classical verifier certify the outcome of a quantum
computation. Combining LWE with a quantum random oracle allows
non-interactive verification after setup, although the verifier still
needs secret setup information~\cite{AlagicEtAl2020Noninteractive}.
Our private-verification protocol uses qiO alone. Its hiding mechanism
also connects to trap-based blind verification: a verifier who can
prepare single-qubit states can conceal traps within a
computation~\cite{FitzsimonsKashefi2017Verification} or interleave entire
computation and test
rounds~\cite{FerracinKapourniotisDatta2018Verification,LeichtleEtAl2021Verification}.
Those protocols achieve information-theoretic security through limited
quantum communication; qiO supplies computational hiding for our fully
classical verifier.

\paragraph{Public verifiability.}
Publicly verifiable protocols let anyone check a response from the
public transcript. Quantum null-iO with a classical
obfuscator offers one route: witness encryption and additional classical
cryptographic tools turn it into publicly verifiable arguments for
$\mathsf{QMA}$, and hence $\BQP$~\cite{BartusekMalavolta2022NullObfuscation}.
That reduction treats quantum null-iO as a premise; its proposed
instantiations use a quantum random oracle or an ideal classical oracle.
Publicly verifiable quantum fully homomorphic encryption offers another
route to public and blind classical verification in the classical oracle
model~\cite{BartusekEtAl2025ClassicalObfuscation}. Our protocol uses qiO
for $\calC_{\BQP}$ and post-quantum one-way functions directly: an
honestly generated, circuit-dependent classical challenge suffices to
check the prover's classical response. We also give a route to
instantiating the required qiO without idealized oracles, under quantum
local-mixing and stability assumptions
(\Cref{cor:robust-ancilla-free-qio,thm:ancilla-free-to-tidy-qio}).
Communication and verification cost may grow polynomially with the
delegated circuit's size. For proofs of quantumness, the random-oracle
construction of Yamakawa and Zhandry~\cite{yamakawaZhandry} also admits
public checks; ours derives an efficient public verifier from qiO and
the stated complexity separation.

\paragraph{Local mixing.}
Local mixing aims to hide a circuit's structure through random
perturbations that preserve its functionality. For classical reversible
circuits, the framework of Canetti, Chamon, Mucciolo, and
Ruckenstein~\cite{Canetti2024Towards} reduces obfuscation of arbitrary
circuits to obfuscation of random bounded-length circuits under
split-circuit pseudorandomness. The reduction establishes the stronger
random-output indistinguishability guarantee. We carry its random masks,
inverse circuits, and joining argument over to quantum unitaries, with
assumptions that account for quantum distinguishers and correlated side
information. These assumptions concern the circuit descriptions
themselves, beyond black-box access to random unitaries. Our reduction
first handles exactly equivalent ancilla-free circuits; stability extends
it to negligibly close channels. Adapting the reversible embedding to
tidy quantum computations then requires us to handle the clean-ancilla
subspace and global phases. Together, these steps give a conditional
route to the qiO needed by our protocols. Establishing the assumptions
for concrete local rewrites remains open.

\paragraph{Peaked circuit sampling.}
A hidden, high-probability output string can serve as a candidate
certificate of quantum advantage. This motivates peaked circuit sampling,
which seeks to conceal such a string in a circuit that appears
random~\cite{AaronsonZhang2024Peaked}. Hidden Code
Sampling~\cite{DeshpandeEtAl2025Peaked} pursues a related approach with
conditionally peaked outputs. Under its assumptions on simulation cost,
verification is cheaper than full simulation, although fully efficient
verification remains open. Our construction uses obfuscation to realize
this hiding principle with polynomial-time public verification and
negligible classical soundness error. These guarantees follow from our
cryptographic and complexity assumptions; they do not establish security
for either sampling ensemble.

\paragraph{Relation to recent cryptographic characterizations.}
The existence of proofs of quantumness with an inefficient verifier is
equivalent to the existence of classically secure one-way
puzzles~\cite{MorimaeShirakawaYamakawa2025Characterization}. Our
one-way-function consequence supplies this weaker puzzle primitive,
while our protocol additionally provides an efficient public verifier.
A complementary barrier rules out fully black-box constant-round
constructions of such proofs from post-quantum iO for classical circuits
and one-way permutations~\cite{TomerZhandry2025Foundations}. Our use of
quantum-circuit obfuscation and a classical hardness separation for
$\BQP$ falls outside that barrier. Another line connects obfuscation
of \emph{classical} circuits, with quantum evaluation or encoding, to
pseudorandom unitaries and other quantum cryptographic primitives under
$\mathsf{NP}\not\subseteq\mathrm{i.o.}\BQP$
~\cite{MorimaeShirakawaYamakawa2026QiO}. The hardness assumptions serve
different roles here: our proof of quantumness uses classical hardness
of $\BQP$, private verification needs no additional separation, and
public verification adds post-quantum one-way functions.

\section{Preliminaries}

\paragraph{Notation.} We abbreviate ``probabilistic polynomial-time'' and ``quantum polynomial-time'' as \ppt and \qpt respectively.

\subsection{Quantum Circuits}
\label{sec:quantum-circuits}
Throughout this paper, we think of quantum circuits as being specified by a classical description, in the form of a sequence of unitary and measurement gates. We do not consider general quantum programs that require a quantum auxiliary input state to evaluate. By deferring measurements and purifying discarded randomness, we may regard such a circuit $C$ as a unitary $U_C$ acting on an input register ``$\inp$'' and an ancilla register ``$\anc$'' initialized to $\ket{0^{n_{\anc}}}$.  A designated output register ``$\out$'' is retained and the remaining register ``$\junk$'' is discarded. Thus, on an $n_{\inp}$-qubit state $\rho$, we write $C(\cdot)$ to denote the channel implemented by $C$
\begin{align}
    C(\rho) := \tr_{\junk}\!\left[U_C\bigl(\rho_{\inp}\otimes \ketbra{0^{n_{\anc}}}_{\anc}\bigr)U_C^\dagger \right].
    \label{eq:circuit-channel}
\end{align}
We use the same symbol $C$ for a circuit description and its induced channel when the meaning is clear.

We fix a finite universal gate set, so that the gate count and description length are polynomially related. Further, we pick the gate set so that it contains $\mathsf{I}$, $\sfX$, and $\CNOT$, and is closed under adjoints.
We define the \emph{shape} of a circuit $C$ to be the tuple $\shape(C):=(s,n_{\inp},n_{\anc},n_{\out})$, where $s$ is the gate count. Its \emph{width} is $n_{\inp}+n_{\anc}=n_{\out}+n_{\junk}$. We say that a family of circuits $\{C_{\lambda}\}_{\lambda \in \NN}$ indexed by a security parameter $\lambda$ is polynomial-sized if there is a polynomial $\poly$ such that the gate count and width of $C_{\lambda}$ is bounded by $\poly(\lambda)$.

Fix the negligible class tolerance $\eta_{\mathrm{cls}}(\lambda):=2^{-\lambda}$.
We use the following circuit classes.
\begin{itemize}

    \item \textbf{Unitary circuits.}
    A circuit $C$ is unitary if there is a unitary $U$ on
    $n_{\inp}=n_{\out}$ qubits such that
    \[
        \|C-\mathcal{U}\|_{\diamond}
          \leq\eta_{\mathrm{cls}}(\lambda),
        \qquad
        \mathcal{U}(\rho):=U\rho U^\dagger.
    \]
    We denote this class by $\calC_{\unitary}$.

    \item \textbf{Tidy circuits.} Let $C$ be a circuit with an $n_{\inp}$-qubit logical input register, an $n_{\out}$-qubit logical output register, and an $n_{\anc}$-qubit ancilla register. Let $U_C$ be unitary implements on the entire $n = n_{\inp} + n_{\anc}$ qubits.
    Define an isometry $J$ from $n_{\inp}$ qubits to $n$ qubits defined as $J \ket{\psi} = \ket{\psi} \ket{0^{n_{\anc}}}$. Let $S = \mathrm{im}(J) = \mathsf{I} \otimes \ketbra{0^{n_{\anc}}}$. We say that $C$ is a \emph{tidy computation} of an $n$-qubit unitary $U^*_C$ if
    \[
        \inf_{\theta\in\RR}
        \bigl\|U_C \circ J - e^{i\theta}J \circ U^*_C\bigr\|_{2}
        \leq\negl(\lambda).
    \]
Thus, on every logical input, $C$ implements $U_C$ and returns its ancilla
register to $\ket{0^a}$, up to negligible error.  We write
$\calC_{\unitary}^{\mathsf{tidy}}$ for the class of such circuits.

\item \textbf{Tidy $\BQP$ circuits.} We define the class $\calC_{\BQP}$ to be circuits with $n_{\inp} = n_{\out} = 1$, such that for every $C \in \calC_{\BQP}$ of description length $\lambda$, there is a bit $b_C$ such that $\| C - \sfX^{b_C}(\cdot)\|_{\diamond} \le \negl(\lambda)$.

    Thus these circuits coherently encode a decision bit as either the identity or a bit flip on one logical input qubit; in particular, evaluating the circuit on $|0\rangle$ reveals the decision bit. An amplified $\BQP$ computation on a hardwired instance can be compiled into this form: compute its decision into a workspace qubit, CNOT that qubit into the input wire, and uncompute the workspace.

\item A circuit is \textbf{ancilla-free} if $n_{\anc}=0$.  For a circuit class
$\calC$, we write $\calC^{\ancfree}$ for its ancilla-free subclass.
\end{itemize}

\subsection{Quantum Indistinguishability Obfuscation}
\label{sec:qio}

\begin{definition}[Quantum circuit obfuscator]\label{def:qco}
Let $\calC=\{\calC_\lambda\}_{\lambda\in\NN}$ be a class of quantum circuits. An \emph{obfuscator} for $\calC$ is a \ppt\footnote{One could allow the obfuscator $\Obf$ to be a \qpt algorithm, but in this work, we require a classical verifier to perform the obfuscation, we restrict to classical obfuscators.} algorithm $\Obf$ that, on input $(1^\lambda,C)$ with $C\in\calC_\lambda$, and outputs a circuit $\widehat C\gets\Obf(1^\lambda,C)$. Further, it must it satisfy the following correctness guarantee.\footnote{It turns out that for our applications, a weaker notion of correctness suffices, where functionality preserved only on the $\ket{0}$ input. For simplicity, we state only the stronger, more standard, correctness guarantee.}

\textbf{Correctness:} For every $\lambda \in \NN$ and every $C\in\calC_\lambda$,
\[
    \Pr_{\widehat C\gets\Obf(1^\lambda,C)}
       \!\left[\|\widehat C-C\|_\diamond\leq\negl(\lambda)\right]
    \geq 1-\negl(\lambda).
\]
\end{definition}

We now generalize \emph{classical} indistinguishability obfuscation~\cite{barak2001possibility} to the quantum setting. Classically, circuits are equivalent if they compute the same truth tables. Quantumly, we define circuits $C_1, C_2$ to be equivalent if they implement channels that are negligibly close in diamond distance.

\begin{definition}[Quantum iO (qiO)]
\label{def:qio}
An obfuscator $\Obf$ for $\calC$ satisfies \emph{indistinguishability-obfuscation security} if the following holds. For every two (not necessarily uniformly generated) polynomial-size circuit families $\{C_{0,\lambda}\}_{\lambda\in\NN}$ and $\{C_{1,\lambda}\}_{\lambda\in\NN}$ in $\calC$ such that for every $\lambda \in \NN$,
\[
\shape(C_{0,\lambda}) =\shape(C_{1,\lambda})
\quad\text{and}\quad
\|C_{0,\lambda}-C_{1,\lambda}\|_\diamond\leq \negl(\lambda),
\]
and every (non-uniform) polynomial-time adversary $\calA$,
\begin{align*}
 \Bigl|
   \Pr\!\left[
      \calA(1^\lambda,\Obf(1^\lambda,C_{0,\lambda}))=1
   \right]
   -
   \Pr\!\left[
      \calA(1^\lambda,\Obf(1^\lambda,C_{1,\lambda}))=1
   \right]
 \Bigr|
 \leq \negl(\lambda).
\end{align*}
Depending on whether we quantify $\calA$ over all \ppt or \qpt algorithms, $\Obf$ is secure against classical or quantum adversaries.  \footnote{The proof of quantumness in \Cref{sec:poq} needs only classical
security, whereas the verification application in
\Cref{sec:public-bqp-verification} assumes quantum security.}
\end{definition}

\subsection{Proofs of Quantumness}
\begin{definition}[Two-message proof of quantumness]\label{def:poq}
Let $\ell$ be a polynomial.  A two-message (one-round) proof of quantumness
is a tuple of algorithms $(\Challenge,\Prove,\Verify)$ with the following
syntax.
\begin{itemize}
    \item $\Challenge(1^\lambda;r)\to\chall$ is a classical polynomial-time
    challenge algorithm with coins
    $r\in\{0,1\}^{\ell(\lambda)}$.
    \item $\Prove(1^\lambda,\chall)\to\resp$ is a quantum polynomial-time
    prover that returns a classical response.
    \item $\Verify(1^\lambda,r,\chall,\resp)\to\{0,1\}$ is a classical
    polynomial-time decision algorithm.
\end{itemize}
There is a negligible function $\mu$ such that, for every $\lambda$,
\begin{align*}
 \Pr\!\left[
   \Verify(1^\lambda,r,\chall,\resp)=1
   :
   \begin{array}{l}
     r\gets\{0,1\}^{\ell(\lambda)},\\
     \chall\gets\Challenge(1^\lambda;r),\\
     \resp\gets\Prove(1^\lambda,\chall)
   \end{array}
 \right]\geq 1-\mu(\lambda).
\end{align*}
Moreover, for every classical probabilistic polynomial-time prover $\calA$,
there is a negligible function $\mu_{\calA}$ such that, for every $\lambda$,
\begin{align*}
 \Pr\!\left[
   \Verify(1^\lambda,r,\chall,\resp)=1
   :
   \begin{array}{l}
     r\gets\{0,1\}^{\ell(\lambda)},\\
     \chall\gets\Challenge(1^\lambda;r),\\
     \resp\gets\calA(1^\lambda,\chall)
   \end{array}
 \right]\leq \mu_{\calA}(\lambda).
\end{align*}
The first property is \emph{completeness} and the second is
\emph{classical soundness}.  Since the verifier sends a challenge before
receiving the prover's response, this is a two-message protocol, not a
non-interactive protocol in the plain model.
\end{definition}

\begin{definition}[Publicly verifiable Proof of Quantumness]
\label{def:publicly-verifiable-poq}
A two-message proof of quantumness as in \Cref{def:poq},
\[
    \Pi=(\Verify_1,\Prove,\Verify_2),
\]
is \emph{publicly verifiable} if there is a deterministic classical
polynomial-time algorithm $\mathsf{PubVerify}$ such that, for every $r$,
every honestly generated
$\chall=\Verify_1(1^\lambda;r)$, and every response $\resp$,
\[
    \Verify_2(1^\lambda,r,\chall,\resp)
      =\mathsf{PubVerify}(1^\lambda,\chall,\resp).
\]
Thus, after generating the challenge, the verifier retains no secret state
needed to check a transcript.
\end{definition}

\subsection{Classical Verification of $\BQP$ Computation}

\begin{definition}[Private-verifier preprocessing verification]
\label{def:private-bqp-verification}
A private-verifier preprocessing protocol for $\calC_{\BQP}$ is a tuple
$(\Gen_{\mathrm{priv}},\Prove_{\mathrm{priv}},
\Verify_{\mathrm{priv}})$ with the following syntax.
\begin{itemize}
    \item
    $\Gen_{\mathrm{priv}}(1^\lambda,C)\to
      (\mathsf{ch},\mathsf{st})$
    is a classical probabilistic algorithm running in time polynomial in
    $\lambda$ and $|C|$.  It sends $\mathsf{ch}$ to the prover and keeps
    $\mathsf{st}$ secret.
    \item
    $\Prove_{\mathrm{priv}}(1^\lambda,\mathsf{ch})\to w$
    is a quantum polynomial-time algorithm that returns a classical string.
    \item
    $\Verify_{\mathrm{priv}}
       (1^\lambda,C,\mathsf{ch},\mathsf{st},w)
       \to\{0,1,\bot\}$
    is a deterministic classical algorithm running in time polynomial in
    $\lambda$ and $|C|$.
\end{itemize}
For every polynomial-size circuit ensemble
$\{C_\lambda\}_{\lambda\in\NN}$ in $\calC_{\BQP}$, let $a_\lambda$ be its
semantic bit as in \Cref{def:public-bqp-verification}.  There is a negligible
function $\mu$ such that
\begin{align*}
 \Pr\!\left[
   \Verify_{\mathrm{priv}}(
      1^\lambda,C_\lambda,\mathsf{ch},\mathsf{st},w)
      =a_\lambda
   :
   \begin{array}{l}
   (\mathsf{ch},\mathsf{st})
       \gets\Gen_{\mathrm{priv}}(1^\lambda,C_\lambda),\\
   w\gets\Prove_{\mathrm{priv}}
       (1^\lambda,\mathsf{ch})
   \end{array}
 \right]
 \geq 1-\mu(\lambda).
\end{align*}
Moreover, for every possibly nonuniform quantum polynomial-time prover
$\calA$, there is a negligible function
$\mu_{\calA,\{C_\lambda\}}$ such that
\begin{align*}
 \Pr\!\left[
   \Verify_{\mathrm{priv}}\!\left(
      1^\lambda,C_\lambda,\mathsf{ch},\mathsf{st},
      \calA(1^\lambda,\mathsf{ch})
   \right)=1-a_\lambda
   :
   (\mathsf{ch},\mathsf{st})
       \gets\Gen_{\mathrm{priv}}(1^\lambda,C_\lambda)
 \right]
 \leq\mu_{\calA,\{C_\lambda\}}(\lambda).
\end{align*}
As in \Cref{def:public-bqp-verification}, soundness rules out only the wrong
bit; a malicious prover may cause rejection or the correct output.
\end{definition}

We call such a protocol \emph{computationally blind up to circuit shape}
if there is a \ppt simulator $\mathsf{Sim}$ such that, for every
polynomial-size circuit family $\{C_\lambda\}_{\lambda\in\NN}$ in
$\calC_{\BQP}$,
\[
    \mathsf{ch}_\lambda
       \approx_c\mathsf{Sim}(1^\lambda,\shape(C_\lambda)),
    \qquad
    (\mathsf{ch}_\lambda,\mathsf{st}_\lambda)
       \gets\Gen_{\mathrm{priv}}(1^\lambda,C_\lambda),
\]
where indistinguishability is against possibly nonuniform \qpt
distinguishers. The prover receives no subsequent message revealing the
verifier's decision, so simulating the challenge also simulates the
prover's view. In particular, challenges for any two same-shape circuit
families are computationally indistinguishable, even when their answers
differ.

The setup algorithm below is circuit dependent and honestly generated.  Once
its output has been published, the prover sends one classical proof and any
classical party can verify it without secret state.  Thus the protocol is
non-interactive after preprocessing; it is not a non-interactive protocol in
the plain model.

\begin{definition}[$2$-message Publicly verifiable Classical Verification of Quantum Computation]
\label{def:public-bqp-verification}
A publicly verifiable preprocessing protocol for the coherent decision class
$\calC_{\BQP}$ is a tuple $(\Gen,\Prove,\Verify)$ with the following syntax.
\begin{itemize}
    \item $\Gen(1^\lambda,C)\to\pk$ is a classical probabilistic
    setup algorithm running in time polynomial in $\lambda$ and $|C|$.
    \item $\Prove(1^\lambda,\pk)\to\pi$ is a quantum polynomial-time
    algorithm that outputs a classical proof.
    \item $\Verify(1^\lambda,\pk,\pi)\to\{0,1,\bot\}$ is a deterministic
    classical polynomial-time algorithm.
\end{itemize}
Only the setup algorithm receives $C$; the prover and public verifier use
the public challenge $\pk$.
For every polynomial-size circuit ensemble
$\{C_\lambda\}_{\lambda\in\NN}$ with
$C_\lambda\in\calC_{\BQP}$, let $a_\lambda\in\{0,1\}$ be the unique bit
satisfying
\[
    \|C_\lambda-\mathcal X^{a_\lambda}\|_\diamond
       \leq\negl(\lambda).
\]
There is a negligible function $\mu$ such that
\[
 \Pr\!\left[
   \Verify(1^\lambda,\pk,\pi)=a_\lambda
   :
   \pk\gets\Gen(1^\lambda,C_\lambda),\
   \pi\gets\Prove(1^\lambda,\pk)
 \right]
 \geq 1-\mu(\lambda).
\]
Moreover, for every possibly nonuniform quantum polynomial-time prover
$\calA$, there is a negligible function
$\mu_{\calA,\{C_\lambda\}}$ such that
\[
 \Pr\!\left[
   \Verify\!\left(
      1^\lambda,\pk,
      \calA(1^\lambda,\pk)
   \right)=1-a_\lambda
   :
   \pk\gets\Gen(1^\lambda,C_\lambda)
 \right]
 \leq\mu_{\calA,\{C_\lambda\}}(\lambda).
\]
The first property is \emph{completeness}; the second is
\emph{wrong-output soundness}.  A malicious prover is allowed to make the
verifier reject or output the correct bit.
\end{definition}

For a publicly verifiable protocol, we define \emph{computational blindness
up to circuit shape and answer} by requiring a \ppt simulator
$\mathsf{Sim}_{\mathrm{pub}}$ such that, for every polynomial-size family
$\{C_\lambda\}_{\lambda\in\NN}$ in $\calC_{\BQP}$ with semantic bits
$a_\lambda$,
\[
    \pk_\lambda\approx_c
       \mathsf{Sim}_{\mathrm{pub}}(1^\lambda,\shape(C_\lambda),a_\lambda),
    \qquad
    \pk_\lambda\gets\Gen(1^\lambda,C_\lambda),
\]
against possibly nonuniform \qpt distinguishers. Thus two circuit families
with the same shape and answer yield indistinguishable prover views.
The answer is unavoidable leakage: an honest prover can produce a proof
from $\pk$ and run the public verifier to recover $a_\lambda$ with
overwhelming probability. Hence this guarantee differs from the private
protocol's blindness, which also hides the answer.

\section{Quantum advantage from Quantum iO and
\texorpdfstring{$\BQP$}{BQP}-hardness}\label{sec:poq}

In this section, we construct a proof of quantumness protocol. \Cref{subsec:private-poq} gives a privately verifiable protocol, and \Cref{subsec:public-poq} lifts this protocol to a publicly verifiable protocol under the same assumptions.
We present both protocols, starting with the simpler privately verifiable construction and then introducing the additional ingredient needed for public verifiability.

\begin{notation}\label{notation:poq}
    We will use the following notation throughout this section.
    \begin{itemize}
        \item Let $\qio$ be a quantum indistinguishability obfuscator for $\calC_{\BQP}$ secure against classical adversaries.
        \item Let $L\in\BQP\setminus\mathrm{i.o.}\BPP$ be a language decided by the uniformly generated circuits $\{V_\lambda\}_{\lambda \in \NN}$. That is, for every input $x \in \{0,1\}^\lambda$, $V_{\lambda}$ outputs $L(x)$ at the designated output bit with error at most $\negl(\lambda)$. Let $V_{\lambda}$ have shape $(s, n_{\inp}, n_{\anc}, n_{\out})$.
        \item For $b\in\{0,1\}$, define $C_{b,\lambda}$ to be a circuit of shape $(2s + 2 \lambda + 1, 1, \lambda + n_{\anc}, 1)$, such that it applies $\sfX^b$ to the first input qubit and acts trivially on the remaining input and ancilla qubits. Let the first qubit be the designated output qubit. Pad $C_{b,\lambda}$ so that it has the required gate count.
        \item We will denote the number of parallel repetitions by $t := t(\lambda)$, and will set $t \le \poly(\lambda)$, $t \ge \omega(\log \lambda)$ to ensure negligible soundness error.
    \end{itemize}
\end{notation}

\subsection{Privately verifiable Proof of Quantumness}\label{subsec:private-poq}
\Cref{fig:poq} describes our construction.
\begin{figure}[H]
\begin{boxedalgo}{\linewidth}
{\centering\bfseries (Privately verifiable) Proof of Quantumness\par}
\medskip
\begin{itemize}
    \item $\Challenge(1^\lambda;r)$:
    \begin{itemize}
        \item Parse $r$ as independent uniform bits $b_1,\ldots,b_t$ followed by random tapes $r_1,\ldots,r_t\in\{0,1\}^{\poly(\lambda)}$.
        \item For all $i\in[t]$, compute
        \[
            \widehat C_i:=
            \qio(1^\lambda,C_{b_i,\lambda};r_i).
        \]
        \item Output $\chall=(\widehat C_1,\ldots,\widehat C_t)$.
    \end{itemize}

    \item $\Prove(1^\lambda,\chall)$:
    \begin{itemize}
        \item Parse $\chall = (\widehat C_1,\ldots,\widehat C_t)$.
        \item For every $i \in [t]$, run the circuit $\widehat{C}_i$ on $\ket{0}_{\inp, \anc}$ state and measure the output register in the computational basis to get $b'_i \in \{0,1\}$.
        \item Output the response $(b_1', \ldots, b_t')$.
    \end{itemize}

    \item $\Verify(1^\lambda,r,\chall,\resp)$:
    \begin{itemize}
        \item Recover $(b_1,\ldots,b_t)$ from $r$ and parse $\resp = (b_1', \ldots, b_t')$.
        \item Accept if $b_i' = b_i$ for every $i \in [t]$, and otherwise reject.
    \end{itemize}
\end{itemize}
\smallskip
\end{boxedalgo}
\par\medskip
\caption{}
\label{fig:poq}
\end{figure}

To show that \Cref{fig:poq} is a sound proof of quantumness, for technical reasons, we need something slightly stronger than $\BQP \nsubseteq \BPP$; we assume that $\BQP\not\subseteq\mathrm{i.o.}\BPP$, where $\mathrm{i.o.}\BPP$ is shorthand for ``infinitely-often'' $\BPP$, and is defined below. The stronger assumption is is needed to handle the mismatch between how we define security of a cryptosystem and correctness of a $\BPP$ machine---an adversary that breaks a proof of quantumness with non-negligible probability needs to win with noticeable probability on only infinitely many input lengths, but a $\BPP$ machine is defined to work for all input lengths.

\begin{definition}[Infinitely-often $\BPP$]\label{def:iobpp}
A language $L$ is in $\mathrm{i.o.}\BPP$ if there is a probabilistic polynomial-time algorithm $M$ such that, for infinitely many input lengths $\lambda$, every $x\in\{0,1\}^\lambda$ satisfies
\[
    \Pr[M(x)=L(x)]\geq \frac23.
\]
Thus $\BQP\not\subseteq\mathrm{i.o.}\BPP$ is stronger than
$\BQP\neq\BPP$.

\end{definition}

\begin{theorem}\label{thm:qio-implies-poq}
Suppose that $\BQP\not\subseteq\mathrm{i.o.}\BPP$ and that there is a qiO scheme for $\calC_{\BQP}$ secure against classical polynomial-time adversaries. Then, the protocol in \Cref{fig:poq} is a secure two-message proof of quantumness.
\end{theorem}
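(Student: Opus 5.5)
Completeness follows from correctness of $\qio$: each $C_{b_i,\lambda}$ lies in $\calC_{\BQP}$ with decision bit $b_i$. So, except with probability $\negl(\lambda)$, each $\widehat C_i$ satisfies $\|\widehat C_i-\sfX^{b_i}\|_\diamond\leq\negl(\lambda)$. Running it on $\ket{0}$ then yields $b_i$ except with negligible probability. A union bound over $t\le\poly(\lambda)$ repetitions gives acceptance with probability $1-\negl(\lambda)$.

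For classical soundness, the core is a \emph{hiding lemma}. For every \ppt $\calB$,
\[
\bigl|\Pr[\calB(1^\lambda,\qio(1^\lambda,C_{0,\lambda}))=1]-\Pr[\calB(1^\lambda,\qio(1^\lambda,C_{1,\lambda}))=1]\bigr|\leq\negl(\lambda).
\]
To prove it, for $x\in\{0,1\}^\lambda$ let $D_x$ be the following circuit of shape $(2s+2\lambda+1,1,\lambda+n_{\anc},1)$:
\begin{itemize}
\item apply $\sfX$ gates to write $x$ into $\lambda$ ancillas, using at most $\lambda$ gates;
\item run $V_\lambda$ ($s$ gates);
\item $\CNOT$ its output into the logical input qubit;
\item run $V_\lambda^\dagger$ and unwrite $x$;
\item pad with identities.
\end{itemize}
Since $V_\lambda$ errs with negligible probability, $\|D_x-\sfX^{L(x)}\|_\diamond\leq\negl(\lambda)$, so $D_x\in\calC_{\BQP}$ with the same shape as $C_{L(x),\lambda}$. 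By qiO security, for every nonuniform choice of inputs $x_\lambda$ we get $\qio(D_{x_\lambda})\approx_c\qio(C_{L(x_\lambda),\lambda})$. Here I need qiO security against adversaries given the (uniformly constructed) circuit family. Nonuniform families are allowed in \Cref{def:qio}, so no issue arises.

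Now suppose $\calB$ has advantage at least $1/p(\lambda)$ for infinitely many $\lambda$. Build a classical decider $M(x)$ as follows. Estimate $q_0=\Pr[\calB(\qio(C_{0,\lambda}))=1]$ and $q_1=\Pr[\calB(\qio(C_{1,\lambda}))=1]$, and also $q_x=\Pr[\calB(\qio(D_x))=1]$, each to accuracy $1/(8p)$ by $O(p^2)$ sampling. All three are sampleable since $\qio$ is classical and $D_x$ is classically constructible. Output the bit whose estimate is closer to $q_x$. If $M$ failed with probability above $1/3$ on some $x_\lambda$ at infinitely many of the good lengths, the worst-case inputs $x_\lambda$ would define a nonuniform family violating qiO security. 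One detail: $M$ does not know $p$ or which lengths are good. I would handle this by having $M$ use $\lambda$-dependent accuracy $1/(8p(\lambda))$ for the fixed polynomial $p$ of the assumed adversary; this is fine because $M$ is built from $\calB$ and $p$. Hence $M$ decides $L$ on infinitely many lengths, giving $L\in\mathrm{i.o.}\BPP$, a contradiction. The careful accounting of quantifiers (infinitely many $\lambda$, the worst-case $x$, the estimation error) is the main obstacle.

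Given the lemma, a classical prover $\calA$ is handled by a hybrid over $i=0,\dots,t$. In hybrid $H_i$, the first $i$ obfuscations are replaced by $\qio(C_{0,\lambda})$ while the verifier's bits $b$ are kept. Adjacent hybrids differ by at most the lemma's bound. The reduction samples $b$ and the other coordinates itself and outputs $1$ iff $\calA$'s answer equals $b$; when $b_i=0$ the two hybrids are identical. Thus $|\Pr[H_0]-\Pr[H_t]|\leq t\cdot\negl(\lambda)$. This is negligible, though the uniform-over-$i$ hybrid argument should be stated with a single reduction choosing random $i$. In $H_t$ the challenge is independent of $b$, so acceptance probability is $2^{-t}=\negl(\lambda)$ since $t=\omega(\log\lambda)$.
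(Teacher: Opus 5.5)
Your proposal is correct and follows essentially the same route as the paper. The hiding lemma uses the same bridge circuits $D_x$ and an estimation-based classical decider; your worst-case-$x_\lambda$ argument is what the paper phrases as showing that $\gamma(\lambda)=\max_x|q_x-p_{L(x)}|$ is negligible. Soundness then follows from the same coordinate-wise hybrid ending in all-$C_{0,\lambda}$ obfuscations and the $2^{-t}$ bound, and your single reduction with a random index $i$ is a slightly more careful version of the paper's step, which just invokes the lemma $t=\mathrm{poly}(\lambda)$ times.
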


We first prove in \Cref{lem:qio-I-X-ind} that the obfuscations of $C_{0, \lambda}$ and $C_{1, \lambda}$ are computationally indistinguishable under qiO, assuming the worst-case classical hardness of $\BQP$. Then, we prove \Cref{thm:qio-implies-poq} by showing that parallel repetition amplifies hardness so that a classical \ppt adversary can win the spoof the proof of quantumness only with negligible probability.

\begin{lemma}\label{lem:qio-I-X-ind}
Suppose the hypotheses for \Cref{thm:qio-implies-poq} hold. Then, the circuit families $\{C_{0,\lambda}\}_{\lambda\in\NN}$ and
$\{C_{1,\lambda}\}_{\lambda\in\NN}$ defined in \Cref{notation:poq} are in $\calC_{\BQP}$, and efficiently constructible by a classical \ppt algorithm, have the same shape, and induce the channels $\mathsf{I}(\cdot)$ and $\mathsf{X}(\cdot)$ respectively. Further, for every classical \ppt adversary $\calA$ and every $\lambda \in \NN$,
\begin{align}
 \Bigl|
   \Pr\!\left[
     \calA(1^\lambda,\qio(1^\lambda,C_{0,\lambda}))=1
   \right]
   -
   \Pr\!\left[
     \calA(1^\lambda,\qio(1^\lambda,C_{1,\lambda}))=1
   \right]
 \Bigr|
 \leq \negl(\lambda).
 \label{eq:qio-endpoint-hiding}
\end{align}
\end{lemma}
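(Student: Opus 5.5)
The structural claims come first and are routine. By definition, $C_{b,\lambda}$ applies $\sfX^b$ to the first qubit, acts trivially elsewhere, and is padded with identity gates to gate count $2s+2\lambda+1$. Its induced channel on the single logical input is therefore exactly $\sfX^b(\cdot)$, and the ancillas are returned to $\ket{0}$. So $C_{b,\lambda}\in\calC_{\BQP}$, both circuits have shape $(2s+2\lambda+1,1,\lambda+n_{\anc},1)$, and both are trivially classically constructible. The content of the lemma is the indistinguishability \eqref{eq:qio-endpoint-hiding}, which I would prove by contradiction via a reduction to deciding $L$.

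The first step is a hybrid circuit. For $x\in\{0,1\}^\lambda$, define $D_x$ with the same shape as $C_{b,\lambda}$. Use $\lambda$ ancillas to write $x$ with $\sfX$ gates, run $V_\lambda$ on $x$ and the remaining ancillas, $\CNOT$ the designated output of $V_\lambda$ onto the first (logical) qubit, run $V_\lambda^\dagger$, and erase $x$ with $\sfX$ gates. This uses at most $2s+2\lambda+1$ gates; pad with identities to exactly that count. Since $V_\lambda$ errs with negligible probability, the standard compute--copy--uncompute argument gives $\|D_x-\sfX^{L(x)}\|_\diamond\le\negl(\lambda)$. (Here I use that the output qubit is near-deterministic, so the uncomputation nearly restores the workspace.) Thus $D_x\in\calC_{\BQP}$, $\shape(D_x)=\shape(C_{L(x),\lambda})$, and $D_x$ is computable from $x$ in classical polynomial time. qiO security then gives $\qio(D_x)\approx_c\qio(C_{L(x),\lambda})$. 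Since the definition allows nonuniform families, this must hold uniformly over the worst-case choice $x=x_\lambda$. To make the bound uniform over all $x$ of length $\lambda$, I would apply security to the family $x^\ast_\lambda$ that maximizes the adversary's gap; the negligibility guarantee then covers every $x$ simultaneously.

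The second step is the decider. Suppose a \ppt $\calA$ has advantage $\ge 1/p(\lambda)$ for infinitely many $\lambda$, and write $p_b=\Pr[\calA(\qio(C_{b,\lambda}))=1]$. Construct $M(x)$ as follows. Estimate $p_0$ and $p_1$ by sampling $O(p^2\lambda)$ fresh obfuscations of $C_{0,\lambda}$ and $C_{1,\lambda}$. Estimate $q_x=\Pr[\calA(\qio(D_x))=1]$ in the same way. Output the bit $b$ whose estimate $\hat p_b$ is closer to $\hat q_x$. By the first step, $|q_x-p_{L(x)}|\le\negl(\lambda)<1/(4p)$ for all large $\lambda$ and all $x$. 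On the infinitely many $\lambda$ where $|p_0-p_1|\ge 1/p$, Chernoff bounds show that $M$ outputs $L(x)$ with probability $\ge 2/3$ for every $x$. Hence $L\in\mathrm{i.o.}\BPP$, a contradiction.

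The main obstacle is the uniformity subtlety just mentioned: the i.o. decider must be correct for \emph{every} $x$ at those lengths, not just typical ones. The qiO guarantee, however, is stated per fixed family with its own negligible function. I would handle this by the worst-case-family trick above, noting that $\calA$ being \ppt makes the distinguisher for the pair $(D_{x_\lambda},C_{L(x_\lambda),\lambda})$ a legitimate adversary. A secondary detail is that $M$ does not know $\lambda$ in advance, and must work on exactly the lengths $\lambda$ where $\calA$ has advantage. Both of these are handled because $M$ is given $x$ of length $\lambda$ and only needs correctness on those lengths. Finally, the statement's ``for every $\lambda$'' should be read as bounded by a fixed negligible function.
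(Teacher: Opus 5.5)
Your proposal is correct and follows essentially the same route as the paper's proof. It uses the compute–copy–uncompute circuit $D_x$ as a bridge, fixes a worst-case $x_\lambda$ so that nonuniform qiO security makes $\max_x|q_x-p_{L(x)}|$ negligible, and builds the same estimate-and-compare decider to place $L$ in $\mathrm{i.o.}\BPP$.
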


\begin{proof}
Define the language $L$ and circuits $V_\lambda$ as in \Cref{notation:poq}. Standard amplification and deferred measurement yield the required polynomial-time uniform family $\{V_{\lambda}\}_{\lambda \in \NN}$ from a $\BQP$ algorithm for $L$. For each $x\in\{0,1\}^\lambda$, define a circuit $D_x$ with one logical input qubit $Q$
and clean workspace as follows:
\begin{itemize}
    \item prepare $x$ in an instance register, using one gate (possibly an
    identity gate) for each bit of $x$;
    \item apply $V_\lambda$ and, controlled on its output qubit, apply
    $\sfX$ to $Q$;
    \item apply $V_\lambda^\dagger$ and undo the preparation of $x$.
\end{itemize}

\begin{figure}[t]
\centering
\begin{tikzpicture}[
    x=1cm, y=1cm,
    line width=0.6pt,
    gate/.style={draw, fill=white, minimum width=9mm,
                 minimum height=14mm, inner sep=2pt}
]
    \node[anchor=east] at (-0.15,1.8) {$Q:\ \ket{\psi}$};
    \node[anchor=east] at (-0.15,0.8) {$A:\ \ket{0}$};
    \node[anchor=east] at (-0.15,0)
        {$R:\ \ket{0^{\lambda+n_{\anc}-1}}$};

    \foreach \y in {1.8,0.8,0}
        \draw (0,\y) -- (7.9,\y);

    \draw (0.20,-0.08) -- (0.32,0.08);
    \draw (7.58,-0.08) -- (7.70,0.08);

    \node[gate] at (1.0,0.4) {$P_x$};
    \node[gate] at (2.5,0.4) {$V_\lambda$};
    \node[gate] at (5.5,0.4) {$V_\lambda^\dagger$};
    \node[gate] at (7.0,0.4) {$P_x^\dagger$};

    \draw (4.0,0.8) -- (4.0,1.8);
    \fill (4.0,0.8) circle (2pt);
    \draw[fill=white] (4.0,1.8) circle (0.13);
    \draw (3.87,1.8) -- (4.13,1.8);
    \draw (4.0,1.67) -- (4.0,1.93);
\end{tikzpicture}
\caption{The circuit $D_x$. The qubit $Q$ is its input and designated
output. The $\lambda+n_{\anc}$ workspace qubits start in $\ket{0}_W$,
where $W=AR$, $A$ is the designated output qubit of $V_\lambda$,
and $R$ contains the remaining workspace qubits. Here
$P_x=(\bigotimes_{j=1}^{\lambda}\sfX^{x_j})\otimes I_{\anc}$
prepares $x$ in the instance register and acts trivially on the
original ancillas of $V_\lambda$. The CNOT is controlled by $A$
and targets $Q$. After uncomputation, $D_x$ applies
$\sfX^{L(x)}$ to $Q$ and restores $W$ to $\ket{0}_W$, up to
negligible error. The circuit has $2s+2\lambda+1$ gates, counting
one preparation gate per bit of $x$, including identities.}
\label{fig:Dx}
\end{figure}

Each $D_x$ has $2|V_\lambda|+2\lambda+1$ gates and uses all the input and work qubits of $V_\lambda$ as its clean workspace, so it has the same shape as $C_{0,\lambda}$ and $C_{1,\lambda}$ in \Cref{notation:poq}. By the negligible error guarantee of $V_{\lambda}$, we know that
\begin{align}
    \bigl\|D_x (\cdot)-\sfX^{L(x)}(\cdot)\bigr\|_\diamond
       \leq \negl (\lambda),
\end{align}
and therefore every $D_x$ belongs to $\calC_{\BQP}$ and is negligibly close to $C_{L(x),\lambda}$. Suppose, toward a contradiction, that a classical polynomial-time $\calA$ that distinguishes the $\qio$ obfuscations of $C_{0, \lambda}$ and $C_{1, \lambda}$ with non-negligible advantage. Defining
\[
    p_b(\lambda)
      :=\Pr\!\left[
          \calA(1^\lambda,\qio(1^\lambda,C_{b,\lambda}))=1
        \right]
    \quad\text{and}\quad
    q_x(\lambda)
      :=\Pr\!\left[
          \calA(1^\lambda,\qio(1^\lambda,D_x))=1
        \right],
\]
this means that there is a non-negligible function $\mathsf{non}\text{-}\negl$ such that $|p_0(\lambda)-p_1(\lambda)|\geq \mathsf{non}\text{-}\negl(\lambda).$
Further define
\begin{align}
    \gamma(\lambda)
      :=\max_{x\in\{0,1\}^{\lambda}}
         |q_x(\lambda)-p_{L(x)}(\lambda)|
    \label{eq:qio-max-bridge}
\end{align}
We claim that $\gamma(\lambda) \le \negl(\lambda)$. Indeed, otherwise one could select a maximizing $x_\lambda$ on infinitely many lengths and obtain two same-shape, negligibly close circuit ensembles that $\calA$ distinguishes, contradicting the security of $\qio$.

Now, we give a classical \ppt algorithm $\calB$ that should be able to solve $\BQP$ for infinitely many input lengths $\lambda$. The algorithm works as follows: on input $x\in\{0,1\}^\lambda$, it estimates $p_0(\lambda),p_1(\lambda)$, and $q_x(\lambda)$ using fresh obfuscations and runs of $\calA$, up to sufficiently small inverse polynomial additive accuracy, and outputs $b$ such that the estimate of $q_x(\lambda)$ is closer to $p_b(\lambda)$ than $p_{1-b}(\lambda)$. Polynomially many samples suffice by a Chernoff bound. Therefore, $\calB$ is correct on every input of length $\lambda$, for infinitely many input lengths $\lambda$, which means that $L\in\mathrm{i.o.}\BPP$. This is a contradiction.
\end{proof}

\begin{proof}[Proof of \Cref{thm:qio-implies-poq}]
Correctness holds, since $C_{b, \lambda}$ maps $\ket{0}$ to $\ket{b}$ and the honest prover recovers all the $b_i$ correctly, except with negligible probability.

To prove soundness, we construct a hybrid argument and invoke \Cref{lem:qio-I-X-ind}. Concretely, for $i=0,\ldots,t$, we define hybrid $H_i$ as sampling  $b_1,\ldots, b_t\gets\{0,1\}^t$ independently, and computing the challenge as follows: for every coordinate $j \le i$, compute an independent obfuscation of $C_{0,\lambda}$, i.e., $\widehat{C}_j \gets \qio(1^\lambda, C_{0,\lambda})$, and for coordinates $j > i$, compute the obfuscations as before $\widehat{C}_j \gets \qio(C_{b_j,\lambda})$. We then run $\calA$ on the resulting challenge and accept when its response is exactly $(b_1, \ldots, b_t)$, and reject otherwise.

By \Cref{lem:qio-I-X-ind}, since $t = \poly(\lambda)$, the success probabilities in $H_0$ and $H_t$ differ by at most a negligible function. In hybrid $H_t$, the challenge is independent of the uniform bits $(b_1, \ldots, b_t)$ used for verification, so even an unbounded prover returns $(b_1, \ldots, b_t)$ with probability at most $2^{-t}$. Therefore, for any $t = \omega(\log \lambda)$, soundness holds.
\end{proof}

\subsection{Publicly verifiable proofs of quantumness}
\label{subsec:public-poq}

We next strengthen \Cref{fig:poq} to make it \emph{publicly verifiable}. That is, anyone who sees the challenge and response can check acceptance without the verifier's private coins. We achieve public verifiability via a simple transformation---the challenger publishes the image of the correct response vector under a one-way function. \Cref{fig:poq-public} describes this protocol.

We note that the one-way function needs to be secure only against classical adversaries. Interestingly, as shown in \Cref{lem:qio-implies-classical-owf} below, it turns out that the existence of such classically secure one-way functions is already implied by the assumptions in \Cref{thm:qio-implies-poq}, so we get public verification for free, without having to make any additional assumptions.

\begin{lemma}\label{lem:qio-implies-classical-owf}
Suppose that $\BQP\not\subseteq \mathrm{i.o.}\BPP$ and that there exists an obfuscation scheme for $\calC_{\BQP}$ satisfying qiO security against classical polynomial-time adversaries. Then, there exist one-way functions that are secure against classical \ppt algorithms.
\end{lemma}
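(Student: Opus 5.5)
The plan is to build an EFID pair from the two obfuscation distributions and then apply Goldreich's theorem, which converts EFID pairs into one-way functions~\cite{Goldreich1990CI}. Concretely, let $D_{b,\lambda}$ be the distribution of $\qio(1^\lambda,C_{b,\lambda})$ for $b\in\{0,1\}$, with $C_{b,\lambda}$ as in \Cref{notation:poq}. The argument has three steps.

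\textbf{Efficient samplability.} Both distributions can be sampled classically. By \Cref{lem:qio-I-X-ind}, $C_{b,\lambda}$ is constructible by a classical \ppt algorithm, and $\qio$ itself is \ppt.

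\textbf{Computational indistinguishability.} This is exactly \eqref{eq:qio-endpoint-hiding} from \Cref{lem:qio-I-X-ind}, which already uses $\BQP\not\subseteq\mathrm{i.o.}\BPP$.

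\textbf{Statistical distance.} This is the step needing care: the distinguisher witnessing it is quantum, yet we only need a statement about the distributions themselves. Consider the quantum procedure $\calQ$ that runs the sampled circuit on $\ket{0}_{\inp,\anc}$ and outputs the measured output bit. By qiO correctness, with probability $1-\negl(\lambda)$ the sampled circuit satisfies $\|\widehat C-\sfX^b\|_\diamond\le\negl(\lambda)$. Hence $\Pr[\calQ(D_{b,\lambda})=b]\ge 1-\negl(\lambda)$. Any (even inefficient) test has distinguishing advantage at most the total variation distance between the distributions, so
\[
\mathrm{SD}(D_{0,\lambda},D_{1,\lambda})\ge 1-\negl(\lambda).
\]
Only correctness is used here, not efficiency of $\calQ$, so no issue arises from $\calQ$ being quantum.

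Goldreich's theorem, proved via pseudorandom generators and the Håstad--Impagliazzo--Levin--Luby construction, states that an efficiently samplable pair that is statistically far but computationally indistinguishable to classical \ppt adversaries yields a one-way function secure against classical \ppt adversaries. The reduction is black-box and classical, so classical security of the pair transfers to classical one-wayness.

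I expect the main subtlety to be matching the security quantifiers. \Cref{lem:qio-I-X-ind} gives negligible advantage for every \ppt adversary at all sufficiently large $\lambda$, which is the standard uniform notion Goldreich's theorem requires, so the two fit directly. Goldreich's statement asks for noticeable statistical distance, which we have with room to spare. If needed, I would also note the simpler direct route: the map $(b,r)\mapsto\qio(1^\lambda,C_{b,\lambda};r)$ is a function whose hard-core predicate is $b$. This gives a weak form of one-wayness that then amplifies. I would nonetheless rely on the cited theorem.
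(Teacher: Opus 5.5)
Your proposal is correct and follows the paper's proof: the pair $\qio(1^\lambda,C_{0,\lambda})$, $\qio(1^\lambda,C_{1,\lambda})$ is classically indistinguishable by \Cref{lem:qio-I-X-ind} and statistically far by qiO correctness, and Goldreich's theorem turns this EFID pair into classically secure one-way functions. You should drop the optional ``direct route'' remark, because unpredictability of $b$ does not make $(b,r)\mapsto\qio(1^\lambda,C_{b,\lambda};r)$ even weakly one-way: an obfuscator that, with probability $2^{-\lambda}$, outputs part of its random tape verbatim is still a valid qiO but makes this map trivially invertible, which is why you must invoke the cited theorem (whose proof goes through distributional inversion) rather than take the sampler map itself as the one-way function.
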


\begin{proof}
For $b\in\{0,1\}$, define the families of distributions $\{D_{b,\lambda}\}_{\lambda \in \NN}$ as follows:
\[
    D_{b,\lambda} :=\qio(1^\lambda,C_{b,\lambda}),
\]
where $C_{b, \lambda}$ is defined \Cref{notation:poq}. By \Cref{lem:qio-I-X-ind}, $D_{0,\lambda}$ and $D_{1,\lambda}$ are computationally indistinguishable to classical \ppt adversaries.
On the other hand, by correctness of the obfuscator $\qio$, these distributions are statistically far, with total variation distance at least $1 - \negl(\lambda)$. By Theorem 2.8 of \cite{Goldreich1990CI}, this implies that there exist classically secure pseudorandom generators, and therefore classically secure one-way functions.
\end{proof}

Since the existence of classically secure one-way functions implies that $\mathsf{P} \neq \mathsf{NP}$, we immediately get the following corollary that highlights the role of (quantum) iO as a hardness ``transformer''.

\begin{corollary}
    Suppose that $\BQP\not\subseteq \mathrm{i.o.}\BPP$ and that there exists an obfuscation scheme for $\calC_{\BQP}$ satisfying qiO security against classical polynomial-time adversaries. Then, $\mathsf{P} \neq \mathsf{NP}$.
\end{corollary}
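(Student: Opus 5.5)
The corollary follows by chaining \Cref{lem:qio-implies-classical-owf} with the classical fact that one-way functions secure against classical \ppt inverters cannot exist if $\mathsf{P}=\mathsf{NP}$. The plan has two steps. First, I invoke \Cref{lem:qio-implies-classical-owf}, whose hypotheses coincide exactly with those of the corollary. This gives a polynomial-time computable $f$ such that every classical \ppt algorithm inverts $f(u)$, for uniform $u$, with only negligible probability. Second, I argue by contradiction: assume $\mathsf{P}=\mathsf{NP}$ and build a deterministic polynomial-time inverter for $f$.

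For the inverter, consider the $\mathsf{NP}$ language
\[
    L_f=\{(1^\lambda,y,w):\exists z\in\{0,1\}^{\lambda-|w|}\text{ with }f(wz)=y\}.
\]
Membership is certified by the completion $z$, and the certificate can be checked in polynomial time because $f$ is efficiently computable. (If the input length of $f$ is a polynomial $\ell(\lambda)$ rather than $\lambda$, I replace $\lambda$ by $\ell(\lambda)$ throughout.) Under $\mathsf{P}=\mathsf{NP}$ there is a polynomial-time decider for $L_f$. Given $y=f(u)$, the inverter runs the standard search-to-decision reduction: it extends the prefix $w$ one bit at a time, keeping whichever bit leaves a valid completion. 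Since $y$ has at least one preimage, namely $u$, the prefix is always extendable, and after $\lambda$ steps the inverter outputs a preimage of $y$ with probability $1$.

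This is a \ppt (indeed deterministic) inverter with non-negligible success, which contradicts the one-wayness of $f$ against classical adversaries. Hence $\mathsf{P}\neq\mathsf{NP}$.

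I expect no step to be a real obstacle, since all the substantive work is already done in \Cref{lem:qio-implies-classical-owf}. The one point to check is that the one-wayness guaranteed by the lemma is against uniform classical \ppt adversaries. This suffices, because the inverter built from the assumed $\mathsf{P}=\mathsf{NP}$ decider is itself uniform.
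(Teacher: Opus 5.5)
Your proposal is correct and takes the same route as the paper, which gets the corollary in one line by combining \Cref{lem:qio-implies-classical-owf} with the standard fact that one-way functions secure against classical \ppt inverters imply $\mathsf{P}\neq\mathsf{NP}$. Your search-to-decision inverter is a correct explicit proof of that standard fact. You are also right that the resulting inverter is uniform, so one-wayness against uniform adversaries is enough.
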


Let $\{f_{\lambda}\}_{\lambda \in \NN}$ be a one-way function family secure against classical \ppt algorithms, where $f_\lambda$ has input length $\lambda$. Let $t = \poly(\lambda)$. We will write $f : = f_t$ as shorthand. We describe our construction \Cref{fig:poq-public}.

\begin{figure}[H]
\begin{boxedalgo}{\linewidth}
\label{fig:public-poq}
{\centering\bfseries Publicly verifiable Proof of Quantumness\par}
\medskip
\begin{itemize}
    \item $\Verify_1(1^\lambda;r)$:
    \begin{itemize}
        \item Parse $r$ as $\vecb := (b_1,\ldots,b_t)\in\{0,1\}^t$ followed by random tapes $r_1,\ldots,r_t\in\{0,1\}^{\poly(\lambda)}$.
        \item For all $i \in [t]$, compute
        \[
            \widehat C_i:=
            \qio(1^\lambda,C_{b_i,\lambda};r_i)
            \qquad(i\in[t]).
        \]
        \item Compute $y=f(\vecb)$ and output
        \[
            \chall=(\widehat C_1,\ldots,\widehat C_t,y).
        \]
    \end{itemize}

    \item $\Prove(1^\lambda,\chall)$:
    \begin{itemize}
        \item Parse the challenge $\chall = (\widehat{C}_1, \ldots, \widehat{C}_t, y)$.
        \item Evaluate each $\widehat C_i$ on $\ket{0}$, measure its logical output
        qubit in the computational basis, and output the resulting string
        $\resp=(b'_1,\ldots,b'_t)$.
    \end{itemize}

    \item $\Verify(1^\lambda,\chall,\resp)$:
    \begin{itemize}
        \item Parse the challenge as
        $\chall = (\widehat C_1,\ldots,\widehat C_t,y)$ and the response as
        $\resp\in\{0,1\}^t$.
        \item Accept if $f(\resp)=y$ and reject otherwise.
    \end{itemize}
\end{itemize}
\smallskip
\end{boxedalgo}
\par\medskip
\caption{}
\label{fig:poq-public}
\end{figure}

\begin{theorem}\label{thm:public-poq}
Suppose that $\BQP\not\subseteq\mathrm{i.o.}\BPP$, and that there exists an obfuscation scheme for $\calC_{\BQP}$ satisfying qiO security against classical polynomial-time adversaries. Then \Cref{fig:public-poq} is a secure publicly verifiable two-message proof of quantumness.
\end{theorem}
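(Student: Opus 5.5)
We need three things: completeness, public verifiability, and classical soundness. The one-way function $f=f_t$ comes from \Cref{lem:qio-implies-classical-owf}; it is a classically secure OWF on $t$-bit inputs. Since $t=\poly(\lambda)$ with $t\ge\lambda$ (taking $t=\lambda$ is fine), security in the parameter $t$ gives negligibility in $\lambda$.

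\textbf{Completeness.} The honest prover recovers each $b_i$ except with negligible probability, exactly as in the proof of \Cref{thm:qio-implies-poq}. This uses correctness of $\qio$ and the fact that $C_{b,\lambda}$ maps $\ket{0}$ to $\ket{b}$. A union bound over $t=\poly(\lambda)$ coordinates gives $\resp=\vecb$ with probability $1-\negl(\lambda)$, and then $f(\resp)=y$.

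\textbf{Public verifiability.} This is immediate: acceptance is the deterministic check $f(\resp)=y$, computed from $(\chall,\resp)$ alone.

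\textbf{Classical soundness.} We use hybrids $H_0,\ldots,H_t$ that mirror the previous proof, except that $y=f(\vecb)$ stays in the challenge throughout. In $H_i$, coordinates $j\le i$ are fresh obfuscations of $C_{0,\lambda}$, and coordinates $j>i$ are obfuscations of $C_{b_j,\lambda}$. In every hybrid the success event is $f(\calA(\chall))=y$.

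To bound $|\Pr[H_{i-1}]-\Pr[H_i]|$, we build a classical distinguisher $\calB$ for \Cref{lem:qio-I-X-ind}. Given a sample $\widehat C\gets\qio(1^\lambda,C_{\beta,\lambda})$ for an unknown $\beta$, $\calB$ samples $\vecb$ with $b_i$ uniform. If $b_i=0$, the two hybrids coincide, so $\calB$ simply simulates. If $b_i=1$, $\calB$ places $\widehat C$ in coordinate $i$ and generates the other coordinates itself. This is possible because it knows $\vecb$ and can run $\qio$ on the publicly constructible $C_{0,\lambda}$ and $C_{b_j,\lambda}$. It computes $y=f(\vecb)$, runs $\calA$, and outputs $1$ iff $f(\resp)=y$. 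All these steps are classical and polynomial-time, since $f$ is efficiently computable.

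When $b_i=1$, the case $\beta=1$ simulates $H_{i-1}$ and the case $\beta=0$ simulates $H_i$. So $\calB$'s advantage equals half the hybrid gap, which is negligible by \Cref{lem:qio-I-X-ind}. Summing over $t=\poly(\lambda)$ steps, $|\Pr[H_0]-\Pr[H_t]|\le\negl(\lambda)$.

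In $H_t$ the circuits are obfuscations of $C_{0,\lambda}$, independent of $\vecb$, and the only dependence on $\vecb$ is through $y$. So an accepting $\calA$ in $H_t$ yields an inverter: given $y=f(\vecb)$ for uniform $\vecb$, sample $t$ obfuscations of $C_{0,\lambda}$, run $\calA$, and output its response. This inverts $f$ with the same probability, which is negligible by one-wayness. Hence $\Pr[H_0]$, the real soundness error, is negligible.

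The main point needing care is the hybrid step. The reduction must generate $y$ consistently, which requires knowing all of $\vecb$, including $b_i$. The fix is the case split on $b_i$ described above: the embedded sample is used only when $b_i=1$. This avoids needing anything about $\qio(C_{b_i})$ jointly with $b_i$ beyond \Cref{lem:qio-I-X-ind} itself.
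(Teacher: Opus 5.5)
Your proof is correct and follows essentially the paper's argument. You use coordinate-wise hybrids toward obfuscations of $C_{0,\lambda}$ with $y=f(\vecb)$ held fixed, a reduction to \Cref{lem:qio-I-X-ind} that samples all of $\vecb$ itself (so it can compute $y$ and check $f(\resp)=y$), and a final reduction to the one-wayness of $f$.

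The only real difference is the hybrid decomposition. The paper uses two hybrids per coordinate: first $C_{b_{i+1},\lambda}\to C_{1,\lambda}$, which changes anything only when $b_{i+1}=0$, and then $C_{1,\lambda}\to C_{0,\lambda}$. Your single case split on $b_i$ does the same job in one step. One small slip: your distinguisher's advantage is exactly the unconditional gap $|\Pr[H_{i-1}]-\Pr[H_i]|$, not half of it, which is harmless. Your remark that $t$ should be polynomially related to $\lambda$, so that the security of $f_t$ is meaningful in $\lambda$, makes explicit a point the paper leaves implicit.
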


\begin{proof}
Completeness holds by the correctness of the qiO scheme.

We prove soundness using a slightly more careful hybrid argument compared to the proof of \Cref{thm:qio-implies-poq}.
For $i= 0, 1, \ldots, t-1$, we define the following hybrids, which differ only in the circuit at index $i+1$:
\begin{itemize}
    \item Hybrid $H_{2i}$:
        \begin{itemize}
        \item Sample a uniformly random $\vecb \gets \{0,1\}^t$ and compute $y := f(\vecb)$.
        \item Compute the obfuscated circuits $\hat{C}_1, \ldots, \hat{C}_t$ as follows:
        \begin{align*}
            \widehat{C}_j &\gets \qio(1^\lambda, C_{0, \lambda}) &\text{for all }j \le i\\
            \hat{C}_{i+1} &\gets \qio(1^\lambda, C_{b_{i+1}, \lambda})\\
            \hat{C}_j &\gets \qio(1^\lambda, C_{b_j, \lambda}) &\text{for all }j > i+1
        \end{align*}
        \item Given a response $\vecb'$, the prover wins if $f(\vecb') = y$.
        \end{itemize}
    \item Hybrid $H_{2i+1}$:
        \begin{itemize}
        \item Sample a uniformly random $\vecb \gets \{0,1\}^t$ and compute $y := f(\vecb)$.
        \item Compute the obfuscated circuits $\hat{C}_1, \ldots, \hat{C}_t$ as follows:
        \begin{align*}
            \widehat{C}_j &\gets \qio(1^\lambda, C_{0, \lambda}) &\text{for all }j \le i\\
            \hat{C}_{i+1} &\gets \qio(1^\lambda, C_{1, \lambda})\\
            \hat{C}_j &\gets \qio(1^\lambda, C_{b_j, \lambda}) &\text{for all }j > i+1
        \end{align*}
        \item Given a response $\vecb'$, the prover wins if $f(\vecb') = y$.
        \end{itemize}
\end{itemize}
Hybrid $2t$ is defined as having sampled $\vecb$ and $y$ as every other hybrid, and all circuits being obfuscations of $C_{0,\lambda}$.
For a classical \ppt adversary $\calA$, let $p_n$ be its success probability for hybrid $n$.

We claim that $|p_{2i} - p_{2i+1}|\le \negl(\lambda)$ for all $i = 0, 1, \ldots, t-1$. Suppose for contradiction there is a \ppt distinguisher. Then, we have a distinguisher that contradicts \Cref{lem:qio-I-X-ind}, via the following reduction: given input $\hat{C}$, sample random $\vecb \in \{0,1\}^t$, compute $y := f(\vecb)$, and compute the challenge circuits $\hat{C}_j$ for $j \neq i+1$ to be consistent with both hybrids $H_{2i}$ and $H_{2i+1}$. At the $i+1$th index, set $\hat{C}_{i+1}$ as follows:
\begin{align*}
    \hat{C}_{i+1} &\gets
    \begin{cases}
    \hat{C} &\text{if } b_{i+1} = 0,\\
    \qio(1^\lambda, C_{1, \lambda}) &\text{if } b_{i+1} = 1.
    \end{cases}
\end{align*}
When the challenge $\hat{C}$ obfuscates $C_{0, \lambda}$, this exactly simulates $H_{2j}$, and when the challenge obfuscates $C_{1, \lambda}$, this exactly simulates $H_{2j+1}$. Therefore, by \Cref{lem:qio-I-X-ind}, we have that $|p_{2i} - p_{2i+1}|\le \negl(\lambda)$.

A direct reduction and application of \Cref{lem:qio-I-X-ind} also shows that $|p_{2j+1} - p_{2j+2}|\le \negl(\lambda)$. By the triangle inequality, and since we have only polynomially many hybrids, this means that $|p_0 - p_{2t}| \le \negl(\lambda)$. In Hybrid $H_{2t}$, the obfuscated circuits in the challenge are independent of the one-way function challenge, so by the one-way security of $f$ against classical adversaries, $p_{2t} \le \negl(\lambda)$. This completes the proof.
\end{proof}

\section{Classical Verification of $\BQP$ computation}
\begin{notation}\label{notation:cvqc}
    We use the following notation throughout this section.
    \begin{itemize}
        \item Let $\rows, \cols \in \NN$ be efficiently computable from $\lambda$, with $\lambda \le \rows \le \poly(\lambda)$ and $2 \le \cols \le \poly(\lambda)$.
        \item Let $C$ be the $\BQP$-circuit being verified. Without loss of generality, we assume that $C$ has been amplified, so that it produces the desired output on $\ket{0}$ with negligible trace distance. We also assume without loss of generality that it returns a $\ket{0}$ ancilla workspace to itself, i.e., it is a tidy computation. Therefore $C \in \calC_{\BQP}$.
        \item For $b \in \{0,1\}$, let $D_b$ be a circuit implementing the channel $\sfX^b(\cdot)$, with the same number of input, ancilla, and output wires as $C$. Further, let $D_b$ be padded to have the same gate count as $\sfX \circ C$ and $\sfI \circ C$. Choose the descriptions and padding of $D_b$ canonically from $b$ and this common shape alone.
        \item Define $\{g_{\lambda}\}_{\lambda \in \NN}$ to be a family of quantum-secure one-way functions.
    \end{itemize}
\end{notation}

\subsection{Private verification assuming qiO}
\label{sec:private-bqp-verification}

\begin{figure}
\begin{boxedalgo}{\linewidth}
\label{fig:private-two-codeword}
{\centering\bfseries Privately verifiable Classical Verification of $\BQP$\par}
\medskip
\begin{itemize}
    \item $\Gen(1^\lambda,C)$:
    \begin{itemize}
        \item Sample independent uniform mask bits $\vecb=(b_{i,j})_{i\in[\rows],\,j\in[\cols]}$.
        \item Independently sample planted locations $j_i^*\gets[\cols]$ for every row $i\in[\rows]$, and let $J^*=\{(i,j_i^*):i\in[\rows]\}$.
        \item For every $i\in[\rows]$ and $j\in[\cols]$, independently compute
        \[
            \widehat C_{i,j}\gets
            \begin{cases}
                \qio(1^\lambda,D_{b_{i,j}}), & j\neq j_i^*,\\
                \qio(1^\lambda,\sfX^{b_{i,j}}\circ C), & j=j_i^*.
            \end{cases}
        \]
        \item Output the challenge $\chall=(\widehat C_{i,j})_{i,j}$ and retain the private state $\mathsf{state}=(J^*,\vecb)$.
    \end{itemize}

    \item $\Prove(1^\lambda,\chall)$:
    \begin{itemize}
        \item Parse $\chall=(\widehat C_{i,j})_{i,j}$.
        \item Evaluate each $\widehat C_{i,j}$ on $\ket{0}$ and measure its output in the computational basis, obtaining $b'_{i,j}$.
        \item Return $\vecb'=(b'_{i,j})_{i,j}$.
    \end{itemize}

    \item $\Verify(1^\lambda,C,\chall,\mathsf{state},\vecb')$:
    \begin{itemize}
        \item Parse $\mathsf{state}=(J^*,\vecb)$ and $\vecb'=(b'_{i,j})_{i,j}\in\{0,1\}^{\rows\cdot\cols}$.
        \item Check that $b'_{i,j} = b_{i,j}$ for all test location $(i,j) \notin J^*$, and reject otherwise.
        \item If the bits $b'_{i,j}\oplus b_{i,j}$ for all the planted locations $(i,j) \in J^*$ are all equal, output their common value; otherwise reject.
    \end{itemize}
\end{itemize}
\smallskip
\end{boxedalgo}
\par\medskip
\caption{}
\label{fig:private-cvqc}
\end{figure}

\begin{theorem}
\label{thm:private-bqp-verification}
Suppose there is an obfuscation scheme for $\calC_{\BQP}$ satisfying qiO security against \qpt adversaries. Then, \Cref{fig:private-two-codeword} is a privately verifiable protocol for classical verification of $\BQP$ that is computationally blind up to circuit shape.
\end{theorem}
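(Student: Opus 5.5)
The proof has three parts: completeness, wrong-output soundness, and blindness. The soundness and blindness parts share a single hybrid argument.

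\textbf{Completeness.} Every entry is an obfuscation of a circuit in $\calC_{\BQP}$. Test entries obfuscate $D_{b_{i,j}}$, and planted entries obfuscate $\sfX^{b_{i,j}}\circ C$, whose channel is negligibly close to $\sfX^{b_{i,j}\oplus a_C}$. By qiO correctness and a union bound over the $\rows\cols=\poly(\lambda)$ entries, each $\widehat C_{i,j}$ is negligibly close in diamond distance to $\sfX^{s_{i,j}}$ except with negligible probability. Here $s_{i,j}=b_{i,j}$ off $J^*$ and $s_{i,j}=b_{i,j}\oplus a_C$ on $J^*$. Running each circuit on $\ket{0}$ therefore returns $s_{i,j}$ with probability $1-\negl(\lambda)$, and a second union bound gives $\vecb'=\vecs$. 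On input $\vecs$ every test check passes and every unmasked planted bit equals $a_C$, so the verifier outputs $a_C$.

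\textbf{Soundness and blindness.} Fix a nonuniform \qpt prover $\calA$ and a family $\{C_\lambda\}$. I would define hybrids $H_0,\dots,H_{\rows}$, where $H_k$ replaces the planted entries of rows $1,\dots,k$ by fresh obfuscations $\qio(1^\lambda,D_{s_{i,j_i^*}})$. The masks, planted locations and verification rule stay unchanged.
\begin{itemize}
\item Between $H_{k-1}$ and $H_k$ only one entry changes. The two circuits there are $\sfX^{b}\circ C$ and $D_{b\oplus a_C}$. They have the same shape, since $D$ was padded to the shape of $\sfX\circ C$ and $\sfI\circ C$, and their channels are negligibly close.
\item I would bound the distance between consecutive hybrids with a reduction. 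It receives an obfuscation $\widehat C$ of one of these two circuits. It hardwires $C_\lambda$ and $a_\lambda$ as nonuniform advice, samples all remaining randomness itself, builds the other entries, runs $\calA$, and evaluates $\Verify$.
\item Verification only needs $C$, $\vecb$ and $J^*$, which the reduction knows. Its output bit, ``verifier output equals $1-a_C$'', is therefore an efficient distinguisher.
\item One point needs care. The reduction's input circuit depends on the row-$k$ values $b_{k,j_k^*}$ and $j_k^*$. I would handle this the way the paper does in the proof of \Cref{thm:public-poq}. The reduction guesses $b$ (or conditions on it), so that the qiO challenge pair is the fixed pair $(\sfX^{b}\circ C_\lambda,\,D_{b\oplus a_\lambda})$. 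Alternatively, it fixes $b$ and averages over both values, losing at most a factor $2$.
\item Quantum nonuniform qiO security then gives $|\Pr_{H_{k-1}}-\Pr_{H_k}|\le\negl(\lambda)$. Since $\rows\le\poly(\lambda)$, summing over $k$ gives $|\Pr_{H_0}-\Pr_{H_{\rows}}|\le\negl(\lambda)$.
\end{itemize}

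\textbf{The final hybrid.} In $H_{\rows}$ the challenge is $(\qio(1^\lambda,D_{s_{i,j}}))_{i,j}$. For each fixed $J^*$, the map $\vecb\mapsto\vecs$ is a bijection, so $\vecs$ is uniform and the challenge is independent of $J^*$. For the verifier to output $1-a_C$, the response must equal $s_{i,j}$ at every test location and $s_{i,j}\oplus1$ at every planted location. So within each row, $\vecb'$ must differ from $\vecs$ exactly at $j_i^*$. Even an unbounded prover that is given $\vecs$ must then guess all of $J^*$, which succeeds with probability at most $\cols^{-\rows}\le2^{-\lambda}$. This proves wrong-output soundness.

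\textbf{Blindness.} The simulator $\mathsf{Sim}(1^\lambda,\shape(C))$ samples a uniform grid $\vecs$ and outputs $(\qio(1^\lambda,D_{s_{i,j}}))$. Because $D_b$ is chosen canonically from $b$ and the shape, this output has exactly the distribution of the $H_{\rows}$ challenge. The same row-by-row hybrids, now with the distinguisher's output in place of the verifier's, show that the real challenge is indistinguishable from it. I expect the main obstacle to be making the hybrid reduction fit the qiO definition, which quantifies over fixed circuit families. Guessing or fixing $b_{k,j_k^*}$, and putting $C_\lambda$ and $a_\lambda$ into the advice, is what resolves it.
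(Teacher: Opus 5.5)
Your proposal is correct and matches the paper's proof: the same row-by-row hybrids replace each planted entry by an obfuscation of $D_{s_{i,j_i^*}}$, the final hybrid gives the same $\cols^{-\rows}$ guessing bound, and blindness uses the same simulator that outputs $H_{\rows}$'s challenge distribution. Your per-step reduction spells out what the paper compresses into ``by qiO security'': you hardwire $C_\lambda$ and $a_\lambda$, then fix $b_{k,j_k^*}$ and average over its two values. That averaging variant, rather than the one-sided trick from the proof of \Cref{thm:public-poq}, is the one that works here, because consecutive hybrids differ for both values of that bit.
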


\begin{proof}
Let $a$ be the output of $C$ on $\ket{0}$. Let $\mathbf{1}_{J^*}$ be the indicator vector for the set of planted locations $J^*$, indexed by $i \in [\rows], j \in [\cols]$. We will write $\vecs := \vecb\oplus a \mathbf{1}_{J^*}$.

We first show that the protocol is correct. at each test location $(i,j)\notin J^*$, $D_{b_{i,j}}$ implements $\sfX^{s_{i,j}}$ exactly, and at each planted location $(i,j) \in J^*$,
\begin{align*}
    \left\|\sfX^{b_{i,j}}\circ C-\sfX^{s_{i,j}}\right\|_\diamond
       \leq\negl(\lambda).
\end{align*}
Thus every circuit submitted to $\qio$ lies in $\calC_{\BQP}$, and by the correctness of $\qio$, an honest prover measures $\vecb' = \vecs$ except with negligible probability. All test answers $b_{i,j}'$ therefore match the $b_{i,j}$ values, and all unmasked planted answers equal $a$, proving completeness.

We now prove soundness via a hybrid argument. For $i=0,\ldots,\rows$, define $H_i$ by replacing the planted circuit in each of the first $i$ rows by an independently generated obfuscation of $D_{s_{k,j_k^*}}$, for $k\leq i$. Test locations and the secret verification state remain as in the real experiment. Let $p_i$ be the probability that the private verifier outputs $1-a$ in $H_i$. By qiO security, it must hold that $|p_i - p_{i+1}| \le \negl(\lambda)$ for all $i=0,\ldots,\rows-1$, since $D_{s_{k,j_k^*}}$ and $\sfX^{b_{i,j}}\circ C$ are negligibly close in diamond distance. Since $\rows \le \poly(\lambda)$, a triangle inequality means that $|p_0-p_{\rows}|$ is negligible.

In $H_{\rows}$, every challenge entry is a fresh obfuscation of $D_{s_{i,j}}$, where $\vecs$ is uniform and independent of $J^*$. This means that the challenge, and therefore the prover's response, is is independent of $J^*$.
If the verifier outputs $1-a$, this means that the prover produced the output $\vecb' := \vecb \oplus (1-a)\mathbf{1}_{J^*} = \vecs \oplus \mathbf{1}_{J^*}$, which can happen with probability at most $1 / \cols^{\rows}$, therefore $p_{\rows} \le 1 / \cols^{\rows} \le 2^{-\lambda}$. This proves soundness.

The same hybrids establish blindness. Given only $1^\lambda$ and
$\shape(C)$, a classical simulator constructs the canonical padded
circuits $D_0,D_1$, samples independent uniform bits $s_{i,j}$, and outputs
\[
    \bigl(\qio(1^\lambda,D_{s_{i,j}})\bigr)_{i,j},
\]
using independent obfuscation randomness at every entry. This is exactly
the challenge distribution in $H_{\rows}$: the uniform masks make
$\vecs=\vecb\oplus a\mathbf{1}_{J^*}$ uniform, regardless of $C$ or its
answer $a$. The qiO hybrid argument makes the real challenge
computationally indistinguishable from this simulated distribution.
Since the prover receives only the challenge and no later verifier
message, any quantum polynomial-time processing of that challenge also
has an indistinguishable simulated view. Thus the protocol hides the
delegated circuit, including its hardcoded instance and answer, up to
circuit shape. Padding to common public gate-count and width bounds can
make the leakage depend only on those bounds.
\end{proof}

\subsection{Public verification assuming qiO and quantum-secure OWFs}
\label{sec:public-bqp-verification}

For this protocol, set $\rows=\lambda$ and $\cols=2$, and write
$g=g_\lambda:\{0,1\}^{\lambda}\to\{0,1\}^{m(\lambda)}$ for the
quantum-secure one-way function, where $m$ is polynomially bounded.

The challenge contains the same grid as in \Cref{fig:private-cvqc}, together with the parity of the mask bits in each row and one-way images of the first columns of the two grids accepted by the private verifier. The row parities determine the claimed answer, and the one-way images let a public verifier check that answer.

\begin{figure}
\begin{boxedalgo}{\linewidth}
{\centering\bfseries Publicly verifiable Classical Verification of $\BQP$\par}
\medskip
\begin{itemize}
    \item $\Gen(1^\lambda,C)$:
    \begin{itemize}
        \item Sample independent uniform mask bits $\vecb=(b_{i,j})_{i\in[\rows],\,j\in[2]}$.
        \item Independently sample planted locations $j_i^*\gets[2]$ for every row $i\in[\rows]$, and let $J^*=\{(i,j_i^*):i\in[\rows]\}$.
        \item For every $i\in[\rows]$ and $j\in[2]$, independently compute
        \[
            \widehat C_{i,j}\gets
            \begin{cases}
                \qio(1^\lambda,D_{b_{i,j}}), & j\neq j_i^*,\\
                \qio(1^\lambda,\sfX^{b_{i,j}}\circ C), & j=j_i^*.
            \end{cases}
        \]
        \item Let $\vecs^0:=\vecb$ and $\vecs^1:=\vecb\oplus\mathbf{1}_{J^*}$, where $\mathbf{1}_{J^*}$ is the indicator vector for the planted locations. Set $d_i:=b_{i,1}\oplus b_{i,2}$ for every row, and compute
        \[
            y_0:=g\bigl((s^0_{i,1})_{i\in[\rows]}\bigr),
            \qquad
            y_1:=g\bigl((s^1_{i,1})_{i\in[\rows]}\bigr).
        \]
        \item Output the challenge $\chall=((\widehat C_{i,j})_{i,j},(d_i)_{i\in[\rows]},y_0,y_1)$.
    \end{itemize}

    \item $\Prove(1^\lambda,\chall)$:
    \begin{itemize}
        \item Parse $\chall=((\widehat C_{i,j})_{i,j},(d_i)_{i\in[\rows]},y_0,y_1)$.
        \item Evaluate each $\widehat C_{i,j}$ on $\ket{0}$ and measure its output in the computational basis, obtaining $b'_{i,j}$.
        \item Return $\vecb'=(b'_{i,j})_{i,j}$.
    \end{itemize}

    \item $\Verify(1^\lambda,\chall,\vecb')$:
    \begin{itemize}
        \item Parse $\chall$ as above and $\vecb'=(b'_{i,j})_{i,j}\in\{0,1\}^{\rows\cdot2}$.
        \item If the bits $b'_{i,1}\oplus b'_{i,2}\oplus d_i$ for all rows $i\in[\rows]$ are not all equal, reject. Otherwise let $b$ be their common value.
        \item If $g\bigl((b'_{i,1})_{i\in[\rows]}\bigr)=y_b$, output $b$; otherwise reject.
    \end{itemize}
\end{itemize}
\smallskip
\end{boxedalgo}
\par\medskip
\caption{}
\label{fig:public-cvqc}
\label{con:two-codeword-cvqc}
\end{figure}

\begin{theorem}\label{thm:public-bqp-verification}
Suppose there is an obfuscation scheme for $\calC_{\BQP}$ satisfying qiO security against nonuniform \qpt adversaries, and that $g$ is a one-way function secure against such adversaries. Then \Cref{fig:public-cvqc} is a publicly verifiable protocol for classical verification of circuits in $\calC_{\BQP}$, with computational blindness up to circuit shape and answer.
\end{theorem}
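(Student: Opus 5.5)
Completeness comes first. Let $a$ be the semantic bit of $C$ and set $\vecs:=\vecs^{a}$. Exactly as in the proof of \Cref{thm:private-bqp-verification}, every circuit handed to $\qio$ lies in $\calC_{\BQP}$ and is negligibly close to $\sfX^{s_{i,j}}$. Correctness of $\qio$ together with a union bound over the $2\lambda$ entries then gives $\vecb'=\vecs$ except with negligible probability.

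In each row exactly one entry is planted, so
\[
s_{i,1}\oplus s_{i,2}\oplus d_i=a .
\]
The parity checks therefore all agree on $b=a$. The first column of $\vecb'$ is $(s^{a}_{i,1})_i$, whose image is $y_a$ by construction. Hence the verifier outputs $a$. Public verifiability is immediate, since $\Verify$ reads only $\chall$ and $\vecb'$.

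For soundness and blindness I would reuse the row-by-row hybrids $H_0,\dots,H_\lambda$ of the private proof. In $H_k$ the planted circuit in each of the first $k$ rows is replaced by a fresh obfuscation of $D_{s_{i,j_i^*}}$. Throughout, $(d_i)_i$, $y_0$, $y_1$ are computed from the real $(\vecb,J^*)$.

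To bound $|p_k-p_{k+1}|$, where $p_k$ is the probability that the verifier outputs $1-a$, I would use nonuniform qiO security. The reduction must embed one challenge obfuscation while producing all other grid entries, the $d_i$, and $y_0,y_1$. These depend on masks and planted locations that are correlated with the embedded circuit. I would handle this by averaging: fix the row-$(k+1)$ randomness $(b_{k+1,\cdot},j^*_{k+1})$ and all other rows' masks, locations and obfuscations to the values maximizing the gap. Hardwire them as nonuniform advice, including the two circuit families $\sfX^{b}\circ C$ and $D_{s}$, which then have fixed same shape and negligible diamond distance.

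The only subtlety is that the best fixing may depend on $\lambda$. That is exactly why nonuniform adversaries and nonuniformly specified circuit families appear in \Cref{def:qio}. I expect this to be the main technical point to state carefully. Wrong-output acceptance is decidable by the reduction because it knows all hardwired data and $a$.

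In $H_\lambda$ every entry obfuscates $D_{s_{i,j}}$ with $\vecs=\vecs^a$ uniform. Let $u_i:=\mathbf 1[j_i^*=1]$; then $u$ is uniform and independent of $\vecs$. Moreover:
\begin{itemize}
\item $d_i=s_{i,1}\oplus s_{i,2}\oplus a$ depends only on $\vecs$;
\item $y_a$ depends only on $\vecs$;
\item $s^{1-a}_{i,1}=s_{i,1}\oplus u_i$.
\end{itemize}
So $z:=(s^{1-a}_{i,1})_i$ is uniform and independent of everything in the view except $y_{1-a}=g(z)$.

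Outputting $1-a$ requires $g((b'_{i,1})_i)=y_{1-a}$. An inverter for $g$ therefore works as follows: given $y=g(z)$ for uniform $z$, it samples $\vecs$ itself, builds the $H_\lambda$ challenge with $y_{1-a}:=y$, and runs $\calA$. It is nonuniform and knows $a$. By quantum one-wayness $p_\lambda$ is negligible, hence so is $p_0$.

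Blindness follows from the same hybrids. $\mathsf{Sim}_{\mathrm{pub}}(1^\lambda,\shape(C),a)$ samples $\vecs$ and $z$ uniformly and outputs
\[
\bigl((\qio(D_{s_{i,j}}))_{i,j},\ (s_{i,1}\oplus s_{i,2}\oplus a)_i,\ y_a,\ y_{1-a}\bigr),
\]
with $y_a=g((s_{i,1})_i)$ and $y_{1-a}=g(z)$. This is exactly the distribution of $H_\lambda$, and the real challenge is indistinguishable from it by the hybrids above, now applied to an arbitrary distinguisher rather than the wrong-output event.
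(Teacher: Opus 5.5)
Your proposal is correct and follows essentially the same route as the paper. You use the same completeness argument and the same row-by-row qiO hybrids, handling the correlated public data $(d_i)_i,y_0,y_1$ by hardwiring it into a nonuniform distinguisher. In $H_\lambda$ you use the same observation that $(s^{1-a}_{i,1})_i=(s_{i,1}\oplus\mathbf 1[j_i^*=1])_i$ is uniform and independent of everything except $y_{1-a}$, which gives both the one-wayness reduction and the blindness simulator. Your averaging argument just spells out the paper's one-line remark, which it formalizes later as \Cref{lem:qio-classical-auxiliary-input}.
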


\begin{proof}
Fix $C=C_\lambda\in\calC_{\BQP}$ and let $a$ be the output of $C$ on $\ket{0}$. As in \Cref{sec:private-bqp-verification}, let $\mathbf{1}_{J^*}$ be the indicator vector for the planted locations, and write $\vecs:=\vecs^a = \vecb\oplus a\mathbf{1}_{J^*}$.

We first show that the protocol is correct. At each test location $(i,j)\notin J^*$, $D_{b_{i,j}}$ implements $\sfX^{s_{i,j}}$ exactly, and at each planted location $(i,j)\in J^*$, we have that $\left\|\sfX^{b_{i,j}}\circ C-\sfX^{s_{i,j}}\right\|_\diamond \leq\negl(\lambda)$. By correctness of $\qio$, an honest prover returns $\vecb'=\vecs$ except with negligible probability. Since each row has exactly one planted location, $s_{i,1}\oplus s_{i,2}\oplus d_i=a$ for every row. The first column also has image $y_a$, so the verifier does not reject, and outputs $a$.

We now prove soundness via a hybrid argument. As in the proof of \Cref{thm:private-bqp-verification}, we define $H_i$ for $i=0,\ldots,\rows$, by replacing the planted circuit in each of the first $i$ rows by an independently generated obfuscation of $D_{s_{k,j_k^*}}$, for $k\leq i$. Let $p_i$ be the probability that the verifier outputs $1-a$ in $H_i$. The qiO security guarantee against nonuniform \qpt adversaries implies that $|p_0 - p_{\rows}| \le \negl(\lambda)$: the correlated classical public data can be fixed and hardwired into a distinguisher.

To complete the proof, we show that $p_{\rows} \le \negl(\lambda)$. In $H_{\rows}$, every challenge circuit is a fresh obfuscation of $D_{s_{i,j}}$, where $\vecs$ is uniform and independent of $J^*$. Since $\cols=2$, each bit $\mathbf{1}_{\{j_i^*=1\}}$ is an independent uniform bit. Consequently, the bits
\[
    s^{1-a}_{i,1}=s_{i,1}\oplus\mathbf{1}_{\{j_i^*=1\}}
\]
form a uniform $\lambda$-bit string independent of the entire grid $\vecs$. The published parities satisfy $d_i=s_{i,1}\oplus s_{i,2}\oplus a$, so they depend only on $\vecs$ and $a$. Therefore, the challenge depends on the $(s_{i,1}^{1-a})_{i \in [\rows]}$ bits only through $y_{1-a}$. Therefore, we can use an adversary that breaks the verification protocol to construct an adversary for the one-way function $g$, by giving a reduction that simulates $H_{\rows}$. The one-way function security of $g$ therefore establishes that $p_{\rows} \le \negl(\lambda)$.

Finally, the same hybrids prove computational blindness up to circuit
shape and answer. Given only $1^\lambda$, $\shape(C)$, and $a$, a classical
simulator samples a uniform grid $\vecs\in\{0,1\}^{\lambda\times2}$ and an
independent uniform string $u\in\{0,1\}^{\lambda}$. It generates independent
obfuscations $\widehat D_{i,j}\gets\qio(1^\lambda,D_{s_{i,j}})$ and sets
\[
    d_i=s_{i,1}\oplus s_{i,2}\oplus a,
    \qquad y_a=g\bigl((s_{i,1})_{i\in[\lambda]}\bigr),
    \qquad y_{1-a}=g(u).
\]
It outputs $((\widehat D_{i,j})_{i,j},(d_i)_i,y_0,y_1)$. The independence
of the opposite first-column string proved above shows that this is
exactly the joint challenge distribution in $H_{\rows}$. Applying the
same qiO hybrids to any quantum polynomial-time distinguisher makes the
real challenge computationally indistinguishable from this simulation.
This proves computational blindness up to circuit shape and semantic
answer. Unlike the private protocol, this protocol cannot
hide the answer: evaluating the two circuits in any row and computing
$b'_{i,1}\oplus b'_{i,2}\oplus d_i$ recovers $a$ with overwhelming
probability. No hardness property of $g$ is needed for this blindness
argument; one-wayness is used for soundness.
\end{proof}

\section{Constructions of Quantum Indistinguishability Obfuscation}
\label{sec:obfuscationprimitives}

In this section we discuss the theoretical foundations for constructing quantum indistinguishability obfuscation. We first consider qiO for ancilla-free unitary circuits and prove a worst-to-average case reduction, which is inspired by an analogous argument for classical reversible circuits in~\cite{Canetti2024Towards}. Then we show that qiO for ancilla-free unitary circuits directly implies qiO for tidy unitary circuits, which provides the foundation of our main applications (see~\Cref{fig:technical-structure}). This argument is inspired by the reversible classical circuit embedding of Appendix A of the full version of \cite{Canetti2024Towards}, which makes the compiled circuit independent of the original circuit on inputs with incorrectly initialized ancillas. To generalize this construction to quantum circuits, we introduce an additional random phase and a coupling argument (see~\Cref{fig:tidy-ancilla-free-compilation}).

In \Cref{sec:qioassumptions}, we introduce the main relevant definitions and assumptions. We then show the worst-to-average case reduction for qiO for ancilla-free unitary circuits in \Cref{sec:qio-worst-to-average-reduction}. Finally in \Cref{sec:tidy-qio-lifting} we prove that qiO for ancilla-free unitary circuits implies qiO for tidy unitary circuits.

\subsection{Definitions and Assumptions}
\label{sec:qioassumptions}
Recall that an ancilla-free unitary circuit $C$ has $n$ input and output wires and implements an $n$-qubit unitary $U_C$. A tidy unitary circuit (see \Cref{sec:obfuscationprimitives}) has an $n$-qubit logical register and an $a$-qubit ancilla register, and returns the ancilla register to $\ket{0^a}$, up to negligible error, on every logical input when the ancillas are initialized to $\ket{0^a}$. We consider qiO for these two circuit families.

For completeness, we include the following definitions, which specialize
\Cref{def:qco,def:qio} to the two circuit classes.
For any circuit $C$, write $\Phi_C$ for its induced logical channel. All
circuit families below have size polynomial in the security parameter
$\lambda$; their gate counts and register sizes may depend on $\lambda$.

\begin{definition}[Ancilla-free quantum indistinguishability obfuscation]
\label{def:ancilla-free-qio}
Let $\calF=\{\calF_\lambda\}_\lambda$ be a class of ancilla-free unitary
circuits. A circuit $C\in\calF_\lambda$ acts on $n$ input qubits and
induces the channel
\[
    \Phi_C(\rho)=U_C\rho U_C^\dagger.
\]
A globally correct, robust qiO scheme for $\calF$ is a classical
probabilistic algorithm $\Obf_{\ancfree}$ that, on input $(1^\lambda,C)$,
runs in time polynomial in $\lambda$ and $|C|$ and outputs a classical
description
\[
    \widehat C\gets\Obf_{\ancfree}(1^\lambda,C)
\]
of a polynomial-size quantum circuit with $n$ logical input and $n$ logical
output qubits. It satisfies the following properties.
\begin{enumerate}
    \item \textbf{Global correctness.} There is a negligible function
    $\mu$ such that, for every $\lambda$ and every $C\in\calF_\lambda$,
    \[
        \Pr_{\widehat C\gets\Obf_{\ancfree}(1^\lambda,C)}
        \left[
            \|\Phi_{\widehat C}-\Phi_C\|_\diamond\leq\mu(\lambda)
        \right]
        \geq 1-\mu(\lambda).
    \]

    \item \textbf{Robust indistinguishability.} Let
    $\{C_{0,\lambda}\}_\lambda$ and $\{C_{1,\lambda}\}_\lambda$ be any
    two not necessarily uniformly generated polynomial-size circuit
    families in $\calF$ satisfying
    \[
        \shape(C_{0,\lambda})
        =\shape(C_{1,\lambda})=(s,n,0,n)
    \]
    and, for some negligible function $\varepsilon$,
    \[
        \|\Phi_{C_{0,\lambda}}-\Phi_{C_{1,\lambda}}\|_\diamond
        \leq\varepsilon(\lambda).
    \]
    For every nonuniform quantum polynomial-time distinguisher $\calD$,
    there is a negligible function $\nu$ such that, for every $\lambda$,
    \begin{align*}
        \Bigl|
        &\Pr\!\left[
            \calD(1^\lambda,
                \Obf_{\ancfree}(1^\lambda,C_{0,\lambda}))=1
        \right]\\
        {}-&\Pr\!\left[
            \calD(1^\lambda,
                \Obf_{\ancfree}(1^\lambda,C_{1,\lambda}))=1
        \right]
        \Bigr|\leq\nu(\lambda).
    \end{align*}
    The probabilities are over the randomness of the obfuscator and the
    distinguisher. The channel comparison is on the entire $n$-qubit input
    space.
\end{enumerate}
\end{definition}

\begin{definition}[Tidy-circuit quantum indistinguishability obfuscation]
\label{def:tidy-circuit-qio}
Let $\calT=\{\calT_\lambda\}_\lambda$ be a class of tidy unitary
computations with a fixed negligible tidiness bound $\delta$. A circuit
$C\in\calT_\lambda$ specifies an $n$-qubit logical register, an $a$-qubit
ancilla register, and a full unitary $V_C$ on all $n+a$ qubits. With
$J_{n,a}\ket\psi=\ket\psi\ket{0^a}$, there is an $n$-qubit unitary $U_C$
such that
\[
    \inf_{\theta\in\RR}
    \bigl\|V_CJ_{n,a}-e^{i\theta}J_{n,a}U_C\bigr\|
    \leq\delta(\lambda).
\]
The induced logical channel is
\[
    \Phi_C(\rho)
    =\tr_A\!\left[
        V_C\bigl(\rho\otimes\ketbra{0^a}\bigr)V_C^\dagger
    \right],
\]
where $A$ denotes the ancilla register.

A globally correct, robust qiO scheme for $\calT$ is a classical
probabilistic algorithm $\Obf_{\mathsf{tidy}}$ that, on input
$(1^\lambda,C)$, runs in time polynomial in $\lambda$ and $|C|$ and
outputs a classical description
\[
    \widehat C\gets\Obf_{\mathsf{tidy}}(1^\lambda,C)
\]
of a polynomial-size quantum circuit with $n$ logical input and $n$ logical
output qubits. It satisfies the following properties.
\begin{enumerate}
    \item \textbf{Global correctness.} There is a negligible function
    $\mu$ such that, for every $\lambda$ and every $C\in\calT_\lambda$,
    \[
        \Pr_{\widehat C\gets\Obf_{\mathsf{tidy}}(1^\lambda,C)}
        \left[
            \|\Phi_{\widehat C}-\Phi_C\|_\diamond\leq\mu(\lambda)
        \right]
        \geq 1-\mu(\lambda).
    \]

    \item \textbf{Robust indistinguishability.} Let
    $\{C_{0,\lambda}\}_\lambda$ and $\{C_{1,\lambda}\}_\lambda$ be any
    two not necessarily uniformly generated polynomial-size circuit
    families in $\calT$ satisfying
    \[
        \shape(C_{0,\lambda})
        =\shape(C_{1,\lambda})=(s,n,a,n)
    \]
    and, for some negligible function $\varepsilon$,
    \[
        \|\Phi_{C_{0,\lambda}}-\Phi_{C_{1,\lambda}}\|_\diamond
        \leq\varepsilon(\lambda).
    \]
    For every nonuniform quantum polynomial-time distinguisher $\calD$,
    there is a negligible function $\nu$ such that, for every $\lambda$,
    \begin{align*}
        \Bigl|
        &\Pr\!\left[
            \calD(1^\lambda,
                \Obf_{\mathsf{tidy}}(1^\lambda,C_{0,\lambda}))=1
        \right]\\
        {}-&\Pr\!\left[
            \calD(1^\lambda,
                \Obf_{\mathsf{tidy}}(1^\lambda,C_{1,\lambda}))=1
        \right]
        \Bigr|\leq\nu(\lambda).
    \end{align*}
    The probabilities are over the randomness of the obfuscator and the
    distinguisher.
\end{enumerate}
\end{definition}
Replacing quantum distinguishers by classical distinguishers gives the corresponding classical security notions. In both definitions, the circuit class only describes the input to the obfuscator. Its output may use internal ancillas and need
not be tidy.

Next, we discuss assumptions related to quantum random input obfuscation (RIO), which is a generalization of random input obfuscation for classical reversible circuits considered in~\cite{Canetti2024Towards}. Our definition of quantum RIO is conceptually simpler than the definition of classical RIO in~\cite{Canetti2024Towards}: here quantum RIO is a circuit obfuscator that, given the description of an ancilla-free random quantum circuit as input, outputs a circuit that is computationally indistinguishable from a uniformly random circuit drawn from the ensemble of circuits that implements the same unitary as the input circuit. This can be viewed as a weaker, average-case version of ancilla-free qiO. We show in \Cref{sec:qio-worst-to-average-reduction} that quantum RIO, with additional assumptions discussed below, implies ancilla-free qiO as in \Cref{def:ancilla-free-qio}. This can be viewed as a worst-to-average case reduction for qiO and thus gives evidence supporting the existence of qiO.

The assumptions below posit the existence of a common choice of a fixed finite, efficiently specified, inverse-closed universal gate set containing $I$, $\sfX$, and $\CNOT$, together with the stated computational hardness properties. Whenever these assumptions are combined, they must hold for the same choice of gate set. All circuits and distributions below use this gate set. We use the convention $U_{A|B}=U_B U_A$, where $A|B$ denotes the concatenation of circuit descriptions $A$ and $B$. Write $\calG_n$ for the legal gate placements on $n$ qubits and $\mathsf{Circ}_{n,m}$ for the $m$-gate ancilla-free circuits obtained by choosing each gate from $\calG_n$. Uniform sampling from $\mathsf{Circ}_{n,m}$ chooses these gates independently and uniformly. $|C|$ denotes the gate count of a circuit $C$. For $C\in\mathsf{Circ}_{n,m}$ and $L\geq m$, let $\calE_{C,L}$ be uniform over the $L$-gate circuits implementing the same channel as $C$. Equality of these channels permits a global phase difference between their implementing unitaries. The identity gate ensures that the set sampled by $\calE_{C,L}$ is nonempty. Note that the uniform distribution over $\calE_{C,L}$ may not be efficiently samplable in general.

The first assumption says that a random separator and its inverse can be removed from two independently randomized pieces. It applies even when the surrounding circuits and side information are correlated; the separator alone must be fresh.

\begin{assumption}[Quantum split-circuit pseudorandomness, SCP]
\label{assumption:qSCP}
There are a polynomial-time computable, polynomially bounded integer $s=s(\lambda,n)$ and a polynomially bounded $\mu=\mu(\lambda,n)\geq1$ with the following property. Let $(Q,A,B)$ be any joint polynomial-size ensemble, where $A,B$ are ancilla-free $n$-qubit circuits and $Q$ is arbitrary classical or quantum side information. Independently sample $R\gets\mathsf{Circ}_{n,s}$. For polynomially bounded lengths satisfying
\[
L_A\geq\mu(\lambda,n)(|A|+s+1),\qquad L_B\geq\mu(\lambda,n)(|B|+s+1),
\]
conditionally and independently sample $X\gets\calE_{A|R,L_A}$ and $Y\gets\calE_{R^\dagger|B,L_B}$. Let $Z\gets\calE_{A|B,L_A+L_B}$ be freshly sampled, independently of $R$ given $(Q,A,B)$. Then $(1^\lambda,Q,X|Y)$ and $(1^\lambda,Q,Z)$ are computationally indistinguishable against nonuniform quantum polynomial-time distinguishers. We call lengths satisfying the displayed inequalities SCP-admissible.
\end{assumption}

The second assumption concerns obfuscation of random circuit descriptions. Given a plaintext random circuit input, the obfuscated circuit is indistinguishable from a postprocessed random equivalent circuit. Our definition also allows a correlated public view, which is necessary when a random circuit occurs inside a larger construction.

\begin{definition}[Quantum random input obfuscation, RIO]
\label{def:qRIO}
Let $\Obf,\pi$ be classical probabilistic polynomial-time transformations of ancilla-free circuits over the chosen gate set that preserve the width and implemented channel exactly on every output. Their outputs are ancilla-free, with gate counts depending only on $\lambda,n$ and the input gate count. Let $\xi(\lambda,n,m)\geq m$ be polynomial-time computable and polynomially bounded. The output lengths of $\Obf$ on $m$-gate inputs and $\pi$ on $\xi(\lambda,n,m)$-gate inputs must agree. For a polynomial-size joint ensemble $(V,C)$, where $V$ is a public classical view and $C$ is an ancilla-free circuit, the random-input obfuscation guarantee is
\[
(1^\lambda,V,C,\Obf(1^\lambda,C))\ \approx_c\ (1^\lambda,V,C,\pi(1^\lambda,D)),\qquad D\gets\calE_{C,\xi(\lambda,n,|C|)}.
\]
Here indistinguishability is against nonuniform quantum polynomial-time distinguishers, and $(V,C)$ need not be efficiently samplable. The ideal sample and all algorithmic randomness are fresh and mutually independent conditional on $(V,C)$; the ideal distribution depends on the view only through $C$. The guarantee is \emph{uniform-public-view} at length $m$ if it holds for every such experiment whose marginal distribution of $C$ is uniform over $\mathsf{Circ}_{n,m}$, even when $V$ is correlated with $C$.
\end{definition}

Next we state the assumption which gives the specific form of quantum RIO we use.

\begin{assumption}[Quantum RIO assumption]
\label{assumption:qRIO}
Let $s=s(\lambda,n)$ and $\mu=\mu(\lambda,n)\geq1$ be the parameters from the quantum SCP assumption. There exist $\Obf,\xi,\pi$ as in~\Cref{def:qRIO} with the following properties.

First, they satisfy uniform-public-view quantum RIO at lengths $s$ and $2s+1$. Let $q=q(\lambda,n)$ be the output gate count of $\Obf$ on $n$-qubit, $s$-gate inputs, and define
\[
b:=\xi(\lambda,n,s),\quad g:=\xi(\lambda,n,2s+1),\quad e:=\xi(\lambda,n,2q).
\]
These parameters depend on $(\lambda,n)$, which we suppress below. We require
\[
b\geq\mu(s+1),\qquad g\geq\mu(2s+2),\qquad e\geq\mu(2s+1).
\]

Second, quantum RIO must also hold for the experiment below, which describes joining two obfuscated circuit pieces. To define this experiment, for $m\geq1$ set $M_m:=mg+(m-1)e$. This length accounts for $m$ masked-gate blocks of length $g$ and the $m-1$ joins between them, each of length $e$. Let $\Pi_m$ parse an $M_m$-gate circuit into alternating blocks of lengths $g,e,g,\ldots,e,g$ and apply $\pi$ to each block with independent randomness.

Define the \emph{ideal triple} for an $m$-gate, $n$-qubit circuit $C$ as follows. Sample independent masks $T_L,T_R\gets\mathsf{Circ}_{n,s}$ and, conditional on these masks, independently sample
\[
D_L\gets\calE_{T_L,b},\qquad
D_0\gets\calE_{T_L^\dagger|C|T_R,M_m},\qquad
D_R\gets\calE_{T_R^\dagger,b}.
\]
The ideal triple is
\[
\bigl(\pi(1^\lambda,D_L),\Pi_m(1^\lambda,D_0),\pi(1^\lambda,D_R)\bigr),
\]
with independent postprocessing randomness. The two outer circuits each have $q$ gates by the output-length condition in the RIO definition. Concatenating the three circuits implements the same channel as $C$, since the masks cancel. These ideal samples are used to specify the security experiment and need not be efficiently samplable.

For every pair of polynomial-size $n$-qubit circuit families $A,B$ with positive gate counts, independently sample their ideal triples $(X_L,X_0,X_R)$ and $(Y_L,Y_0,Y_R)$, and set
\[
C_{\mathrm{seam}}=X_R|Y_L,\qquad V=(X_L,X_0,Y_0,Y_R).
\]
Thus $C_{\mathrm{seam}}$ is the $2q$-gate circuit formed by concatenating the two adjacent boundary circuits, and $V$ contains the four surrounding circuits. The same $\Obf,\xi,\pi$ must satisfy the quantum RIO guarantee for this joint ensemble $(V,C_{\mathrm{seam}})$, with ideal sample $D\gets\calE_{C_{\mathrm{seam}},e}$.
\end{assumption}

\subsection{Obfuscation for Ancilla-Free Circuits}
\label{sec:qio-worst-to-average-reduction}

We first prove an exact version of ancilla-free qiO, with indistinguishability guaranteed for circuits implementing exactly the same channel. We then extend this guarantee to robust qiO for circuits with negligibly close channels, under an additional stability assumption (\Cref{assumption:stability}).

\begin{theorem}[qiO for ancilla-free circuits]
\label{thm:local-mixing-ancilla-free-qio}
Under quantum split-circuit pseudorandomness (\Cref{assumption:qSCP}) and quantum RIO (\Cref{assumption:qRIO}) above, there is a polynomial-time, exactly channel-preserving, quantum-secure qiO for polynomial-size ancilla-free unitary circuits whose indistinguishability guarantee applies to exactly equivalent, same-shape circuits (the exact-equivalence variant of \Cref{def:ancilla-free-qio},
with $\mu(\lambda)=\varepsilon(\lambda)=0$). Its output length on an $m$-gate, $n$-qubit circuit is $m\cdot\poly(\lambda,n)$.
\end{theorem}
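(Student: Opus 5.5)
We will prove this via the recursive local-mixing construction from the overview. For a single gate $\beta$ on $n$ qubits, sample fresh masks $T_L,T_R\gets\mathsf{Circ}_{n,s}$ and set
\[
\mathsf S(\beta)=\bigl(\Obf(T_L),\Obf(T_L^\dagger|\beta|T_R),\Obf(T_R^\dagger)\bigr),
\]
so the middle input has $2s+1$ gates. For $C=A|B$ with a balanced split, set $\mathsf{Join}(X,Y)=(X_L,X_0|\Obf(X_R|Y_L)|Y_0,Y_R)$ and $\mathsf S(C)=\mathsf{Join}(\mathsf S(A),\mathsf S(B))$ with fresh randomness. The output is $X_L|X_0|X_R$. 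Because $\Obf$ preserves channels exactly, correctness follows from cancellation of $T_L|T_L^\dagger$ and $T_R|T_R^\dagger$ at each level. Since output lengths of $\Obf$ depend only on input length, the outer blocks always have $q$ gates, each seam input has $2q$ gates, and the middle block of an $m$-gate circuit has exactly $M_m=mg+(m-1)e$ gates. The total length is therefore $2q+M_m=m\cdot\poly(\lambda,n)$. There are $O(m)$ calls to $\Obf$, all on polynomial-size inputs, so the construction runs in polynomial time.

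For security, the claim is proved by induction on $m$. The inductive statement is that $\mathsf S(C)$ is computationally indistinguishable from the ideal triple of $C$ defined in \Cref{assumption:qRIO}, even jointly with arbitrary side information independent of the internal randomness. The final obfuscation is then indistinguishable from the concatenation of an ideal triple. Two further SCP applications show this is indistinguishable from $\rho(D)$ with $D\gets\calE_{C,2b+M_m}$, where $\rho$ applies $\pi$ blockwise according to the fixed block pattern. This uses the equation $D_L|D_0\approx\calE_{T_L|T_L^\dagger|C|T_R,\dots}$ with $R=T_L$ as the fresh separator, and similarly on the right. One subtlety is that postprocessing $\pi$ is applied per block. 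I will therefore state SCP at the level of the unprocessed samples $D$ and push $\pi$ through as an efficient map, treating the not-yet-merged blocks as side information $Q$. The resulting distribution depends only on the channel of $C$ and its shape. Hence two exactly equivalent same-shape circuits give identical ideal distributions, which yields qiO security.

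For the base case $m=1$, apply uniform-public-view RIO three times, at lengths $s$, $2s+1$, and $s$. The outer inputs $T_L$ and $T_R^\dagger$ are uniform, using inverse-closedness of the gate set. The middle input is uniform over $\mathsf{Circ}_{n,2s+1}$ conditioned on its central gate being $\beta$. To handle this, I will embed it into a uniform-input experiment: take a uniform $(2s+1)$-gate circuit and place the conditioning in the public view. Since the event has probability $1/|\calG_n|=1/\poly$, a distinguisher's advantage loses only a polynomial factor. At each step the other two blocks, which are correlated through $T_L$ and $T_R$, go into $V$.

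For the inductive step, first replace $\mathsf S(A)$ and $\mathsf S(B)$ by independent ideal triples. This is a hybrid, with the other subtree and the join computation simulated by the reduction; independence of fresh randomness makes it valid. Next apply the seam clause of \Cref{assumption:qRIO} to replace $\Obf(X_R|Y_L)$ with $\pi(D)$, $D\gets\calE_{X_R|Y_L,e}$, with $V$ consisting of the four surrounding blocks.

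The main obstacle is the next step: merging $X_0$, the seam sample, and $Y_0$ into a single ideal sample $\calE_{T_L^{A\dagger}|A|B|T_R^B,M_{m}}$. This sample is postprocessed in $g,e,\dots$ blocks, while the outer masks $T_L^A$ and $T_R^B$ stay in place. The seam circuit implements $T_R^{A\dagger}|T_L^{B}$. I plan two SCP applications, first with separator $R=T_R^A$ and then with $R=T_L^B$, each fresh relative to the rest. The required admissibility inequalities are exactly $b\ge\mu(s+1)$, $g\ge\mu(2s+2)$, and $e\ge\mu(2s+1)$. The delicate points are, first, that the middle samples are already $\pi$-processed, so SCP must be applied to the underlying ideal samples, with $\pi$ applied afterwards as an efficient map. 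Second, the induced $M_m$ block structure must match $\Pi_m$. Third, SCP's freshness requirement for the separator must be met, even though $T_R^A$ also appears in $X_R$, which is now absorbed into the seam's ideal sample. I expect to resolve this last point by observing that after the seam RIO step $T_R^A$ is visible only through channel-determined ideal samples, so conditioning on the rest leaves it fresh.
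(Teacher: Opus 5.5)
Your proposal is correct and follows the paper's proof essentially step for step. It uses the same triple/$\mathsf{Join}$ recursion and the same single-gate case (three uniform-public-view RIO applications with the $1/|\calG_n|$ conditioning loss). The join step is also the same: the seam clause, then two SCP merges across the inner masks $T_R^A$ and $T_L^B$ on the unprocessed samples, with $\Pi_u(D_A)|\pi(D_{\mathrm{seam}})|\Pi_v(D_B)\overset{d}{=}\Pi_{u+v}(D_A|D_{\mathrm{seam}}|D_B)$. Your resolution of the separator-freshness issue matches the paper's choice of side information.

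Your final two SCP merges into $\rho(D)$, $D\gets\calE_{C,2b+M_m}$, are valid but not needed here; the paper uses them only for the robust extension. Under exact equivalence $\mathsf I(C)$ already depends only on the channel and shape, so $\mathsf S(C_0)\approx_c\mathsf I(C_0)\overset{d}{=}\mathsf I(C_1)\approx_c\mathsf S(C_1)$ suffices. Separately, $M_m$ is the length of the ideal sample, not of the obfuscated middle block. The RIO definition fixes $\Obf$'s output lengths to match $\pi$'s output lengths, not $\xi$ itself. This slip does not affect the $m\cdot\poly(\lambda,n)$ bound.
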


\begin{proof}
Below we suppress the security parameter in algorithm calls, and write $\mathsf I(C)$ for the ideal triple in \Cref{assumption:qRIO}. Return the empty circuit unchanged when $m=0$; below assume $m\geq1$.

\paragraph{Construction.}
We represent an obfuscated circuit by a triple $X=(X_L,X_0,X_R)$ of left, middle, and right blocks. The blocks implement the channels of
\[
L,\qquad L^\dagger|C|R,\qquad R^\dagger,
\]
for random boundary masks $L,R$. Thus $\operatorname{concat}(X):=X_L|X_0|X_R$ implements $C$. Keeping the boundary blocks separate lets us join two such triples by reobfuscating their adjacent boundaries.

Define a recursive sampler $\mathsf S$ as follows. For a single gate $\beta$, sample independent $L,R\gets\mathsf{Circ}_{n,s}$ and set
\[
\mathsf S(\beta):=
\bigl(\Obf(L),\Obf(L^\dagger|\beta|R),\Obf(R^\dagger)\bigr).
\]
For two triples $X,Y$, define
\[
\mathsf{Join}(X,Y):=
\bigl(X_L,\;X_0|\Obf(X_R|Y_L)|Y_0,\;Y_R\bigr).
\]
For a longer circuit $C$, split $C=A|B$ with $|A|=\lfloor |C|/2\rfloor$, independently sample $X\gets\mathsf S(A)$ and $Y\gets\mathsf S(B)$, and return $\mathsf S(C):=\mathsf{Join}(X,Y)$. All obfuscation  calls use fresh randomness. The final obfuscator returns $\operatorname{concat}(\mathsf S(C))$.

The correctness of the construction follows from cancellation of the boundary masks. Each join preserves the channel of $\operatorname{concat}(X)|\operatorname{concat}(Y)$, since $\Obf$ preserves the channel of $X_R|Y_L$. This proves exact correctness by induction. The outer blocks always have $q$ gates, so the recursion makes $O(m)$ calls on inputs of lengths $s$, $2s+1$, or $2q$. It runs in polynomial time and outputs $m\cdot\poly(\lambda,n)$ gates.

\paragraph{Security.}
For fixed input shape, the distribution $\calE_{T_L^\dagger|C|T_R,M_m}$, and hence $\mathsf I(C)$, depends on $C$ only through its channel. It therefore suffices to prove the invariant
\[
\mathsf S(C)\approx_c\mathsf I(C).
\]
We establish the single-gate case and then show that joining preserves this invariant.

For a single gate $\beta$, the outer inputs $L,R^\dagger$ are uniformly random, so uniform-public-view RIO applies directly to them. The middle input $L^\dagger|\beta|R$ is a uniform $(2s+1)$-gate circuit conditioned on its center gate being $\beta$, an event of probability $p=1/|\calG_n|\geq1/\poly(n)$. This event is efficiently testable from the plaintext circuit. If a distinguisher had advantage $\Delta$ in the conditioned experiment, a distinguisher for the unconditioned experiment could test this event, run the former distinguisher when it holds, and output $0$ otherwise. Its advantage would be $p\Delta$. Since $p^{-1}$ is polynomial in $\lambda$ for polynomially bounded $n$, a nonnegligible $\Delta$ would contradict RIO. Thus RIO remains valid under this conditioning.

Conditional on $L,R$, independently sample
\[
D_L\gets\calE_{L,b},\qquad
D_0\gets\calE_{L^\dagger|\beta|R,g},\qquad
D_R\gets\calE_{R^\dagger,b}.
\]
Applying RIO to the left, right, and middle blocks in turn gives
\[
\begin{aligned}
\mathsf S(\beta)
&\overset{d}{=}\bigl(\Obf(L),\Obf(L^\dagger|\beta|R),\Obf(R^\dagger)\bigr)\\
&\approx_c\bigl(\pi(D_L),\Obf(L^\dagger|\beta|R),\Obf(R^\dagger)\bigr)\\
&\approx_c\bigl(\pi(D_L),\Obf(L^\dagger|\beta|R),\pi(D_R)\bigr)\\
&\approx_c\bigl(\pi(D_L),\pi(D_0),\pi(D_R)\bigr)
\overset{d}{=}\mathsf I(\beta).
\end{aligned}
\]
Here $\overset{d}{=}$ denotes equality in distribution. Each comparison regards the other two blocks as the public view, which may contain ideal samples as permitted by RIO. All calls use fresh independent randomness. The middle-block replacement uses the conditioning argument above, and the final equality uses $M_1=g$ and $\Pi_1=\pi$.

For the joining step, let $u=|A|$, $v=|B|$. We claim that independent $X\gets\mathsf I(A)$ and $Y\gets\mathsf I(B)$ satisfy
\[
\mathsf{Join}(X,Y)\approx_c\mathsf I(A|B).
\]
Write $(L,S)$ for the masks of $X$ and $(T,R)$ for those of $Y$; all four are independent uniform $s$-gate circuits. The seam $X_R|Y_L$ implements the channel of $S^\dagger|T$. The seam clause of RIO replaces its obfuscation by $\pi(D_{\mathrm{seam}})$, regarding $(X_L,X_0,Y_0,Y_R)$ as the public view. After this replacement, the middle block is
\[
\Pi_u(D_A)|\pi(D_{\mathrm{seam}})|\Pi_v(D_B),
\]
where, conditional on the masks, the samples are independent and distributed as
\[
D_A\gets\calE_{L^\dagger|A|S,M_u},\qquad
D_{\mathrm{seam}}\gets\calE_{S^\dagger|T,e},\qquad
D_B\gets\calE_{T^\dagger|B|R,M_v}.
\]
The alternating block layout in the definition of $\Pi_m$ gives
\[
\Pi_u(D_A)|\pi(D_{\mathrm{seam}})|\Pi_v(D_B)
\overset{d}{=}
\Pi_{u+v}(D_A|D_{\mathrm{seam}}|D_B),
\]
with independent randomness on both sides. We can therefore apply SCP to the concatenation and then apply this common postprocessing:
\[
\begin{aligned}
D_A|D_{\mathrm{seam}}|D_B
&\approx_c
 \calE_{L^\dagger|A|T,M_u+e}
 \mid\calE_{T^\dagger|B|R,M_v}\\
&\approx_c
 \calE_{L^\dagger|A|B|R,M_{u+v}},
\end{aligned}
\]
where each $\calE$ denotes a sample from the indicated distribution, and the two samples on the first right-hand side are conditionally independent given $L,T,R$. The first comparison merges $D_A,D_{\mathrm{seam}}$ across the random separator $S$; the second merges the result with $D_B$ across $T$. Both regard $(L,R,X_L,Y_R)$ as side information, and the first also includes $T,D_B$. This side information is independent of the separator being removed. The SCP length conditions follow from $e\geq\mu(2s+1)$ and
\[
M_k\geq kg\geq\mu(k+2s+1)\quad(k\geq1),
\qquad M_u+e+M_v=M_{u+v}.
\]
Applying $\Pi_{u+v}$ turns the last distribution into the middle block of $\mathsf I(A|B)$, with the correct outer blocks $X_L,Y_R$ unchanged. This proves the joining claim.

The single-gate case and joining claim imply the invariant by induction: first replace the two child triples by their independent ideal versions, then apply the joining claim. These replacements remain valid for inefficiently samplable ideal triples, since the other independent triple can be fixed as nonuniform advice. There are $O(m)$ recursive calls, so the resulting polynomial-length hybrid has negligible total distinguishing advantage. For exactly equivalent, same-shape $C_0,C_1$, we now have
\[
\mathsf S(C_0)\approx_c\mathsf I(C_0)
\overset{d}{=}\mathsf I(C_1)\approx_c\mathsf S(C_1).
\]
Concatenating the triples proves the required indistinguishability.
\end{proof}

\paragraph{Extension to robust ancilla-free qiO.}
Exact equivalence makes the ideal distributions identical, whereas negligibly close but unequal channels have disjoint sets of exact implementations. To obtain the robust notion of \Cref{def:ancilla-free-qio}, we introduce an additional stability assumption on the postprocessed ideal distributions.

We first express the obfuscation using a single ideal sample. For $m\geq1$, set $L(\lambda,n,m):=2b+M_m$ and define
\[
\rho(D_L|D_0|D_R)
:=\pi(D_L)|\Pi_m(D_0)|\pi(D_R),
\]
where the input block lengths are $b,M_m,b$, respectively, and all calls use fresh independent randomness. Both $L$ and $\rho$ depend only on $(\lambda,n,m)$.

For the independent masks $T_L,T_R$ defining $\mathsf I(C)$, two applications of SCP give
\[
\begin{aligned}
&\calE_{T_L,b}
 \mid\calE_{T_L^\dagger|C|T_R,M_m}
 \mid\calE_{T_R^\dagger,b}\\
&\quad\approx_c
 \calE_{C|T_R,b+M_m}
 \mid\calE_{T_R^\dagger,b}\\
&\quad\approx_c
 \calE_{C,L}.
\end{aligned}
\]
Here each $\calE$ denotes a sample, with conditional independence given the masks. Applying the same $\rho$ throughout and using the invariant from the theorem's proof yields
\begin{equation}
\label{eq:ancilla-free-ideal-representation}
\operatorname{concat}(\mathsf S(C))
\approx_c\operatorname{concat}(\mathsf I(C))
\approx_c\rho(D),
\qquad D\gets\calE_{C,L(\lambda,n,m)}.
\end{equation}

\begin{assumption}[Stability of the ideal distributions]
\label{assumption:stability}
Let $C_0,C_1$ be any same-shape polynomial-size ancilla-free circuit families with common width $N=N(\lambda)\geq\lambda$, common gate count $m=m(\lambda)\geq1$, and negligible diamond distance between their channels. For $L=L(\lambda,N,m)$ and the corresponding postprocessor $\rho$ defined above,
\[
\rho(1^\lambda,D_0)\approx_c\rho(1^\lambda,D_1),
\qquad D_j\gets\calE_{C_j,L}\quad(j\in\{0,1\}),
\]
against nonuniform quantum polynomial-time distinguishers, with independent ideal samples and fresh independent randomness for the two calls to $\rho$.
\end{assumption}

The width restriction is part of the stability hypothesis. The lifting in \Cref{sec:tidy-qio-lifting} pads the compiled circuits to meet it before obfuscation.

\begin{corollary}[Robust ancilla-free qiO]
\label{cor:robust-ancilla-free-qio}
Under \Cref{assumption:qSCP,assumption:qRIO,assumption:stability}, the local-mixing construction is a quantum-secure robust qiO in the sense of \Cref{def:ancilla-free-qio} for polynomial-size ancilla-free unitary circuit families of width $N=N(\lambda)\geq\lambda$, with exact global correctness.
\end{corollary}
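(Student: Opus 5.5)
The plan is to combine the ideal-representation relation \eqref{eq:ancilla-free-ideal-representation} with \Cref{assumption:stability}, and to get correctness directly from \Cref{thm:local-mixing-ancilla-free-qio}.

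\emph{Correctness.} \Cref{thm:local-mixing-ancilla-free-qio} shows that $\operatorname{concat}(\mathsf S(C))$ implements exactly the channel of $C$ on every run. This follows from mask cancellation and from $\Obf$ preserving channels exactly at every join. Hence $\|\Phi_{\widehat C}-\Phi_C\|_\diamond=0$ with probability $1$, which gives global correctness with $\mu=0$. The output is polynomial size by the same theorem.

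\emph{Robust indistinguishability.} Fix same-shape families $C_0,C_1$ with width $N\geq\lambda$, gate count $m$, and $\|\Phi_{C_0}-\Phi_{C_1}\|_\diamond\leq\varepsilon(\lambda)$ negligible. If $m=0$, both circuits are empty and the outputs are identical, so assume $m\geq1$. For each $j$, \eqref{eq:ancilla-free-ideal-representation} gives
\[
\operatorname{concat}(\mathsf S(C_j))\approx_c\rho(D_j),\qquad D_j\gets\calE_{C_j,L(\lambda,N,m)}.
\]
Because $\rho$ and $L$ depend only on $(\lambda,N,m)$, the same postprocessor appears for $j=0$ and $j=1$. \Cref{assumption:stability} gives $\rho(D_0)\approx_c\rho(D_1)$. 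Chaining the three relations by the triangle inequality yields the claim:
\[
\Obf_{\ancfree}(C_0)\approx_c\rho(D_0)\approx_c\rho(D_1)\approx_c\Obf_{\ancfree}(C_1).
\]
All three relations are against nonuniform \qpt distinguishers, so the conclusion holds against these as well.

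\emph{Main obstacle.} The main point to check is that \eqref{eq:ancilla-free-ideal-representation} holds against nonuniform quantum distinguishers for arbitrary, possibly nonuniform, families $C_j$. Its first step is the invariant $\mathsf S(C)\approx_c\mathsf I(C)$ from the proof of \Cref{thm:local-mixing-ancilla-free-qio}. That proof uses only a polynomial number of hybrids, and it hardwires the inefficiently samplable independent ideal triples as nonuniform advice, so it holds for every fixed $C$.

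The second step applies SCP twice to remove the masks $T_L,T_R$. I would check this explicitly. In the first application, take $A$ to be the empty circuit, let $T_L$ serve as the separator $R$, and take $B=C|T_R$. The side information $Q$ is the right sample $D_R$ together with $T_R$. These are independent of $T_L$, as SCP requires.

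I would also verify the SCP length conditions. We have $b\geq\mu(s+1)$ from \Cref{assumption:qRIO}, and $M_m\geq mg\geq\mu(m+2s+1)$ as in the joining claim. Together these cover the merged lengths $b+M_m$ and $b$.

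Finally, $\rho$ is an efficient postprocessing applied to both sides of each SCP relation with fresh independent randomness, so it preserves computational indistinguishability.
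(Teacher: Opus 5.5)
Your proposal is correct and follows the paper's proof. It uses the same chain $\operatorname{concat}(\mathsf S(C_0))\approx_c\rho(D_0)\approx_c\rho(D_1)\approx_c\operatorname{concat}(\mathsf S(C_1))$: the outer steps come from \eqref{eq:ancilla-free-ideal-representation}, the middle step from \Cref{assumption:stability}, and exact correctness is inherited from \Cref{thm:local-mixing-ancilla-free-qio}. Your explicit instantiation of the two SCP applications is correct and fills in details the paper leaves implicit: empty $A$ with separator $T_L$ and side information $(T_R,D_R)$, then empty $B$ with separator $T_R$, together with the accompanying length checks.
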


\begin{proof}
Empty circuits are returned unchanged. For same-shape circuits $C_0,C_1$ of positive gate count whose channels are negligibly close, sample $D_j\gets\calE_{C_j,L}$ independently. Then
\[
\operatorname{concat}(\mathsf S(C_0))
\approx_c\rho(D_0)
\approx_c\rho(D_1)
\approx_c\operatorname{concat}(\mathsf S(C_1)).
\]
The outer comparisons follow from \eqref{eq:ancilla-free-ideal-representation}, and the middle comparison is \Cref{assumption:stability}.
\end{proof}

\subsection{Obfuscation for Tidy Circuits}
\label{sec:tidy-qio-lifting}

We prove that ancilla-free qiO as in \Cref{def:ancilla-free-qio} implies
tidy-circuit qiO as in \Cref{def:tidy-circuit-qio}. The key step of the proof is the circuit compilation discussed below (\Cref{fig:tidy-ancilla-free-compilation}), which turns closeness on the clean-input subspace into closeness on the entire
input space. A random phase then lets us couple the compiled circuits
even when the original clean-input actions agree only up to global phase.

Fix the logical and work registers $Q,A$, of sizes $n,a$, and write $P=I_Q\otimes\ketbra{0^a}_A$, and $\overline P=I-P$. Add a qubit $D$ and
define, for any unitary $W$ on $Q,A$,
\begin{align}
    F&=\sfX_D\otimes P+I_D\otimes\overline P,
    &K(W)&=\ketbra{0}_D\otimes W+\ketbra{1}_D\otimes I,\notag\\
    \Gamma(W)&=K(W)F K(W)^\dagger F.
    \label{eq:tidy-gamma-definition}
\end{align}
Thus $F$ flips $D$ exactly when $A=0^a$, and $K(W)$ applies $W$
controlled on $D=0$. Products act from right to left.
\Cref{fig:tidy-ancilla-free-compilation} depicts
$\Gamma_\alpha(C):=\Gamma(e^{i\alpha}V_C)$. All $n+a+1$ wires are
logical inputs to this unitary. For a full-register unitary $W$, we
write $\Phi_W(\rho)=W\rho W^\dagger$ for its channel.

\paragraph{The exactly tidy case.}
Let $J$ be the isometry $J\ket\psi=\ket\psi_Q\ket{0^a}_A$, and let
$S=\operatorname{im}(J)$ be the clean-input subspace.
To see the purpose of the compilation, first suppose $VJ=JU$ for a
logical unitary $U$, absorbing any global phase into $U$. Since $V$ is
unitary, it preserves both $S$ and $S^\perp$. On $\mathbb C^2\otimes S$,
$F$ flips $D$; on $\mathbb C^2\otimes S^\perp$, $F$ is the identity,
so $K(V)$ and $K(V)^\dagger$ cancel. Identifying $S$ with the logical
space via $J$, and writing blocks in the computational basis of $D$, we obtain
\[
\left.\Gamma(V)\right|_{\mathbb C^2\otimes S}
=\begin{pmatrix}U&0\\0&U^\dagger\end{pmatrix},
\qquad
\left.\Gamma(V)\right|_{\mathbb C^2\otimes S^\perp}=I.
\]
Thus the compiled unitary depends only on the clean-input action $U$;
the original action on $S^\perp$ disappears.

There is still a phase ambiguity. Although $V$ and $e^{i\theta}V$
implement the same channel, their compilations need not: on the clean subspace,
\[
\left.\Gamma(e^{i\theta}V)\right|_{\mathbb C^2\otimes S}
=\begin{pmatrix}e^{i\theta}U&0\\0&e^{-i\theta}U^\dagger\end{pmatrix}.
\]
A global phase of $V$ therefore becomes a \emph{relative} phase between the
two branches of $D$. The random phase and coupling argument below handle
this ambiguity. The next lemma formalizes dependence on the clean-input
action and bounds the effect of approximation errors, without assuming
exact tidiness.

\begin{figure}[t]
    \centering
    \begin{tikzpicture}[x=1cm,y=1cm,line width=0.6pt,
        gate/.style={draw,fill=white,minimum width=1.8cm,
            minimum height=1.9cm,inner sep=5pt},
        phase/.style={draw,fill=white,minimum width=2.8cm,
            minimum height=0.8cm,inner xsep=7pt,inner ysep=5pt},
        control label/.style={font=\scriptsize,inner sep=2pt}]
        \node at (7.3,3.5) {$C\longmapsto\Gamma_\alpha(C)$};
        \node[anchor=east] at (-0.2,2.5) {$D$};
        \node[anchor=east] at (-0.2,1.1) {$Q$};
        \node[anchor=east] at (-0.2,0) {$A$};
        \foreach \y in {0,1.1,2.5} {
            \draw (0,\y) -- (14.6,\y);
        }
        \draw (0.35,0.98) -- (0.55,1.22);
        \draw (0.35,-0.12) -- (0.55,0.12);
        \node[anchor=south west,font=\scriptsize] at (0.5,1.18) {$n$};
        \node[anchor=south west,font=\scriptsize] at (0.5,0.08) {$a$};
        \node[gate] (inverse) at (3.7,0.55) {$C^\dagger$};
        \node[phase] at (6.1,2.5) {$\operatorname{diag}(e^{-i\alpha},1)$};
        \node[gate] (forward) at (10.4,0.55) {$C$};
        \node[phase] at (12.8,2.5) {$\operatorname{diag}(e^{i\alpha},1)$};
        \foreach \x in {1.6,8.3} {
            \draw (\x,0) -- (\x,2.5);
            \draw[fill=white] (\x,0) circle (0.075);
            \node[control label,anchor=north] at (\x,-0.22) {$A=0^a$};
            \draw[fill=white] (\x,2.5) circle (0.13);
            \draw (\x-0.13,2.5) -- (\x+0.13,2.5);
            \draw (\x,2.37) -- (\x,2.63);
        }
        \draw (3.7,2.5) -- (inverse.north);
        \draw (10.4,2.5) -- (forward.north);
        \foreach \x in {3.7,10.4} {
            \draw[fill=white] (\x,2.5) circle (0.075);
            \node[control label,anchor=south] at (\x,2.73) {$D=0$};
        }
    \end{tikzpicture}
    \caption{Ancilla-free compilation of a tidy circuit $C$. Sample
    $k$ once during obfuscation, uniformly from
    $\{0,\ldots,2^\lambda-1\}$, and set
    $\alpha=2\pi k/2^\lambda$. Gates act from left to right. An open
    control on $A$ tests whether the entire register is $0^a$; an open
    control on $D$ tests whether $D=0$. The boxes $C$ and $C^\dagger$
    denote the full gate sequence on $Q,A$ and its inverse, with no
    initialization or discarding inside either box. The diagonal gates
    $\operatorname{diag}(e^{-i\alpha},1)$ and
    $\operatorname{diag}(e^{i\alpha},1)$ act on $D$, multiplying its
    $\ket{0}$ component by the indicated phase and leaving its $\ket{1}$
    component unchanged. All $n+a+1$ qubits are logical inputs.
    After synthesis to negligible error and padding to a common shape,
    the classical description of this circuit is the input to
    $\Obf_{\ancfree}$.}
    \label{fig:tidy-ancilla-free-compilation}
\end{figure}

\begin{lemma}[Dependence on the clean-input restriction]
\label{lem:tidy-clean-restriction}
For any unitary $W$, the unitary $\Gamma(W)$ depends only on $WP$.
For any two unitaries $W_0,W_1$ on $Q,A$,
\begin{equation}
    \|\Gamma(W_0)-\Gamma(W_1)\|
    \leq 3\|(W_0-W_1)P\|.
    \label{eq:tidy-gamma-lipschitz}
\end{equation}
Moreover, for every logical input $\ket\psi$,
\begin{equation}
    \Gamma(W)(\ket0_D\otimes J\ket\psi)
    =\ket0_D\otimes WJ\ket\psi.
    \label{eq:tidy-gamma-evaluation}
\end{equation}
These statements do not require $W$ to be tidy.
\end{lemma}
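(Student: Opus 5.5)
The plan is to expand $\Gamma(W)=K(W)FK(W)^\dagger F$ block-wise in the computational basis of $D$ and read off each claim. Writing $K(W)=\ketbra{0}_D\otimes W+\ketbra{1}_D\otimes I$ and $F=\sfX_D\otimes P+I_D\otimes\overline P$, the key observation is that $F$ and $K(W)^\dagger$ commute with everything except through $P$. I would first rewrite $F = I - (I-\sfX_D)\otimes P$, so that $FK(W)^\dagger F = K(W)^\dagger + E$, where every term of $E$ contains a factor $P$ adjacent to $K(W)^\dagger$ or a factor $P$ on the right. Then
\[
\Gamma(W)=K(W)\bigl(K(W)^\dagger F\bigr)F'\ldots
\]
is awkward to organize, so I will use a cleaner identity instead: $\Gamma(W)=I + K(W)\bigl(F-I\bigr)K(W)^\dagger F + (K(W)K(W)^\dagger F - I)$. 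Since $K(W)K(W)^\dagger=I$ and $F^2=I$, this reduces to
\[
\Gamma(W)=F\cdot F + K(W)(F-I)K(W)^\dagger F = I + K(W)\bigl((\sfX_D-I_D)\otimes P\bigr)K(W)^\dagger F .
\]
This is the expression I will work with.

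Next I will show that $K(W)\bigl((\sfX_D-I)\otimes P\bigr)K(W)^\dagger$ depends only on $WP$. Writing $G:=(\sfX_D-I)\otimes P=(\sfX_D-I)\otimes P^2$, we get $K(W)G K(W)^\dagger = \bigl(K(W)(I\otimes P)\bigr)(\sfX_D-I)\bigl(K(W)(I\otimes P)\bigr)^\dagger$, because $I_D\otimes P$ commutes with $\sfX_D\otimes I$. Moreover $K(W)(I\otimes P)=\ketbra0\otimes WP+\ketbra1\otimes P$ depends only on $WP$. This proves the first claim.

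For the Lipschitz bound, I set $M_j:=K(W_j)(I\otimes P)$. Then $\|M_0-M_1\|=\|(W_0-W_1)P\|$ and $\|M_j\|\le1$. Writing $M_0 G' M_0^\dagger - M_1 G' M_1^\dagger=(M_0-M_1)G'M_0^\dagger+M_1G'(M_0-M_1)^\dagger$ with $G'=\sfX_D-I$ and $\|G'\|=2$, together with $\|F\|=1$, gives a bound of $4\|(W_0-W_1)P\|$. To reach the stated constant $3$, I would instead use the unitary $\sfX_D$ piece directly: $M\sfX_DM^\dagger - MM^\dagger$. Here $MM^\dagger=I_D\otimes WPW^\dagger$ is block-diagonal, and its difference costs at most $2\delta$; I expect a sharper split to yield $3\delta$. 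This constant-chasing is the only step I regard as fiddly. If $3$ resists, I would try bounding $\|WPW^\dagger-W'PW'^\dagger\|\le\|(W-W')P\|$ via the projector-distance bound for ranges of the isometries $WJ$ and $W'J$, combined with $2\delta$ for the $\sfX_D$ term.

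Finally, the evaluation identity follows by direct computation. Starting from $\ket0\otimes J\ket\psi$, $F$ gives $\ket1\otimes J\ket\psi$, $K(W)^\dagger$ leaves it unchanged, $F$ returns $\ket0\otimes J\ket\psi$ (since $A=0^a$), and $K(W)$ yields $\ket0\otimes WJ\ket\psi$. None of these steps uses tidiness.
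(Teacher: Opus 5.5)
Your route is the paper's: expand $\Gamma(W)$ in the computational basis of $D$ and observe that only $Z:=WP$ enters. Your factorization through $M=K(W)(I\otimes P)$ correctly proves the first claim, and your evaluation computation is exactly the paper's.

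There are two algebra slips; neither affects the conclusions. First, since $K(W)K(W)^\dagger=I$ and $F=I+G$ with $G=(\sfX_D-I)\otimes P$, one gets $\Gamma(W)=F+K(W)GK(W)^\dagger F$, not $I+K(W)GK(W)^\dagger F$; your ``$F\cdot F$'' should be $F$. The extra term is independent of $W$ and cancels in $\Gamma(W_0)-\Gamma(W_1)$, so nothing downstream breaks. Second, $MM^\dagger=|0\rangle\langle0|\otimes WPW^\dagger+|1\rangle\langle1|\otimes P$, not $I_D\otimes WPW^\dagger$. Only the $D=0$ block depends on $W$, so your difference estimate still holds.

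The real weak point is the constant. Your main estimate gives $4$, and ``a sharper split'' is not an argument. The missing step is to compute $M\sfX_DM^\dagger=|0\rangle\langle1|\otimes Z+|1\rangle\langle0|\otimes Z^\dagger$ explicitly, using $ZP=Z$. This yields the paper's block form
\[
\Gamma(W)=\begin{pmatrix} I-ZZ^\dagger & Z\\ Z^\dagger & \overline P\end{pmatrix}F .
\]
With $\Delta=(W_0-W_1)P$, the matrix $(\Gamma(W_0)-\Gamma(W_1))F$ has Hermitian off-diagonal part $\begin{pmatrix}0&\Delta\\ \Delta^\dagger&0\end{pmatrix}$, whose norm is exactly $\|\Delta\|$, not $2\|\Delta\|$. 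The diagonal part costs $\|Z_0Z_0^\dagger-Z_1Z_1^\dagger\|\le 2\|\Delta\|$ by your own splitting. Together these give $3$, as in the paper.

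Your fallback ($2\delta$ for the $\sfX_D$ term plus a projector bound) also reaches $3$, but you should prove the projector bound rather than cite it. Let $P_b=W_bPW_b^\dagger$, the projector onto the range of the isometry $W_bJ$. Writing $P_0-P_1=P_0(I-P_1)-(I-P_0)P_1$ gives $\|P_0-P_1\|\le\max\{\|(I-P_1)P_0\|,\|(I-P_0)P_1\|\}$. Since $(I-P_1)W_0J=(I-P_1)(W_0-W_1)J$, each term is at most $\|(W_0-W_1)J\|=\|\Delta\|$. Combining this with the exact off-diagonal bound would even improve the constant to $2$.
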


\begin{proof}
Set $Z=WP$. In the computational basis of $D$, block multiplication gives
\begin{equation}
    \Gamma(W)=
    \begin{pmatrix}
        I-ZZ^\dagger & Z\\
        Z^\dagger & \overline P
    \end{pmatrix}F,
    \label{eq:tidy-gamma-blocks}
\end{equation}
which proves the claimed dependence. For $Z_b=W_bP$ and
$\Delta=Z_0-Z_1$, we have $\|Z_b\|\leq1$ and
\[
    \|Z_0Z_0^\dagger-Z_1Z_1^\dagger\|
    =\|\Delta Z_0^\dagger+Z_1\Delta^\dagger\|
    \leq2\|\Delta\|,
    \qquad
    \left\|\begin{pmatrix}0&\Delta\\\Delta^\dagger&0\end{pmatrix}\right\|
    =\|\Delta\|.
\]
Since $F$ is unitary, the triangle inequality in
\eqref{eq:tidy-gamma-blocks} gives \eqref{eq:tidy-gamma-lipschitz}.
Finally, on $\ket0_D\otimes J\ket\psi$, the first $F$ flips $D$ to
$1$, $K(W)^\dagger$ acts trivially, the second $F$ restores $D$ to
$0$, and $K(W)$ applies $W$. This proves
\eqref{eq:tidy-gamma-evaluation}.
\end{proof}

\begin{theorem}[Ancilla-free qiO implies tidy-circuit qiO]
\label{thm:ancilla-free-to-tidy-qio}
Suppose $\Obf_{\ancfree}$ satisfies \Cref{def:ancilla-free-qio} for all
polynomial-size ancilla-free unitary circuits. Then there is a classical
polynomial-time transformation constructing an obfuscator
$\Obf_{\mathsf{tidy}}$ satisfying \Cref{def:tidy-circuit-qio} for any
polynomial-size class of tidy circuits with a fixed negligible tidiness
bound. The same implication holds with classical distinguishers
throughout.
\end{theorem}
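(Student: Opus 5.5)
We define $\Obf_{\mathsf{tidy}}(1^\lambda,C)$ as follows. Sample $k\gets\{0,\ldots,2^\lambda-1\}$ and set $\alpha=2\pi k/2^\lambda$. Synthesize $\Gamma_\alpha(C)=\Gamma(e^{i\alpha}V_C)$ over the fixed gate set, using the full gate sequence of $C$ and its inverse, the multi-controlled flips $F$, and the phase gates on $D$, to error $2^{-\lambda}$. The synthesis is done canonically from $\shape(C)$ and $k$, padded to a gate count and width $N\geq\lambda$ that depend only on $(\lambda,\shape(C))$, using extra idle wires if needed. We then output $\Obf_{\ancfree}$ of this circuit, with the evaluation convention that $D$ and $A$ (and any padding wires) are initialized to $0$ and only $Q$ is retained. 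Running time is clearly polynomial, and same-shape inputs yield same-shape compiled circuits.

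\emph{Correctness.} By \eqref{eq:tidy-gamma-evaluation}, on $\ket0_D\otimes J\ket\psi$ the compiled unitary outputs $\ket0_D\otimes e^{i\alpha}V_CJ\ket\psi$. Tracing out $D,A$ recovers exactly $\Phi_C$; the phase $e^{i\alpha}$ is global on this input. The result holds even on purified/entangled inputs, since the map is an isometry applied to the $Q$ part. Global correctness of $\Obf_{\ancfree}$ bounds the diamond distance on all $n+a+1$ wires, hence on the restricted channel. Adding the synthesis error gives global correctness with negligible $\mu$.

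\emph{Security.} This is the main step. Let $C_0,C_1\in\calT$ have the same shape and $\|\Phi_{C_0}-\Phi_{C_1}\|_\diamond\leq\varepsilon$. Tidiness gives $U_b,\theta_b$ with $\|V_{C_b}J-e^{i\theta_b}JU_b\|\leq\delta$. The first goal is to show that closeness of the logical channels forces $\min_\phi\|U_0-e^{i\phi}U_1\|\leq O(\sqrt{\varepsilon+\delta})$. The standard route uses the diamond-norm bound on a maximally entangled input: channel closeness implies fidelity close to $1$ between $(U_0\otimes I)\ket\Omega$ and $(U_1\otimes I)\ket\Omega$, which gives $|\tr(U_0^\dagger U_1)|/2^n\approx1$. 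Since this averaged bound alone is not operator-norm closeness, we must instead use the worst-case input: diamond closeness gives closeness for every pure input up to a phase depending on the input. Testing on superpositions then forces the phases to agree (the standard argument that unitary channels close in diamond norm have unitaries close up to one global phase). Hence there is $\phi$ with $\|V_{C_0}P-e^{i\phi}V_{C_1}P\|\leq\eta:=\negl$.

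Next we couple the phases. Round $\phi$ to the grid: choose $k^*$ with $|\phi-2\pi k^*/2^\lambda|\leq2\pi/2^\lambda$, and pair $k$ for $C_0$ with $k+k^*\bmod 2^\lambda$ for $C_1$. This pairing is a bijection, so uniform $k$ maps to uniform $k'$. Under the pairing, $\|(e^{i\alpha}V_{C_0}-e^{i\alpha'}V_{C_1})P\|\leq\eta+2\pi2^{-\lambda}$. By \eqref{eq:tidy-gamma-lipschitz}, the compiled unitaries are within $3(\eta+2\pi2^{-\lambda})+2\cdot2^{-\lambda}$ in operator norm, hence their channels are negligibly close in diamond norm. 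The compiled circuits are also same-shape and ancilla-free.

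Finally, we apply robust indistinguishability of $\Obf_{\ancfree}$, using nonuniformity to handle the random $k$. Suppose a distinguisher has non-negligible advantage on the mixtures over $k$. Then for some $k_\lambda$ the paired circuits $\Gamma_{\alpha}(C_0)$ and $\Gamma_{\alpha'}(C_1)$ are distinguished with at least the same advantage. These form two fixed, same-shape, negligibly close ancilla-free families, which contradicts \Cref{def:ancilla-free-qio}. The same argument goes through verbatim with classical distinguishers. The anticipated obstacle is the phase-alignment lemma that upgrades diamond closeness of unitary channels to operator-norm closeness of the clean restrictions up to one global phase, with a uniform negligible bound. We would prove it by comparing the outputs on $\ket{x}$, $\ket{y}$ and $(\ket x+\ket y)/\sqrt2$ against a fixed reference basis state.
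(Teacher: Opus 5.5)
Your proposal is correct and follows the paper's proof essentially step for step. It uses the same construction (a random phase from a grid of size $2^\lambda$, the compilation $\Gamma(e^{i\alpha_k}V_C)$ padded to width at least $\lambda$), correctness via \eqref{eq:tidy-gamma-evaluation}, phase alignment on the clean subspace, the shift coupling $k\mapsto k+h$, the Lipschitz bound \eqref{eq:tidy-gamma-lipschitz}, and a maximizing $k_\lambda$ to produce nonuniform same-shape families. The one real difference is the phase-alignment step, where the paper tests equal superpositions of two eigenvectors of $U_0^\dagger U_1$ to get the linear, dimension-free bound $\|U_0-e^{i\varphi}U_1\|\leq\|\calU_0-\calU_1\|_\diamond$. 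Your input-by-input argument also works, but only if you test superpositions of the reference with arbitrary states rather than just computational basis states $\ket{x}$: column-wise bounds alone would lose a factor $2^{n/2}$ in operator norm.
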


\begin{proof}
We first construct the obfuscator using $\Gamma$ and prove its global
correctness. To prove security, we couple the random phases for circuits
with negligibly close logical channels, show that their compiled channels
are negligibly close on the full input space, and apply ancilla-free qiO
security.

\paragraph{Construction.}
Set $M=2^\lambda$, $\alpha_k=2\pi k/M$, and $\tau=2^{-\lambda}$.
Given $(1^\lambda,C)$, sample $k$ uniformly from $\mathbb Z_M$ and
compile $\Gamma(e^{i\alpha_k}V_C)$ into an ancilla-free circuit
$G_{C,k}$ satisfying
\begin{equation}
    \|\Phi_{G_{C,k}}-\Phi_{\Gamma(e^{i\alpha_k}V_C)}\|_\diamond
    \leq\tau.
    \label{eq:tidy-synthesis-error}
\end{equation}
This compilation has size and running time polynomial in $\lambda$ and
$|C|$. The zero test $F$ has a polynomial-size decomposition without
initialized ancillas \cite[Corollary~7.6]{Barenco1995Elementary}.
Controlling each gate of $C$ and $C^\dagger$ implements the remaining
controlled operations. Their constituent gates and the sampled phase
rotations can be synthesized over the fixed inverse-closed universal
gate set with polynomial overhead, allocating sufficient accuracy to
each gate to obtain \eqref{eq:tidy-synthesis-error}
\cite{DawsonNielsen2006SolovayKitaev}. Pad with identity gates so that
the shape of $G_{C,k}$ depends only on $\lambda$ and
$\shape(C)=(s,n,a,n)$, independently of $k$.
We may also pad $G_{C,k}$ to width $\max\{n+a+1,\lambda\}$ with additional identity wires in the ancilla register $A$, which preserves diamond distances. This allows us to use ancilla-free qiO for widths at least $\lambda$, which is required by the stability assumption (\Cref{assumption:stability}).

Compute $\widehat G\gets\Obf_{\ancfree}(1^\lambda,G_{C,k})$ and
return the circuit that, on an $n$-qubit input in $Q$, initializes
$D,A$ to zero, applies $\widehat G$, and discards $D,A$.

\paragraph{Global correctness.}
By \eqref{eq:tidy-gamma-evaluation}, for every $k$,
\[
    \Gamma(e^{i\alpha_k}V_C)(\ket0_D\otimes J\ket\psi)
    =e^{i\alpha_k}\ket0_D\otimes V_CJ\ket\psi.
\]
This identity also holds with an arbitrary reference register. Hence
initializing $D,A$ to zero, applying $\Gamma(e^{i\alpha_k}V_C)$, and
discarding $D,A$ induces exactly $\Phi_C$, including when $C$ is only
approximately tidy. Let $\mu$ be the correctness error of
$\Obf_{\ancfree}$. For each $k$, except with probability at most $\mu$,
the channel of $\widehat G$ is within diamond distance $\mu$ of
$\Phi_{G_{C,k}}$. Initialization and discarding cannot increase diamond
distance, so \eqref{eq:tidy-synthesis-error} bounds the distance of the
output circuit's channel from $\Phi_C$ by $\mu+\tau$. The same success probability
holds after averaging over $k$. Thus $\mu+\tau$ is a negligible
correctness bound as required by \Cref{def:tidy-circuit-qio}.

\paragraph{Aligning the clean-input actions.}
Fix any two circuit families $C_0,C_1$ of common shape $(s,n,a,n)$
as in \Cref{def:tidy-circuit-qio}, suppressing their dependence on
$\lambda$. Write $V_b=V_{C_b}$, let $\delta$ be the tidiness bound,
and suppose
$\|\Phi_{C_0}-\Phi_{C_1}\|_\diamond\leq\varepsilon$ for negligible
$\varepsilon$. Absorbing the tidiness phases into the logical unitaries,
choose $U_b$ with $\|V_bJ-JU_b\|\leq\delta$ and set
$\calU_b(\rho)=U_b\rho U_b^\dagger$. Closeness of these isometries
and contractivity under discarding give
\[
    \|\Phi_{C_b}-\calU_b\|_\diamond\leq2\delta,
    \qquad
    \|\calU_0-\calU_1\|_\diamond\leq\varepsilon+4\delta.
\]
For unitary channels, there is a phase $\varphi$ such that\footnote{Fix an eigenvalue $\zeta_*$ of $U_0^\dagger U_1$. Testing an equal superposition of its eigenvector and an eigenvector with eigenvalue $\zeta_j$ gives $|\zeta_j-\zeta_*|\leq\|\calU_0-\calU_1\|_\diamond$ for every $j$. Choose $e^{i\varphi}=\overline{\zeta_*}$.}
\[
    \|U_0-e^{i\varphi}U_1\|
    \leq\|\calU_0-\calU_1\|_\diamond.
\]
Consequently,
\begin{equation}
    \|(V_0-e^{i\varphi}V_1)P\|
    =\|(V_0-e^{i\varphi}V_1)J\|
    \leq\varepsilon+6\delta=:\eta.
    \label{eq:tidy-phase-alignment}
\end{equation}
This is the only part of the proof that uses tidiness.

\paragraph{Coupling the phase samples.}
Choose $h\in\mathbb Z_M$ whose angle $\alpha_h$ is nearest to
$\varphi$ modulo $2\pi$, so that
$|e^{i\alpha_h}-e^{i\varphi}|\leq\pi/M$. Couple the two constructions
by sampling $k$ uniformly and using indices $k$ for $C_0$ and $k+h$
for $C_1$, with indices taken modulo $M$. Both marginal indices are
uniform, since the shift is a permutation of $\mathbb Z_M$. For every
$k$, \eqref{eq:tidy-phase-alignment} gives
\[
    \|(e^{i\alpha_k}V_0-e^{i\alpha_{k+h}}V_1)P\|
    \leq\eta+\pi/M.
\]
Applying \Cref{lem:tidy-clean-restriction} yields
\[
    \|\Gamma(e^{i\alpha_k}V_0)
       -\Gamma(e^{i\alpha_{k+h}}V_1)\|
    \leq3(\eta+\pi/M).
\]
For unitaries $A,B$, the channel bound
$\|\Phi_A-\Phi_B\|_\diamond\leq2\|A-B\|$, together with the two
synthesis errors, therefore implies
\begin{equation}
    \|\Phi_{G_{C_0,k}}-\Phi_{G_{C_1,k+h}}\|_\diamond
    \leq6(\eta+\pi/M)+2\tau=\negl(\lambda).
    \label{eq:tidy-coupled-channels}
\end{equation}
The bound is uniform in $k$ and holds on the entire input space,
including coherent superpositions of clean and nonclean work states.
Thus the coupling preserves both sampling distributions and pairs
same-shape ancilla-free circuits with negligibly close channels.

\paragraph{From paired circuits to indistinguishable distributions.}
Fix a nonuniform quantum polynomial-time distinguisher $\calD$ and set
\[
    p_{b,k}=\Pr\!\left[
        \calD\bigl(1^\lambda,
        \Obf_{\ancfree}(1^\lambda,G_{C_b,k})\bigr)=1
    \right],
\]
where the probability is over the randomness of the obfuscator and distinguisher.
The advantage between the two distributions of obfuscated compiled circuits is
\begin{align}
    \left|\frac1M\sum_k p_{0,k}-\frac1M\sum_\ell p_{1,\ell}\right|
    &=\left|\frac1M\sum_k(p_{0,k}-p_{1,k+h})\right|\notag\\
    &\leq\frac1M\sum_k|p_{0,k}-p_{1,k+h}|\notag\\
    &\leq\max_k|p_{0,k}-p_{1,k+h}|.
    \label{eq:tidy-coupling-average}
\end{align}
The maximum is negligible. Otherwise, for each $\lambda$ choose a
maximizing index $k_\lambda$. The families
$G_{C_0,k_\lambda}$ and $G_{C_1,k_\lambda+h_\lambda}$ have polynomial
size and the same shape, and their channels are negligibly close by
\eqref{eq:tidy-coupled-channels}, yet $\calD$ distinguishes their
obfuscations with nonnegligible advantage. This contradicts
\Cref{def:ancilla-free-qio}, which allows circuit families that are not
uniformly generated.

Finally, adding the fixed initialization and discarding operations to the
circuit descriptions is efficient classical postprocessing, so it preserves
this indistinguishability. This proves robust security as
in \Cref{def:tidy-circuit-qio}. The argument also applies to classical
distinguishers when the assumed ancilla-free qiO has classical security.
\end{proof}

The identity \eqref{eq:tidy-gamma-evaluation} also shows that single-state
correctness of the underlying obfuscator suffices for single-state
correctness of the tidy-circuit construction: on the logical all-zero input,
the input to the obfuscated circuit is all zero as well.

The tidiness requirement is essential for the proof of \Cref{thm:ancilla-free-to-tidy-qio}. For example, $I_A$ and $\sfX_A$ both induce the identity logical channel after initializing and discarding a one-qubit work register $A$, but the latter does not return $A$ to zero. Their compiled circuits act differently on an input with $D=0,A=1$: the first leaves $D,A$ at $0,1$, while the second sends them to $1,0$. Thus the compilation does not give qiO for arbitrary circuits with ancillas or arbitrary quantum channels.

We now connect tidy-circuit qiO to classical functionality and to the
coherent-decision circuits used in our applications.

\begin{lemma}[Tidy circuits for classical functions]
\label{lem:tidy-classical-functionality}
Let $C$ have shape $(s,n,a,r)$ and output $f(x)$ with probability at least
$1-\varepsilon$ on every computational-basis input $x$, for a classical
function $f:\{0,1\}^n\to\{0,1\}^r$. There is a classical polynomial-time
transformation producing a circuit $T_C$ of shape
$(2s+2n+r,n+r,n+a,n+r)$ such that, writing $J=J_{n+r,n+a}$,
\[
    \|V_{T_C}J-JU_f\|\leq2\sqrt{\varepsilon},
    \qquad
    U_f\ket{x}\ket{y}=\ket{x}\ket{y\oplus f(x)}.
\]
\end{lemma}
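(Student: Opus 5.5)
The plan is the standard compute--copy--uncompute construction, with an explicit error bound obtained from the single-input success probability. Fix the registers of $T_C$: a logical register consisting of an $n$-qubit input register $X$ and an $r$-qubit target register $Y$, and a clean work register of $n+a$ qubits. The work register is split into an $n$-qubit copy register $X'$ and the $a$-qubit ancilla register of $C$. The gate sequence of $T_C$ is as follows.
\begin{enumerate}
\item Apply $n$ CNOTs copying $X$ into $X'$.
\item Run the full unitary $V_C$ (all $s$ gates, with measurements deferred) on $X'$ together with the ancillas.
\item Apply $r$ CNOTs from the designated output wires of $C$ into $Y$.
\item Apply $V_C^\dagger$, which is available since the gate set is inverse-closed.
\item Apply $n$ CNOTs uncopying $X'$.
\end{enumerate}
This gives exactly $2s+2n+r$ gates and shape $(2s+2n+r,n+r,n+a,n+r)$. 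The transformation is clearly classical polynomial time. Copying the input into $X'$ means $C$ never touches the logical register directly, so $X$ stays a fixed classical control throughout.

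Next I analyze a single basis input $\ket x\ket y$. After the copy and $V_C$, write the state as
\[
V_C\ket{x}\ket{0^a}=\sqrt{1-\varepsilon_x}\,\ket{f(x)}_{\mathrm{out}}\ket{\phi_x}+\sqrt{\varepsilon_x}\,\ket{\mathrm{bad}_x},
\]
where $\varepsilon_x\le\varepsilon$ and $\ket{\mathrm{bad}_x}$ is orthogonal to the output $f(x)$. Compare the real CNOT step $M$ to the ideal operation that XORs the fixed string $f(x)$ into $Y$. These two operations agree on the good branch, so their difference applied to the state has norm at most $2\sqrt{\varepsilon_x}$. Since $V_C^\dagger$ and the uncopy are unitary, the ideal branch uncomputes exactly to $\ket x\ket{y\oplus f(x)}\ket{0^{n+a}}$. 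Therefore $\|V_{T_C}J\ket{x,y}-JU_f\ket{x,y}\|\le 2\sqrt{\varepsilon}$. A finer estimate would give $\sqrt{2\varepsilon}$, but that is not needed.

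The main obstacle is lifting this per-basis-state bound to an operator-norm bound on arbitrary superpositions, where errors could in principle add up across exponentially many $x$. I plan to use the block structure. Both $V_{T_C}J$ and $JU_f$ act on superpositions over $x$ by keeping $X=\ket x$ as a fixed classical label, because every gate touching $X$ is a CNOT control that is undone. So both operators are block diagonal in $x$, with the blocks sent to mutually orthogonal subspaces labeled by $x$. The operator norm of the difference is then the maximum over $x$ of the block norms.

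Within a fixed block, the dependence on $y$ is only through XORing into $Y$, and this commutes with the rest of the circuit. Write the block error as $E_x\otimes$ (a shift acting on $Y$). Its norm equals $\|E_x\ket{0^r}\|$-type quantities bounded above by $2\sqrt{\varepsilon}$. This yields the claimed bound. The remaining point is to check carefully that the $Y$-dependence factors this way, that is, that the target XOR commutes with $V_C^\dagger$ acting on the other registers.
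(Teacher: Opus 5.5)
\textbf{The gap is in the step for superpositions over the target register $Y$.} Your construction, gate count, per-$x$ good/bad-branch estimate, and orthogonality argument across $x$ are the same as the paper's. The $x$-block $\Delta_x$ of $V_{T_C}J-JU_f$ is not of the form $E_x\otimes(\text{shift on }Y)$. The real XOR applies a shift to $Y$ controlled by the output wires, and those wires are entangled with the rest of the work register. What does hold is that $\Delta_x$ commutes with every Pauli shift $\sfX_Y^{y}$. It is this shift, not the output-to-$Y$ XOR, that commutes with $V_C^\dagger$, because $V_C^\dagger$ acts on the XOR's control wires.

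Commutation alone does not help: the vectors $\sfX_Y^{y}\Delta_x\ket{0^r}$ for different $y$ are not orthogonal, so $\|\Delta_x\|$ is not determined by basis inputs on $Y$. Concretely, take $r=1$ and $V_C\ket{x}\ket{0^a}=\sqrt{1-\varepsilon}\,\ket{f(x)}\ket{\phi}+\sqrt{\varepsilon}\,\ket{1\oplus f(x)}\ket{\phi'}$. Every basis input $\ket{y}$ then has error exactly $\sqrt{2\varepsilon}$, which is your ``finer estimate''. But with $Y$ in $\ket{-}$ the error is exactly $2\sqrt{\varepsilon}$, because the real XOR multiplies the bad branch by $-1$ relative to the ideal one. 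So reading the operator norm off $\|E_x\ket{0^r}\|$ would prove a false bound.

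\textbf{The fix, which is what the paper does, is not to restrict to basis $y$ at all.} For fixed $x$, just before the XOR the work register holds $\ket{\Psi_x}=V_C\ket{x}\ket{0^a}$. This is in tensor product with an arbitrary state $\ket{\chi}$ of $Y$ together with a reference. The real XOR $M$ and the ideal $\sfX_Y^{f(x)}$ agree on the range of the projector $\Pi$ onto output $f(x)$. Hence
\[
\bigl\|(M-\sfX_Y^{f(x)})(\ket{\chi}\otimes\ket{\Psi_x})\bigr\|
=\bigl\|(M-\sfX_Y^{f(x)})(\ket{\chi}\otimes(I-\Pi)\ket{\Psi_x})\bigr\|
\le 2\sqrt{\varepsilon},
\]
uniformly in $\ket{\chi}$. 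Uncomputing preserves this bound, and your orthogonality argument over $x$ then completes the proof. You could instead use shift invariance to diagonalize $\Delta_x$ in the Hadamard basis of $Y$, but bounding each block needs this same computation for non-basis states anyway.
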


\begin{proof}
Keep the logical registers $X,Y$ and initialize fresh registers $R,A$
of sizes $n,a$ to zero. Copy $X$ into $R$ using $n$ CNOT gates, apply
the full unitary $V_C$ to $R,A$, XOR its designated output into $Y$,
apply $V_C^\dagger$, and undo the copy from $X$ to $R$.

For each fixed $x$, the state after $V_C$ has a component of norm at
most $\sqrt{\varepsilon}$ on which the designated output differs from
$f(x)$. The actual XOR and the ideal XOR by $f(x)$ agree on the
remaining component and differ in operator norm by at most $2$ on
this error component. Thus their difference has norm at most
$2\sqrt{\varepsilon}$ for every state of $Y$, including one entangled
with a reference. Uncomputing preserves this bound and, for the ideal
XOR, restores $R,A$ to zero. Since $X$ is preserved throughout,
the errors for distinct $x$ lie in orthogonal subspaces, so the same
bound holds on every superposition of inputs.
\end{proof}

For $C\in\calC_{\pd}$, the lemma gives the common negligible tidiness
bound $2\sqrt{\eta_{\mathrm{cls}}}$. Obfuscating $T_C$ therefore gives
obfuscation of $U_{f_C}$: same-shape circuits computing the same
classical function have indistinguishable obfuscations of their lifts.
This preserves the classical functionality in a coherent form; it need
not preserve the original channel $\Phi_C$.

\begin{corollary}[qiO for coherent-decision circuits]
\label{cor:tidy-qio-coherent-decision}
Globally correct, robust qiO for tidy circuits with tidiness bound
$2\sqrt{\eta_{\mathrm{cls}}}$ implies globally correct, robust qiO for
$\calC_{\BQP}$. The implication holds with either classical or quantum
nonuniform security.
\end{corollary}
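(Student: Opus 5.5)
The plan is to reduce qiO for $\calC_{\BQP}$ to qiO for tidy circuits by viewing each $C\in\calC_{\BQP}$ as a tidy computation. The obfuscator for $\calC_{\BQP}$ runs the assumed tidy-circuit obfuscator $\Obf_{\mathsf{tidy}}$ on $C$ itself, or on a canonical tidy lift of $C$ if $C$ is not already tidy. It then outputs the result, possibly after fixed classical postprocessing of the circuit description.

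The first step is to fix the lift, and I would follow \Cref{lem:tidy-classical-functionality}. A circuit $C\in\calC_{\BQP}$ with $\|C-\sfX^{a_C}\|_\diamond\le\negl(\lambda)$ outputs the classical function $f(x)=x\oplus a_C$ on each basis input $x\in\{0,1\}$. The error is at most the diamond distance, which is negligible and can be taken at most $\eta_{\mathrm{cls}}$ after amplification and padding. Applying the lemma gives $T_C$ with tidiness bound $2\sqrt{\eta_{\mathrm{cls}}}$, implementing $U_f\ket{x}\ket{y}=\ket{x}\ket{y\oplus x\oplus a_C}$ on the clean subspace.

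The obfuscator then outputs $\Obf_{\mathsf{tidy}}(T_C)$, wrapped by a fixed classical transformation that recovers a circuit with the channel of $C$. Concretely, the $y$ register is initialized to $\ket0$, the $x$ register is fed the logical input, and the $y$ output is returned; this yields $\ket{x\oplus a_C}$ on basis inputs. Because this copies $x$, coherence is lost, so the result is not the channel $\sfX^{a_C}$ on superpositions.

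The cleaner fix, which I would adopt, bypasses the lemma. In this setting $C\in\calC_{\BQP}$ has the form built in \Cref{notation:cvqc} and \Cref{lem:qio-I-X-ind}: it computes, CNOTs into the input, and uncomputes. Such a $C$ is tidy: on clean ancillas its full unitary is negligibly close to $J\sfX^{a_C}$. I would verify this from the diamond bound, showing that a unitary dilation whose reduced channel is negligibly close to a unitary channel must leave the ancilla negligibly entangled, hence close to a fixed state. Without this argument, tidiness would have to be part of the class definition. So I would take $\Obf_{\calC_{\BQP}}:=\Obf_{\mathsf{tidy}}$ directly, padding shapes canonically.

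Correctness follows from global correctness of $\Obf_{\mathsf{tidy}}$, since $\Phi_C$ is unchanged. For security, take same-shape $C_0,C_1\in\calC_{\BQP}$ with negligibly close channels. Both are tidy with a common negligible bound, so robust indistinguishability of $\Obf_{\mathsf{tidy}}$ applies verbatim, with classical or quantum nonuniform distinguishers as assumed.

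The main obstacle is the tidiness step: deriving a uniform negligible tidiness bound from diamond closeness to $\sfX^{a_C}$. The ancilla could end in a fixed nonzero state, which is still tidy only up to a known correction. I would handle this with a Stinespring/purification continuity argument, which gives a bound of $O(\sqrt{\negl})$, and then absorb the resulting basis-state correction into canonical padding.
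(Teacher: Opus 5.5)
\textbf{There is a genuine gap in your final route.} Membership in $\calC_{\BQP}$ constrains only the reduced channel, $\|C-\sfX^{b_C}\|_\diamond\le\negl(\lambda)$, with the junk traced out. It says nothing about the ancilla returning to $\ket{0^a}$. For example, the identity on the input wire plus a Hadamard on one ancilla induces exactly the identity channel, so it lies in $\calC_{\BQP}$ with $b_C=0$; yet it leaves that ancilla in $\ket{+}$. Likewise, a circuit that computes the decision into a workspace, CNOTs it into the input, and never uncomputes lies in $\calC_{\BQP}$ but leaves arbitrary garbage.

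Your Stinespring continuity argument only gives $V_CJ\approx\sfX^{b_C}\otimes\ket{\phi_C}$ for some $C$-dependent junk state $\ket{\phi_C}$. In general this state is not a basis state, and no canonical, shape-determined padding undoes it. So $C$ lies outside the promise class of $\Obf_{\mathsf{tidy}}$, and neither correctness nor security follows from feeding it in directly. Restricting to the circuits of \Cref{notation:cvqc} does not help, since the corollary is stated for all of $\calC_{\BQP}$.

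\textbf{The missing idea} is to apply \Cref{lem:tidy-classical-functionality} with no variable input at all. This removes exactly the coherence problem you found in your first attempt. Run the full unitary $V_C$ on a fresh all-zero register of $a+1$ qubits, including $C$'s own input wire. Then CNOT its output into a new logical target qubit and apply $V_C^\dagger$.

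Because $C$ is only ever evaluated on $\ket0$, nothing is copied from the logical input. The lemma, applied to the constant function $b_C$ with $r=1$, gives $\|V_{T_C}J-J\sfX^{b_C}\|\le2\sqrt{\eta_{\mathrm{cls}}}$. This is a tidy circuit acting as $\sfX^{b_C}$ coherently on the target. The $V_C^\dagger$ step is precisely the ``correction'' you were looking for, and it is available only because $C$ runs on a fresh register rather than on the logical input.

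It follows that $\|\Phi_{T_C}-\Phi_C\|_\diamond\le4\sqrt{\eta_{\mathrm{cls}}}+\eta_{\mathrm{cls}}=:\kappa$, and $\shape(T_C)=(2s+1,1,a+1,1)$ depends only on $\shape(C)$. Outputting $\Obf_{\mathsf{tidy}}(1^\lambda,T_C)$ is therefore correct up to an extra $\kappa$. Same-shape $C_0,C_1$ at channel distance $\varepsilon$ have same-shape lifts at distance at most $\varepsilon+2\kappa$. Tidy-circuit security then applies, with either classical or quantum distinguishers.
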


\begin{proof}
Given $C\in\calC_{\BQP}$ of shape $(s,1,a,1)$, run its full unitary
on a fresh all-zero register of size $a+1$, CNOT its output into a
logical target qubit, and uncompute. Call the resulting circuit $T_C$;
its shape is $(2s+1,1,a+1,1)$. Evaluating $C$ on $\ket0$ outputs
$b_C$ with error at most $\eta_{\mathrm{cls}}$. Applying
\Cref{lem:tidy-classical-functionality} with no variable input gives
\[
    \|V_{T_C}J-J\sfX^{b_C}\|\leq2\sqrt{\eta_{\mathrm{cls}}},
    \qquad J=J_{1,a+1}.
\]
Consequently,
\[
    \|\Phi_{T_C}-\Phi_C\|_\diamond
    \leq4\sqrt{\eta_{\mathrm{cls}}}+\eta_{\mathrm{cls}}=:\kappa,
\]
where $\kappa$ is negligible. Return
$\Obf_{\mathsf{tidy}}(1^\lambda,T_C)$. Global correctness follows by
adding $\kappa$ to the tidy obfuscator's correctness error. For any
same-shape pair $C_0,C_1$ with channel distance at most a negligible
$\varepsilon$, the lifts have the same shape and channel distance at
most $\varepsilon+2\kappa$. Tidy-circuit qiO security therefore gives
the required indistinguishability.
\end{proof}

To apply this construction to public verification, we also record the
classical auxiliary-input guarantee implied by nonuniform security.

\begin{lemma}[Classical auxiliary input]
\label{lem:qio-classical-auxiliary-input}
Suppose qiO security holds against nonuniform distinguishers for
arbitrary, not necessarily uniformly generated circuit families.
Let $(z,C_0,C_1)$ be any polynomial-size classical joint ensemble such
that every pair $C_0,C_1$ lies in the input class, has the same shape,
and has channel distance at most a common negligible function
$\varepsilon(\lambda)$. Then, with fresh obfuscation randomness,
\[
    (1^\lambda,z,\Obf(1^\lambda,C_0))
    \approx_c
    (1^\lambda,z,\Obf(1^\lambda,C_1)).
\]
The auxiliary-input guarantee has the same classical or quantum security
as the assumed qiO.
\end{lemma}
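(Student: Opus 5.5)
The plan is to reduce the auxiliary-input statement to the plain nonuniform qiO guarantee of \Cref{def:qio} by an averaging argument over the auxiliary string. Suppose, toward a contradiction, that some nonuniform distinguisher $\calD$ (classical or quantum, matching the assumed security) has advantage $\Delta(\lambda)$ that is not negligible, so $\Delta(\lambda)\geq1/p(\lambda)$ for some polynomial $p$ and infinitely many $\lambda$. For each fixed triple $(z,c_0,c_1)$ in the support of the joint ensemble at length $\lambda$, define
\[
    \delta_\lambda(z,c_0,c_1):=
    \Pr[\calD(1^\lambda,z,\Obf(1^\lambda,c_0))=1]
    -\Pr[\calD(1^\lambda,z,\Obf(1^\lambda,c_1))=1],
\]
where the probabilities are only over fresh obfuscation and distinguisher randomness. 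Since the obfuscation randomness is independent of $(z,C_0,C_1)$, the overall advantage is the expectation of $\delta_\lambda$ over the joint distribution of the triple, so $|\mathbb E[\delta_\lambda]|=\Delta(\lambda)$.

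By averaging, for each $\lambda$ there is a fixed triple $(z_\lambda,c_{0,\lambda},c_{1,\lambda})$ in the support with $|\delta_\lambda(z_\lambda,c_{0,\lambda},c_{1,\lambda})|\geq\Delta(\lambda)$; flipping the output bit if needed, we may also assume the sign is positive. Choosing such a triple for every $\lambda$ defines two deterministic, not necessarily uniformly generated circuit families $\{c_{0,\lambda}\}$ and $\{c_{1,\lambda}\}$. By the hypothesis on the ensemble, each pair lies in the input class, has equal shape, and has channel distance at most the common negligible $\varepsilon(\lambda)$. The uniformity of this bound over the support matters: it ensures the selected families are admissible for \Cref{def:qio}, even though they are chosen pointwise.

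Next, build the nonuniform distinguisher $\calD'$ that has $z_\lambda$ hardwired as advice. Since the ensemble is polynomial-size, $z_\lambda$ has polynomial length. On input $(1^\lambda,\widehat C)$, $\calD'$ runs $\calD(1^\lambda,z_\lambda,\widehat C)$. It is polynomial time, and if $\calD$ is quantum, $\calD'$ is quantum with classical advice. Its advantage on the admissible pair is at least $\Delta(\lambda)$ for infinitely many $\lambda$, contradicting nonuniform qiO security. This reasoning also matches how the paper already selects maximizing instances in the proof of \Cref{lem:qio-I-X-ind} and in \eqref{eq:tidy-coupling-average}.

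The main subtlety is the selection step. The negligible bound must be a single function independent of the sampled pair, and the advice must stay polynomial size. Both are guaranteed by the hypotheses, so I expect no real obstacle. One more point needs care: the ensemble need not be efficiently samplable, yet this does not matter, because the argument fixes a single triple rather than sampling one. Quantum side information is excluded by hypothesis: $z$ is classical. That restriction is exactly what lets it be hardwired as classical advice.
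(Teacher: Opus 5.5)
Your proposal is correct and is essentially the paper's proof: you bound the ensemble advantage by the largest conditional advantage over a support triple, fix a maximizing triple $(z_\lambda,c_{0,\lambda},c_{1,\lambda})$ for each $\lambda$, and hardwire $z_\lambda$ as classical advice. This contradicts nonuniform qiO security for the selected same-shape families, whose channel distance is bounded by the common negligible $\varepsilon$; your sign-flip step is harmless but unnecessary, since the qiO definition already uses the absolute difference.
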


\begin{proof}
Fix a distinguisher. Its advantage on the ensemble is at most the
maximum absolute conditional advantage over triples $(z,C_0,C_1)$
in the support. If this maximum were nonnegligible, choose a maximizing
triple for each $\lambda$ and hardwire its $z$ into the distinguisher.
The selected circuits form polynomial-size, same-shape families with
negligible channel distance, contradicting nonuniform qiO security.
\end{proof}

Combining \Cref{cor:robust-ancilla-free-qio},
\Cref{thm:ancilla-free-to-tidy-qio}, and \Cref{cor:tidy-qio-coherent-decision}
shows that \Cref{assumption:qSCP,assumption:qRIO,assumption:stability}
imply quantum-secure qiO for $\calC_{\BQP}$, with global correctness.
By \Cref{lem:qio-classical-auxiliary-input}, this qiO also provides security
in the presence of correlated classical auxiliary input. Thus these assumptions supply the qiO
needed for private verification (\Cref{thm:private-bqp-verification}),
for publicly verifiable proofs of quantumness when
$\BQP\not\subseteq\mathrm{i.o.}\BPP$ (\Cref{thm:public-poq}), and for
public verification when one-way functions secure against nonuniform
quantum adversaries also exist
(\Cref{thm:public-bqp-verification}).

\section{Acknowledgements and AI disclosure}
We thank Tomoyuki Morimae for pointing out the statement of \Cref{lem:qio-implies-classical-owf}. We thank Petar Jurcevic for organizing the workshop ``Quantum advantage with efficient classical verification'' and the workshop participants for helpful discussions. Aparna Gupte thanks Vinod Vaikuntanathan and Seyoon Ragavan for useful discussions.

The authors proposed all the new definitions, constructions, and proof ideas. We used LLMs, including Claude Fable 5, Opus 5.5, GPT 5.6 Sol and GPT 6 Astra to assist with brainstorming, editing the text, and verifying the results. The authors take responsibility for all results in the paper.

\bibliographystyle{alpha}
\bibliography{refs.bib}

\end{document}